\definecolor{blue}{rgb}{0,0,0.9}
\definecolor{red}{rgb}{0.9,0,0}
\definecolor{green}{rgb}{0,0.9,0}
\newcommand{\cX}{{\cal X}}
\newcommand{\cO}{{\cal O}}
\newcommand{\cY}{{\cal Y}}
\newcommand{\cT}{{\cal T}}
\newcommand{\cU}{{\cal U}}
\newcommand{\norm}[1]{\left\lVert#1\right\rVert}
\newcommand{\argmin}{{\textup{argmin}}}
\newcommand{\dist}{{\textup{dist}}}
\def\widebar{\accentset{{\cc@style\underline{\mskip11mu}}}}
\begin{document}

\title{MARS: A second-order reduction algorithm for \\ high-dimensional sparse precision matrices estimation\thanks{ {Defeng Sun is supported in part by Hong Kong Research Grant Council under grant number 15303720.} }}

\author{\name Qian Li \email qianxa.li@connect.polyu.hk \\
       \addr Department of Applied Mathematics\\
       The Hong Kong Polytechnic University\\
       Hung Hom, Kowloon, Hong Kong
       \AND
       \name Binyan Jiang \email by.jiang@polyu.edu.hk \\
       \addr Department of Applied Mathematics\\
		The Hong Kong Polytechnic University\\
       Hung Hom, Kowloon, Hong Kong
		\AND
		\name Defeng Sun \email defeng.sun@polyu.edu.hk \\
		\addr Department of Applied Mathematics\\
		The Hong Kong Polytechnic University\\
       Hung Hom, Kowloon, Hong Kong
}

%\editor{}

\maketitle

\begin{abstract}
	Estimation of the precision matrix (or inverse covariance matrix) is of great importance in statistical data analysis { and machine learning}. However, as the number of parameters scales quadratically with the dimension $p$, computation becomes very challenging when $p$ is large.
	{In this paper, we propose an adaptive sieving reduction algorithm to generate a solution path for the estimation of precision matrices under the $\ell_1$ penalized D-trace loss, with each subproblem being solved by a second-order algorithm.}
	In each iteration of our algorithm, we are able to greatly reduce the number of variables in the {problem} based on the Karush-Kuhn-Tucker (KKT) conditions and the sparse structure of the estimated precision matrix in the previous iteration.
	{As a result, our algorithm is capable of handling datasets with very high dimensions that may go beyond  the capacity of the existing methods.}
	Moreover, for the sub-problem in each iteration, other than solving the primal problem directly, we develop a semismooth Newton augmented Lagrangian algorithm with global linear convergence rate on the dual problem to improve the efficiency.
	Theoretical properties of our proposed algorithm have been established.
	In particular, we show that the convergence rate of our algorithm is asymptotically superlinear.
	The high efficiency and promising performance of our algorithm are illustrated via extensive simulation studies and real data applications, with comparison to several state-of-the-art solvers.
\end{abstract}

\begin{keywords}
  Adaptive sieving reduction strategy, Precision matrix, Semismooth Newton method, Sparsity, Solution path
\end{keywords}

%%%%%%%%%%%%%%%%%%%%%%%%%%%%%%%%%%
\section{Introduction}

The estimation of high dimensional sparse precision matrices has been a central topic in statistical learning {and machine learning \citep{Li2019respre, Du2020},} with a wide range of applications such as genomics \citep{Wille2004, LiH2006}, image analysis \citep{LiSZ2009}, among others. Owing to the fast development of data engineering and technology, modern datasets are oftentimes having much higher dimensions than before, and the estimation of the precision matrices becomes more challenging as the number of variables scales quadratically {with respect to} the dimension $p$.   For example, in the breast cancer data set studied in our numerical experiments, $p$ is equal to $22,283$, and the number of variables is nearly $250$ million.
Many existing algorithms or solvers could easily fail to produce a meaningful estimator in this case.
Highly efficient algorithms with sound theoretical guarantees are thus in great need of meeting the computation requirement of the time.

{Consider the $p$-dimensional Gaussian distributed random variable $X=(X_1,\cdots,X_p) \sim \mathcal{N}(\mu,\Sigma)$, where $\mu \in \mathbb{R}^p$ and $\Sigma \in \mathbb{R}^{p \times p}$ are the mean vector and the covariance matrix, respectively. Assume that the covariance matrix $\Sigma$ is nonsingular.}
The precision matrix $\Theta$ (also known as the concentration matrix) is defined as the inverse of $\Sigma$, i.e., $\Theta = \Sigma ^{-1}$.
It is well known that the conditional independence {in the Gaussian distribution} is directly reflected in the zero components of the precision matrix \citep[Proposition 5.2]{Lauritzen1996}. Specifically, for any $1\leq i\not=j \leq p$, $\Theta_{ij}=0$ if and only if $X_i$ is conditionally independent of $X_j$ given all other random variables $X_k,$ $ k\not=i,j,\, 1\leq k\leq p$.
{In many applications, the conditional independence structure of the variables is usually represented as a graph $G=(V,E)$, where $V$ denotes the $p$ nodes and the  edge set $E\subseteq V\times V$  denotes the set of conditional dependent pairs of the nodes. Thus,  the graph $G$ can be well recovered  if the zeros in the precision matrix can be consistently identified.}

So far, many methods have been proposed to estimate the sparse precision matrix {in the high-dimensional setting ($p \gg n$, where $n$ corresponds to the sample size). }
\cite{Meinshausen2006} estimated {the conditional independence restrictions separately for each node in $G$} through a sequence of lasso penalized least squares regression models.
\cite{Yuan2010} studied the above regression models via a Dantzig selector.
Later, \cite{Cai2011} proposed a constrained $\ell_1$ minimization approach and established the convergence rates {in statistical analysis} under different norms. However, among the methods mentioned above, none of them are truly treating the precision matrix as a matrix form.
There is another well-known estimator called the lasso penalized Gaussian likelihood estimator \citep{Yuan2007, Banerjee2008, Friedman2008}, also known as graphical lasso or glasso. Given the sample covariance matrix $\widehat{\Sigma} \in \mathbb{S}^{p}$ and a regularization parameter $\lambda > 0$, the glasso estimator is obtained by minimizing the $l_1$ penalized log-likelihood function:
\begin{equation}\label{gl0}
	\mathop{\min}\limits_{\Omega\in \mathbb{S}^{p}_+}\left\{{\rm tr}(\Omega \widehat{\Sigma})-{\rm log det} (\Omega )+\lambda \norm{\Omega}_1\right\},
\end{equation}
where $\mathbb{S}^{p}_+$ is the space of $p \times p$ real symmetric positive definite matrices, ${\rm tr}(\cdot)$ and $\norm{\cdot}_{1}$ are the trace and $\ell_1$-norm, respectively. Researchers have designed different optimization algorithms to solve this problem. Some first-order methods have been applied to solve \eqref{gl0}, such as the block coordinate descent method \citep{Banerjee2008, Friedman2008} and the alternating linearization method \citep{Scheinberg2010}. To solve the graphical lasso problem more efficiently, some second-order methods such as the quadratic approximation method \citep{Hsieh2014} and the Newton-like methods \citep{Oztoprak2012} were also developed.
However, these two methods may not be the best choice. For the quadratic approximation method, the computational complexity could be up to $O(p^3)$ {per-iteration}.
%unless a block diagonal structure of the estimated precision matrix is detected.
As for the Newton-like methods, the algorithms are more or less heuristic and {the} related convergence properties  are yet to be explored.
We note that the graphical lasso {model} brings {challenges} to the calculation because its objective function contains a log determinant term. Although \cite{Witten2011} provides a strategy to further improve the efficiency by identifying the block diagonal structure, which has been implemented in the glasso package,  such a strategy can easily fail in practice, especially when the regularization parameter is small. The total computation time on the breast cancer dataset shown in Table \ref{tab:bc} can further {support} such {an argument}.

Recently, {an} $\ell_1$-penalized D-trace loss estimator was proposed in \citep{Zhang2014, Liu2015}.
This new estimator is obtained by solving a convex composite optimization problem, which involves a quadratic loss function with an $\ell_1$-regularized penalty:
\begin{equation}\label{eq:l2}
	\mathop{\min}\limits_{\Omega \in \mathbb{S}^{p}}\left\{ \frac{1}{2} {\rm tr}(\Omega \widehat{\Sigma} \Omega^T) - {\rm tr}(\Omega) + \lambda\left\|\Omega\right\|_{1,{\rm off}} \right\},
\end{equation}
where  $\mathbb{S}^{p}$ is the space of $p \times p$ real symmetric matrices and
$\norm{\cdot}_{1,{\rm off}}$ is the off-diagonal $\ell_1$-norm, i.e., $\norm{\Omega}_{1,{\rm off}}=\sum_{i\not= j}|\Omega_{i,j}|$.
{{In this paper}, we will {show that} the solutions {obtained by our proposed algorithm to the optimization problem (\ref{eq:l2}) are asymptotically positive definite}. {This key property makes it valid for us to replace the constraint $\Omega \in \mathbb{S}^p_+$ with a much simpler constraint $\Omega \in \mathbb{S}^p$}}.
\cite{Zhang2014} derived the convergence rates in {statistical analysis} of this new estimator and showed that it could be comparable with the graphical lasso model. Clearly, the loss function in problem \eqref{eq:l2} is simpler than the {one in the} graphical lasso {model}, which could bring great {efficiency} to calculation. In particular, when dealing with the big data where the dimension is very large, computational efficiency owing to the simple form of the loss function would be a favorable feature for practical applications. However, {the} existing {popular} methods for solving (\ref{eq:l2}) are first-order methods, such as the coordinate descent method \citep{Liu2015} and the alternating direction method of multipliers (ADMM) \citep{Zhang2014, Wang2020}. {These first-order methods encounter    challenges  in  solving  (\ref{eq:l2}) even up to a moderate accuracy}.

{In order to design an efficient algorithm, in this paper, we first propose an adaptive sieving reduction strategy to solve the model \eqref{eq:l2} by sequentially solving some reduced problems with much smaller dimensions than the original problem. In addition, we point out that this strategy is powerful for generating a solution path, as a better initial point usually leads to a positive impact. The reduced problem is obtained by exploiting the sparsity of the optimal solution. More precisely, based on the KKT conditions of the original problem, we can construct a non-zero index set with any given initial solution, and then make all components outside this index set zero. This leads to the form of the reduced problem as
\begin{equation}\label{eq:lad}
	\mathop{\min}\limits_{\Omega \in \mathbb{S}^{p}}\left\{ \frac{1}{2} {\rm tr}(\Omega \widehat{\Sigma} \Omega^T) - {\rm tr}(\Omega) + \lambda\left\|\Omega\right\|_{1,{\rm off}} - \langle \Delta, \Omega \rangle \mid \Omega_{\bar I} = 0 \right\},
\end{equation}
where $\Delta \in \mathbb{S}^p$ is an error matrix with {$\norm{\Delta} \le  \epsilon$} for some small $\epsilon > 0$, $\norm{\cdot}$ is the Frobenius norm and $\bar I$ is the non-zero components index set. Note that the existence of $\Delta$ means that the minimization problem is solved inexactly, and $\Delta$ does not need to be given in prior but is automatically obtained when the problem is solved inexactly.}
As can be seen from \eqref{eq:lad}, the dimension of the reduced problem is exactly equal to $(|I|+p)/2$. In general, the dimension of the reduced problem %{(in its dual form)}
may not be larger than $p + n(n-1)/2$ to ensure the {statistical} validity of the estimation in some cases {(recall that $n$ is the sample size)}. Therefore, our algorithm is not only significantly efficient, but also can solve the problem of insufficient storage space for massive data to a certain extent. Then, we design an efficient second-order {based} algorithm, or more precisely, a semismooth Newton {based} augmented Lagrangian algorithm, to solve {the dual problem of} (\ref{eq:lad}). {In this paper, we {focus on} estimating the precision matrices in the high dimension setting (i.e., $p \gg n$). Thus, the scale of the dual problem of (\ref{eq:l3}) is much smaller}.
More importantly, due to {the facts that the} piecewise linear-quadratic structure of the primal problem provides an asymptotically superlinear convergence rate of the augmented Lagrange method \citep{Li2018}, and  the dual problem is strongly convex, our algorithm only needs a few iterations to obtain a desirable solution.
{Besides, the per-iteration computational and memory complexities of our algorithm are comparable to or even better than the ones of first-order algorithms, such as ADMM.}
In addition, we provide a technique to determine the maximum $\lambda$. This technique limits the choice of $\lambda$ to avoid unnecessary waste of time in generating a solution path. In subsequent numerical experiments, we shall see that our algorithm significantly outperforms several state-of-the-art solvers and is competent to handle huge-scale problems.

We highlight the main contributions of this paper as follows:
\begin{itemize}
	\item[1.] We develop a dual approach for  {the precision matrix estimation}. By equivalently rewriting the primal problem under the high-dimensional setting where $p$ is much large than $n$, we obtain a dual problem where the dimension of a dual variable is $p \times n$ instead of $p\times p$. Such an approach can fundamentally improve the efficiency when $n$ is much smaller than $p$.
	
	\item[2.] This is the first attempt to implement the adaptive sieving strategy and the semismooth Newton augmented Lagrangian algorithm for variables with matrix forms.
	{The adaptive sieving strategy enables us to solve the original problem by solving some reduced subproblems, which can be remarkably smaller in dimension than the original problem. Therefore, some time-consuming operations in the main loop can be avoided, for example, the multiplication of two $p \times n$ matrices is changed to the multiplication of two vectors of a much lower dimension.}
	More importantly, although we are adopting a second-order method for solving the subproblems, the {per-iteration} computational complexity of our proposed algorithm is comparable to {the} first-order algorithms, {such as ADMM}. The promising numerical performance of our algorithm is also theoretically justified by the global linear convergence rate and asymptotically superlinear convergence rate we established.
	
	\item[3.] We have developed
	{a R package} for applications to estimate the sparse precision matrix effectively.
	Compared {to} other existing solvers/packages, our algorithm is much more efficient and is able to handle datasets with much higher dimensions.
	For instance, on a publicly available breast cancer data set, our algorithm can be up to more than 20 times faster than the popular glasso package \citep{Friedman2008, Witten2011} for estimating a precision matrix with five-fold cross-validation included.
\end{itemize}

The remaining subsequent arrangements are as follows. {For better discussions in later sections, we will first of all  introduce some notation and present some preliminary results of the piecewise linear quadratic function in Section \ref{sec:pre}.}
In Section \ref{sec:as}, we will develop an adaptive sieving reduction strategy for generating solution paths. In Section \ref{sec:ialmssn}, we derive an inexact augmented Lagrangian method (ALM) to solve the dual problem of the inner problem in Section \ref{sec:as}. Then for the subproblem in the inexact ALM, we design a semismooth Newton algorithm to obtain an expected solution. {In Section \ref{sec:dis}, we will summarize the connections between the proposed algorithms in Sections \ref{sec:as} and \ref{sec:ialmssn}, as well as discuss the computational and memory complexities of the proposed algorithms.}
In Section \ref{sec:num}, after introducing some algorithms, by comparing with the introduced algorithms and several state-of-the-art solvers, we will demonstrate the promising performance of our algorithm through some numerical experiments and the analysis of two real datasets. We conclude our paper in Section \ref{sec:col}.

\medskip
{\noindent{\bf Notation}} Throughout this paper, $\cX$ and $\cY$ represent two finite-dimensional real Euclidean spaces.  We use $\langle \cdot,\cdot \rangle$ to denote the inner product and its induced Frobenius norm by  $\norm{\cdot}_F$. Specifically, let $X = (X_{ij})_{1\leq i \leq p,\, 1 \leq j \leq n}$ and $Y= (Y_{ij})_{1\leq i \leq p,\, 1 \leq j \leq n}$ be two real matrices,  $\langle X,Y\rangle = {\rm tr} \left (X^TY \right) = \sum_{i,j} X_{ij} Y_{ij}$ and $\norm{X}_F=\left(\sum_{i,j}X_{ij}^2 \right)^{1/2}$, where $X^T$ denotes the transpose of $X$. The $\ell_1$ norm is denoted by $\norm{\cdot}_1$, i.e., $\norm{X}_1 = \sum_{ij} |X_{ij}|$, and the spectral norm is denoted by $\norm{\cdot}_{(2)}$. The Jacobian of $F:\cX \to \cY $ at $X\in D_F$ is denoted as $F'(X)$, where $D_F=\left\{X \mid F(\cdot) \text{ is differentiable at } X  \right\}$. We also use ``$\circ$'' to denote the Hadamard product, i.e., $(X\circ Y)_{ij}=X_{ij}Y_{ij}$ and $\sup\,\{\cdot\}$ to denote the supremum. {The cardinal number of a real vector or matrix $V$ is denoted by $|V|$, but for a one-component variable $v$, $|v|$ denotes its absolute value.}
The point-to-set distance is defined by ${\rm dist}(Y,C):={\rm inf}_{Y'\in C}\norm{Y-Y'},\, \forall\,Y\in\cY\ {\rm and}\ \forall\, C\subset\cY$, while when $C$ is empty, it is $+\infty$ by convention.

% #####################################
\section{Preliminaries}\label{sec:pre}

{In this section, we will introduce some notation  for later use. Followed by some technical results on the convex piecewise linear quadratic (PLQ) function and its subdifferential, which will be the key to establishing the global linear convergence rate of the proposed inexact augmented Lagrangian method later.}

Let $f:\cX\to(-\infty,+\infty]$ be a closed and proper convex function.  For any given $X \in \cX$, let $\textup{Prox}_f(X)$ be the unique optimal solution of the Moreau-Yosida regularization of $f$ at $X\in \cX$:
\begin{eqnarray}\nonumber
	\mathcal{H}_f(X):=\mathop{\min}_{Y\in\cX}\left\{f(Y)+\frac{1}{2}\norm{Y-X}^2_F\right\}.
\end{eqnarray}
The mapping $\textup{Prox}_f(\cdot)$
is called the proximal point mapping of  $f$.
In addition, $\textup{Prox}_f(\cdot)$ is globally Lipschitz continuous with modulus 1 \citep{Lemarechal1997}.
For later use, we present some useful properties  of the Moreau-Yosida regularization here. {From \citep[Theorem 2.26]{rockafellar2009variational}, we know that} $\mathcal{H}_f(\cdot)$ is continuously differentiable, and furthermore, the gradient of $\mathcal{H}_f(\cdot)$ at $X\in\cX$ is known as  {in the form of} $\nabla\mathcal{H}_f(X)=X-\textup{Prox}_f(X)$. Another important and useful result is the Moreau decomposition {\citep{moreau1962fonctions}}, which is, any $X\in \cX$ has the decomposition $X=\textup{Prox}_f(X)+\textup{Prox}_{f^*}(X)$, where $f^*$ is the Fenchel conjugate of $f$ and is defined by
\[
f^*(Y)=\mathop{\sup}\left\{\langle Y,X\rangle-f(X)\mid X\in\cX \right\}, \ Y\in\cX.
\]
It can be shown that the pointwise supremum function of a collection of convex (closed) functions is convex (closed). Thus, $f^*$ is always convex and closed.

{Next, we begin our discussion with the definition of the PLQ function. A continuous function $f:D\to \mathbb{R}$ defined on a set $D \subseteq \cX$ is PLQ \citep[Section 10.E]{rockafellar2009variational}, if $D$ can be represented as the union of finitely many polyhedral sets $\{D_i\}_{i=1}^{\bar m}$, relative to each of which $f(x) = q_i(x)$, $x \in D_i$ with $q_i$ being a quadratic function. A useful conclusion on the more generally piecewise quadratic function is provided in \citep[Proposition 2.2.4]{Sun1986}. It indicates that a closed proper convex function $f$ is PLQ on $D$, if and only if the graph of $F_f: D \rightrightarrows \cY$ is polyhedral, where $F_f$ denotes the subdifferential of $f$. Moreover, following the work in \citep{Robinson1981} about polyhedral multifunctions, we discuss some continuous properties of $F_f$ in the remaining of this section.
}

{The following definition is given in \citep*{Robinson1981}. This can be viewed as an extension of the Lipschitz condition of the real-valued function to multifunctions.}

\begin{definition}\label{def:errbd}
	Let $F: \cY\rightrightarrows\cX$ be a multifunction. If there exists $\kappa \geq 0$ such that for some neighborhood $N(\bar y)$ of $\bar y$ and for all $y \in N(\bar y)$, \[F(y) \subset F(\bar y) + \kappa \norm{y - \bar y}B_x,\] with $B_x = \{x \mid \norm{x} \leq 1\}$,
	then $F$ is said to be locally upper Lipschitzian at the point $\bar y$ with modulus $\kappa$.
\end{definition}

{Since the graph of $F_f: D \subseteq \cX \to \cY$ is polyhedral, it then follows from Proposition 1 of \citep*{Robinson1981} that $F_f^{-1}$ is locally upper Lipschitzian at each $\bar y \in F_f(D)$ with the same modulus $\kappa \geq 0$. An interesting consequence of this result is given as follows.}

\begin{proposition}\label{lm:eb}
	{Suppose that} $F_f^{-1}(0)$ is nonempty. There is $\kappa \geq  0$ and some neighborhood $N_0$ of the origin such that, for all $x \in D \subseteq \cX$,
	\begin{equation}\label{eq:geb}
		\dist(x, F_f^{-1}(0)) \leq \kappa\, \dist (0, F_f(x)\cap N_0).
	\end{equation}
\end{proposition}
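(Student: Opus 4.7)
\smallskip

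\noindent\textbf{Proof proposal.} The plan is to translate the local upper Lipschitzian property of $F_f^{-1}$ at the origin, which follows from Robinson's polyhedral multifunction theorem, into the desired error bound. The excerpt already records two facts that together do almost all of the work: (i) since $f$ is closed, proper, convex and PLQ on $D$, the graph of $F_f$ is polyhedral, and (ii) by Proposition~1 of \citep{Robinson1981}, the inverse $F_f^{-1}$ is then locally upper Lipschitzian at every point of its range with a common (uniform) modulus $\kappa\ge 0$. Since $F_f^{-1}(0)$ is nonempty, the origin lies in the range of $F_f$, so Definition~\ref{def:errbd} applies at $\bar y = 0$.

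First I would invoke Definition~\ref{def:errbd} at $\bar y=0$ to produce a neighborhood $N_0$ of the origin and a constant $\kappa\ge 0$ such that
\[
F_f^{-1}(y)\ \subseteq\ F_f^{-1}(0)+\kappa\,\norm{y}\,B_x\qquad \forall\,y\in N_0.
\]
Next, fix an arbitrary $x\in D$. I would split into two cases. If $F_f(x)\cap N_0=\emptyset$, then by the convention stated at the end of the notation section, $\dist(0,F_f(x)\cap N_0)=+\infty$, and the inequality \eqref{eq:geb} is vacuous. Otherwise, for any $y\in F_f(x)\cap N_0$ we have $x\in F_f^{-1}(y)$, so the inclusion above yields a point $\bar x\in F_f^{-1}(0)$ with $\norm{x-\bar x}\le \kappa\norm{y}$. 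This directly gives $\dist(x,F_f^{-1}(0))\le \kappa\norm{y}$, and taking the infimum over $y\in F_f(x)\cap N_0$ produces \eqref{eq:geb}.

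The only real subtlety, which I would treat as the main obstacle, is justifying that a single $\kappa$ and a single neighborhood $N_0$ can be used; that is, that the upper Lipschitz modulus in Robinson's sense depends only on the polyhedrality of the graph and not on a base point. This is precisely the uniform conclusion in \citep[Proposition~1]{Robinson1981} for polyhedral multifunctions, and I would cite it explicitly rather than re-deriving it. Everything else reduces to the definitional unpacking described above, together with the empty-intersection convention for $\dist$ so that \eqref{eq:geb} is interpreted correctly in the degenerate case.
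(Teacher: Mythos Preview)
Your proof is correct and follows essentially the same route as the paper's: invoke the local upper Lipschitzian property of $F_f^{-1}$ at $0$ from Robinson's result, dispose of the empty-intersection case by the $+\infty$ convention, and otherwise use the inclusion $F_f^{-1}(y)\subseteq F_f^{-1}(0)+\kappa\|y\|B_x$ for $y\in N_0$ together with $x\in F_f^{-1}(y)$. Two minor remarks: your ``main obstacle'' is not one---only the upper Lipschitz property at the single point $\bar y=0$ is needed, and $\kappa,N_0$ come from that alone, so no uniformity over base points is required; and your infimum argument is in fact slightly cleaner than the paper's, which instead invokes closedness and convexity of $F_f(x)$ to select a $y$ attaining $\|y\|=\dist(0,F_f(x)\cap N_0)$.
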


\begin{proof}
	 The nonemptiness of $F_f^{-1}(0)$ implies that, for any $x \in D$, if $F_f(x) \cap N_0 = \emptyset$, the inequality (\ref{eq:geb}) holds automatically. We then consider the case that $F_f(x) \cap N_0 \neq \emptyset$. By the fact that $F_f^{-1}$ is locally upper Lipschitzian at $0$ with modulus $\kappa$, we know that there are some neighborhood $N_0$ of the origin such that, for any $\tilde y \in N_0$,
	 \[F_f^{-1}(\tilde y) \subset F^{-1}_f(0) + \kappa \norm{\tilde y}B_x.\]
	 Since $F_f(x)$ is convex and closed for any $x \in D$ \citep[Theorem 19.1]{Rockafellar1970}, there exists $y \in F_f(x) \cap N_0$ with
	 \[
		\norm{y} = {\rm dist}(0, F_f(x) \cap N_0).
	 \]
	 Note that $x \in F^{-1}_f(y)$ and $y \in N_0$. Therefore,  \[\dist(x, F_f^{-1}(0)) \leq \kappa\, \dist (0, F_f(x)\cap N_0) .\] This completes the proof.
\end{proof}

\begin{remark}
	From  {Lemma} \ref{lm:eb}, we know that there exists $\epsilon > 0$ such that for any $x \in D$ with ${\rm dist}(0, F_f(x)) < \epsilon$ we have $\dist(x, F_f^{-1}(0)) \leq \kappa\, \dist (0, F_f(x))$, which is consistent with  the corollary introduced  by \cite{Robinson1981}.
\end{remark}

{Following Lemma \ref{lm:eb}, the next lemma shows that a more general inequality holds for any point $x$ arbitrarily chosen on the effective domain of $F_f$.}

\begin{lemma}\label{lm:gc}
	{Suppose that} $F_f^{-1}(0)$ is nonempty.
	For any $r > 0$, there exists $\kappa \geq  0$ such that
	\begin{equation}\label{eq:glr}
		\dist(x, F_f^{-1}(0)) \leq \kappa\, \dist (0, F_f(x)), \quad \forall \, x \in D \textup{ satisfying }\dist (x, F_f^{-1}(0)) \leq r .
	\end{equation}
\end{lemma}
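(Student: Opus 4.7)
The plan is to deduce this lemma from Proposition \ref{lm:eb} by a simple case split on the size of the residual $\dist(0, F_f(x))$. The local error bound furnished by the Remark following Proposition \ref{lm:eb} already yields the desired inequality whenever that residual is small, so only the complementary regime needs attention, and there the hypothesis $\dist(x, F_f^{-1}(0)) \leq r$ will supply an ad hoc uniform bound.

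First I would invoke the Remark after Proposition \ref{lm:eb} to fix constants $\epsilon > 0$ and $\kappa_0 \geq 0$ such that
\[
\dist(x, F_f^{-1}(0)) \leq \kappa_0\, \dist(0, F_f(x)) \qquad \mbox{whenever } x \in D \mbox{ satisfies } \dist(0, F_f(x)) < \epsilon.
\]
On this regime, (\ref{eq:glr}) holds with constant $\kappa_0$. I would then consider any $x \in D$ with $\dist(x, F_f^{-1}(0)) \leq r$ together with $\dist(0, F_f(x)) \geq \epsilon$ (the degenerate case $F_f(x) = \emptyset$ is immediate, since the $+\infty$ convention makes (\ref{eq:glr}) trivial), and bound
\[
\dist(x, F_f^{-1}(0)) \;\leq\; r \;=\; \frac{r}{\epsilon}\cdot \epsilon \;\leq\; \frac{r}{\epsilon}\, \dist(0, F_f(x)).
\]
Setting $\kappa := \max\{\kappa_0,\, r/\epsilon\}$ would combine both regimes and close the argument.

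I do not expect a genuine obstacle here: the proof is a routine globalization of the local upper-Lipschitz estimate encoded in Proposition \ref{lm:eb}. The one conceptual point worth highlighting is that the final modulus $\kappa$ must depend on $r$; without an a priori cap on $\dist(x, F_f^{-1}(0))$ there is nothing to control the left-hand side in the regime where Proposition \ref{lm:eb} is silent, and it is precisely the standing assumption $\dist(x, F_f^{-1}(0)) \leq r$ that renders the case split effective.
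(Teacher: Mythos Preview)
Your proposal is correct and follows essentially the same approach as the paper. The paper's proof also performs a case split: it uses Proposition~\ref{lm:eb} directly and splits on whether $F_f(x)\cap N_0$ is empty or not, which is equivalent to your split on whether $\dist(0,F_f(x))$ lies below or above a threshold $\epsilon$; in both arguments the large-residual case is handled by the trivial bound $\dist(x,F_f^{-1}(0))\le r \le (r/\epsilon)\,\dist(0,F_f(x))$, and the final modulus is taken as the maximum of the two constants.
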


\begin{proof}
	From Lemma \ref{lm:eb}, we know that there exists $\kappa_1 \geq 0$ and some neighborhood $N_0$ of the origin such that for all $x \in D$, the inequality (\ref{eq:geb}) holds. Then, for any $x\in D$ satisfying $\dist (x, F^{-1}_f(0)) \leq r$, if $F_f(x)\cap N_0 \neq \emptyset$, we readily have \[
	 \dist(x, F_f^{-1}(0)) \leq \kappa_1\, \dist (0, F_f(x)\cap N_0) = \kappa_1\, \dist (0, F_f(x)),
	\]
	otherwise, there exists $\bar \delta > 0$ satisfying $\dist (0, F_f(x)) \geq \bar \delta$, such that
	\[
	\dist(x, F_f) \leq \kappa_2\, \dist (0, F_f(x)),
	\]
	where $\kappa_2 \geq r / \bar \delta$. Then, let $\kappa = \max \{\kappa_1, \kappa_2\}$. This completes the proof.
\end{proof}

% ##########################
\section{An adaptive sieving reduction strategy}\label{sec:as}

In this section, based on the work of \citep{Lin2020}, we will develop an adaptive sieving reduction strategy to generate solution paths by solving the problem (\ref{eq:l2}). The main idea of this strategy is to solve problem (\ref{eq:l2}) by solving some reduced problems with remarkably smaller dimensions compared to the original problem (\ref{eq:l2}).
As a result, under sparse and high dimensional settings, this strategy can greatly improve the algorithm efficiency, while also saving a lot of storage space.
Since our algorithm is designed for {\underline M}atrix estimation via  {an} {\underline A}daptive sieving {\underline R}eduction strategy and a {\underline S}emismooth Newton augmented Lagrangian algorithm (in Section \ref{sec:ialmssn}), we call our algorithm {\bf MARS}.

As we mentioned in the introduction, we will develop a dual approach to solve problem \eqref{eq:l2}. To facilitate the designing of the dual approach, we write problem \eqref{eq:l2} equivalently as
\begin{equation}\label{eq:l3}
	\mathop{\min}\limits_{\Omega \in \mathbb{S}^{p}}\left\{ \frac{1}{2} \norm{\Omega A}^2_F- \langle \Omega, I_p \rangle + \lambda\left\|\Omega\right\|_{1,{\rm off}} \right\},
\end{equation}
where $A$ is a  real matrix with rank $n$ such that $AA^T = \widehat{\Sigma}$.
{ Note that instead of applying the singular value decomposition (SVD) on $\widehat{\Sigma}$, the matrix $A$ can be efficiently obtained by applying {a} thin SVD on the $p\times n$ dimensional centered data matrix. The thin SVD requires significantly less space and time than the full SVD, especially in the high-dimensional setting.} Without loss of generality, we assume that $A$ is a $p \times n$   matrix with rank $n$. For later use, we denote $\theta(\Omega) := \norm{\Omega}_{1, \textup{off}},  {\forall \, \Omega} \in \mathbb{S}^p$.
Moreover, we further denote the optimal solution set of (\ref{eq:l3}) by $\Theta_\lambda$,
and the associated proximal residual mapping by
\[ R_\lambda (\Omega) := h(\Omega) + \textup{Prox}_{\delta_{B_\lambda}}(\Omega - h (\Omega)), \quad \forall\, \Omega \in \mathbb{S}^p, \]
where $h(\Omega): = \frac{1}{2} (\Omega \widehat{\Sigma} + \widehat{\Sigma} \Omega) -I_p$ with $I_p$ being the $p$ dimensional identity matrix, and $\delta_{B_\lambda}$ is the indicator function with $B_\lambda = \{Z \in \mathbb{S}^p \mid Z_{ii} = 0, \, |Z_{ij}| \leq \lambda, \,  i, j = 1,\cdots,p, \,  i\neq j \}$, i.e., $\delta_{B_\lambda}(Z) = 0$ for any $Z \in B_{\lambda}$ and $\delta_{B_\lambda}(Z) = +\infty$ otherwise.
By the KKT conditions, we know that $\widetilde{\Omega} \in \Theta_\lambda$ if and only if $R_\lambda (\widetilde{\Omega}) = 0$.

Detailed steps of our adaptive sieving reduction strategy are given in Algorithm \ref{alg:as}. For a sequence of positive regularization parameters sorted in descending order, we first solve problem (\ref{eq:l3}) inexactly with $\lambda$ equal to the largest parameter to obtain {an approximate solution with a given tolerable error $\epsilon \geq 0$} and the {corresponding index set for the} non-zero components. Then, for the subsequent smaller $\lambda$, we continuously use the KKT conditions to perform adaptive sieving to obtain a new non-zero components index set, while updating its solution until a desirable solution is obtained. Such a procedure is performed for all the regularization parameters until the algorithm stops  (we will show that the while loop can terminate in a finite number of iterations in the proof of Theorem \ref{thm:asmain}). Note that the existence of $\Delta_0$ and $\{\Delta_i^l\}$ in Steps \ref{algas: initial} and \ref{algas: main} {in Algorithm \ref{alg:as}} means that the minimization problems are solved inexactly. Both of them are not given in prior but are automatically obtained when the original minimization problems are solved inexactly.

%\newpage

\begin{algorithm}[!h]
	\centering
	\caption{An adaptive sieving reduction strategy for generating a solution path.}
	\label{alg:as}
	\begin{algorithmic}[1]
		\REQUIRE ~~\\ %nput
		A real matrix $A \in \mathbb{R}^{p \times n}$ and a tolerance constant $\epsilon \geq 0$;\\
		A sequence of regularization parameters $\lambda_0 > \lambda_1 > \cdots > \lambda_k > 0$ with $\lambda_{\max} \geq \lambda_0$;
		\ENSURE ~~\\ %Output
		A solution path: $\Omega^*(\lambda_0), \Omega^*(\lambda_1), \cdots, \Omega^*(\lambda_k)$;
		\STATE {\bf Initialization: } \\
		For $\lambda_0 > 0$, solve \[
		\Omega^*(\lambda_0) \in \argmin_{\Omega \in S^p} \left\{\frac{1}{2} \norm{\Omega A}^2_F - \langle \Omega , I_p \rangle + \lambda_0 \norm{\Omega}_{1,\text{off}} - \left\langle \Delta_0 , \Omega \right\rangle \right\}, \]
		where $\Delta_0 \in S^p$ is an error matrix such that $\norm{\Delta_0}_F \leq \epsilon $. Then let \[I^*(\lambda_0) := \{(i,j) \mid \Omega^*(\lambda_0)_{ij} \neq 0,\, i,j = 1,\cdots, p\};\]
		\label{algas: initial}
		\STATE {\bf Main loop: } \\
		\FOR{$i=1$; $i<k+1$; $i++$ }
		\STATE Set $\Omega^0(\lambda_i) = \Omega^*(\lambda_{i - 1})$ and $I^0(\lambda_i) = I^*(\lambda_{i - 1})$;
		\STATE Calculate $R_{\lambda_i}(\Omega^0(\lambda_i))$ and set $l = 0$;
		\WHILE{$\norm{R_{\lambda_i}(\Omega^l(\lambda_i))}_F > \epsilon$}
		\STATE $l++$;
		\STATE Create $J^{l}(\lambda_i)$ by \footnotesize{ \[ J^{l}(\lambda_i) = \left\{(i,j) \in \bar I^{l - 1}(\lambda_i) \mid - \left(h(\Omega^{l - 1}(\lambda_i))\right)_{ij} \notin \lambda_i \left( \partial \theta (\Omega^{l - 1}(\lambda_i)) + \frac{\epsilon}{\lambda_i \sqrt{2|\bar I^{l - 1}(\lambda_i)|}} \mathbb{B}\infty \right)_{ij} \right\},\]}
		\label{algas: index}
		\normalsize where $ \bar I^{l - 1}(\lambda_i)$ denotes the complement of $I^{l - 1}(\lambda_i)$ and $\mathbb{B}\infty$ is the infinity norm unit ball;
		\STATE Update $I^{l}(\lambda_i) = I^{l - 1}(\lambda_i) \cup J^{l}(\lambda_i)$;
		\STATE Solve \[ \Omega^l(\lambda_i) \in \argmin_{\Omega \in S^p} \left\{\frac{1}{2} \norm{\Omega A}^2_F - \langle \Omega , I_p \rangle + \lambda_i \norm{\Omega}_{1,\text{off}} - \left\langle \Delta_i^l , \Omega \right\rangle  \mid \Omega_{\bar I^l(\lambda_i)} = 0 \right\}, \]
		where $\Delta_i^l \in S^p$ is an error vector such that $\norm{\Delta_i^l}_F \leq \epsilon / \sqrt{2}$ and $(\Delta_i^l)_{\bar I^l(\lambda_i)} = 0$;
		\label{algas: main}
		\STATE Compute $R_{\lambda_i}(\Omega^l(\lambda_i))$;
		\ENDWHILE
		\STATE Set $\Omega^*(\lambda_i) = \Omega^l(\lambda_i)$, $I^*(\lambda_i) = I^l(\lambda_i)$ and $\Delta_i = \Delta_i^l$;
		\label{algs:insert}
		\ENDFOR
		\RETURN $\Omega^*$;
	\end{algorithmic}
\end{algorithm}

%Next, we will establish the convergence of Algorithm .
{Before establishing the convergence of Algorithm \ref{alg:as}}, we present the following proposition to interpret the connection between the optimal solution in Step \ref{algas: initial} of Algorithm \ref{alg:as} and an approximate solution of (\ref{eq:l3}).

\begin{proposition}\label{prop:exp}
	The optimal solution $\Omega^*(\lambda)$ of
	\begin{equation}\label{pro:p1}
		\min_{\Omega \in S^p} \left\{\frac{1}{2} \norm{\Omega A}^2_F - \langle \Omega , I_p \rangle + \lambda \norm{\Omega}_{1,\textup{off}} - \left\langle \Delta , \Omega \right\rangle \right\}
	\end{equation}
	with {$\norm{\Delta}_F \leq \epsilon  / \sqrt{2}$ } can be equivalently found by
	\begin{equation}\label{pro:sol}
		\Omega^*(\lambda) =\textup{Prox}_{\lambda \theta }(\widehat{\Omega}(\lambda) - h(\widehat{\Omega}(\lambda))),
	\end{equation}
	where $\widehat{\Omega}(\lambda)$ is an approximate solution of
	\begin{equation}\label{pro:p2}
		\min_{\Omega \in S^p} \left\{\frac{1}{2} \norm{\Omega A}^2_F - \langle \Omega , I_p \rangle + \lambda \norm{\Omega}_{1,\textup{off}}  \right\}
	\end{equation}
	such that
	\begin{equation}\label{pro:cost} \norm{R_{\lambda}(\widehat{\Omega}(\lambda))}_F \leq \frac{\epsilon}{\sqrt{2} \Big(1 + \|{\widehat{\Sigma}}\|_F \Big)}. \end{equation}
\end{proposition}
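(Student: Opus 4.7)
The plan is one-directional: take $\widehat{\Omega}(\lambda)$ as in the hypothesis, define $\Omega^*(\lambda)$ by the proximal update (\ref{pro:sol}), and exhibit an error matrix $\Delta$ so that $\Omega^*(\lambda)$ is exactly optimal for (\ref{pro:p1}) and $\norm{\Delta}_F \le \epsilon/\sqrt{2}$. First I would write down the KKT condition for (\ref{pro:p1}): since the smooth part has gradient $h(\Omega) - \Delta$ and $\theta$ is convex, optimality is equivalent to
\[\Delta - h(\Omega^*(\lambda)) \in \lambda\,\partial\theta(\Omega^*(\lambda)).\]

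Next I would rewrite $R_\lambda$ in standard forward--backward form. Since the off-diagonal $\ell_1$-norm satisfies $(\lambda\theta)^* = \delta_{B_\lambda}$, the Moreau decomposition gives $\textup{Prox}_{\delta_{B_\lambda}}(X) = X - \textup{Prox}_{\lambda\theta}(X)$, so
\[R_\lambda(\widehat{\Omega}(\lambda)) = \widehat{\Omega}(\lambda) - \textup{Prox}_{\lambda\theta}(\widehat{\Omega}(\lambda) - h(\widehat{\Omega}(\lambda))) = \widehat{\Omega}(\lambda) - \Omega^*(\lambda).\]
The optimality condition for the proximal minimization defining $\Omega^*(\lambda)$ then reads
\[(\widehat{\Omega}(\lambda) - h(\widehat{\Omega}(\lambda))) - \Omega^*(\lambda) = R_\lambda(\widehat{\Omega}(\lambda)) - h(\widehat{\Omega}(\lambda)) \in \lambda\,\partial\theta(\Omega^*(\lambda)).\]
Defining $\Delta := R_\lambda(\widehat{\Omega}(\lambda)) + h(\Omega^*(\lambda)) - h(\widehat{\Omega}(\lambda))$ therefore makes the KKT inclusion of (\ref{pro:p1}) hold by construction, so that $\Omega^*(\lambda)$ is the exact optimum of (\ref{pro:p1}) for this $\Delta$.

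It remains to bound $\norm{\Delta}_F$. Since $h$ is affine and $\widehat{\Omega}(\lambda) - \Omega^*(\lambda) = R_\lambda(\widehat{\Omega}(\lambda))$,
\[h(\widehat{\Omega}(\lambda)) - h(\Omega^*(\lambda)) = \tfrac{1}{2}\bigl(R_\lambda(\widehat{\Omega}(\lambda))\widehat{\Sigma} + \widehat{\Sigma}\,R_\lambda(\widehat{\Omega}(\lambda))\bigr),\]
and the triangle inequality together with $\norm{AB}_F \le \norm{A}_{(2)}\norm{B}_F$ and $\norm{\widehat{\Sigma}}_{(2)} \le \norm{\widehat{\Sigma}}_F$ gives $\norm{\Delta}_F \le (1 + \norm{\widehat{\Sigma}}_F)\,\norm{R_\lambda(\widehat{\Omega}(\lambda))}_F$, which is at most $\epsilon/\sqrt{2}$ by (\ref{pro:cost}). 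The main conceptual step is the Moreau-based rewriting of $R_\lambda$; once that is in hand, the rest is algebra. The only real bookkeeping subtlety is that the symmetrization factor $1/2$ in $h$ pairs with the two summands $R\widehat{\Sigma}$ and $\widehat{\Sigma} R$ to yield a single factor of $\norm{\widehat{\Sigma}}_{(2)}$ in the bound, matching the prefactor $1+\norm{\widehat{\Sigma}}_F$ appearing in the hypothesis (\ref{pro:cost}).
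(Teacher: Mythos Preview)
Your proposal is correct and follows essentially the same route as the paper: rewrite $R_\lambda$ via Moreau decomposition as $\widehat{\Omega}(\lambda)-\textup{Prox}_{\lambda\theta}(\widehat{\Omega}(\lambda)-h(\widehat{\Omega}(\lambda)))$, read off the subdifferential inclusion for $\Omega^*(\lambda)$, define $\Delta := R_\lambda(\widehat{\Omega}(\lambda)) + h(\Omega^*(\lambda)) - h(\widehat{\Omega}(\lambda))$, and bound $\norm{\Delta}_F$ by $(1+\norm{\widehat{\Sigma}}_F)\norm{R_\lambda(\widehat{\Omega}(\lambda))}_F$. The only difference is that the paper opens with a short paragraph verifying that an approximate solution $\widehat{\Omega}(\lambda)$ satisfying (\ref{pro:cost}) actually exists (by taking a sequence converging to an exact minimizer and invoking continuity of $R_\lambda$), which you treat as given; everything else matches, and your explicit invocation of Moreau and the spectral-norm submultiplicativity are details the paper leaves implicit.
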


\begin{proof}
	Let $\widetilde{\Omega}(\lambda)$ be an optimal solution of problem (\ref{pro:p2}). {  Note that this solution satisfies $R_\lambda (\widetilde{\Omega}(\lambda)) = 0$. Let $\{\Omega^i\}$ be a sequence that converges to $\widetilde{\Omega}(\lambda)$. We then define
	\begin{align*}
		\Delta^i := &R_\lambda(\Omega^i) + h\left(\textup{Prox}_{\lambda \theta} (\Omega^i - h(\Omega^i))\right) - h(\Omega^i) \\
		= &\Omega^i - \textup{Prox}_{\lambda \theta} (\Omega^i - h(\Omega^i)) + h\left(\textup{Prox}_{\lambda \theta} (\Omega^i - h(\Omega^i))\right) - h(\Omega^i).
	\end{align*}
	Since $h$ is continuously differentiable, from Lemma 4.5 of \citep{Du2015}, we have $\text{lim}_{i \to \infty} \norm{\Delta^i}_F = 0$.}
	{This implies the existence of $\widehat{\Omega}(\lambda)$ satisfying the inequality (\ref{pro:cost}).}
	%Next we verify  that $\Omega^*(\lambda)$ and $\widehat{\Omega}(\lambda)$ are connected with a proximal point mapping as in (\ref{pro:sol}).
	Beginning with the definition of $R_\lambda$, we have $R_\lambda(\widehat{\Omega}(\lambda)) = \widehat{\Omega}(\lambda) - \textup{Prox}_{\lambda \theta} (\widehat{\Omega}(\lambda) - h(\widehat{\Omega}(\lambda)))$. {By combining} this with equation (\ref{pro:sol}), 	
	we obtain
	\[ R_\lambda (\widehat{\Omega}(\lambda)) - h(\widehat{\Omega}(\lambda)) \in \lambda \partial \theta (\Omega^* (\lambda)). \]
	Now, let us define $\Delta := R_{\lambda}(\widehat{\Omega}(\lambda)) + h(\Omega^*(\lambda)) - h(\widehat{\Omega}(\lambda))$. It can be seen that
	\[ \Delta \in h(\Omega^* (\lambda )) + \lambda \partial \theta (\Omega^* (\lambda)), \]
	which means that $\Omega^*(\lambda)$ is an optimal solution of (\ref{pro:p1}) with the given $\Delta$. Besides, we have
	\begin{align*}
		\norm{\Delta}_F = \norm{R_{\lambda}(\widehat{\Omega}(\lambda)) + h(\Omega^*(\lambda)) - h(\widehat{\Omega}(\lambda))}_F
		\leq \left(1 + \norm{\widehat{\Sigma}}_F\right) \norm{R_{\lambda}(\widehat{\Omega}(\lambda))}_F \leq \epsilon / \sqrt{2} .
	\end{align*}
This completes the proof.
\end{proof}

Proposition \ref{prop:exp} presents the connection between $\Omega^*(\lambda)$ and $\widehat \Omega (\lambda)$.  Also, in its proof, it explains how to obtain the error matrix $\Delta$. As for Step \ref{algas: main} in Algorithm \ref{alg:as},  {a more} detailed interpretation can be found from the proof of the following theorem.

\begin{theorem}\label{thm:asmain}
	The solution path $\{\Omega^*(\lambda_i) \mid i = 0, 1, \cdots, k \}$ generated by Algorithm \ref{alg:as} are approximate optimal solutions of a sequence of problems in the form of \[\min_{\Omega \in S^p} \left\{\frac{1}{2} \norm{\Omega A}^2_F - \langle \Omega , I_p \rangle + \lambda_i \norm{\Omega}_{1, \textup{off}}  \right\}\]  {with $\norm{R_{\lambda_i}(\Omega^*(\lambda_i))}_F \leq \epsilon$, $i =  0, 1, \cdots, k$.}
\end{theorem}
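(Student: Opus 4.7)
The plan is to verify $\norm{R_{\lambda_i}(\Omega^*(\lambda_i))}_F \leq \epsilon$ separately for the initialization ($i=0$) and for each index $i \geq 1$ from the main loop. For $i=0$ the bound follows from a direct proximal-Lipschitz argument, while for $i \geq 1$ the while-loop exit condition \emph{is} exactly the claimed bound; so the real work is to prove that the inner loop always terminates in finitely many iterations.

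For the base case, the KKT condition of the Step~\ref{algas: initial} problem reads $\Delta_0 - h(\Omega^*(\lambda_0)) \in \lambda_0 \partial\theta(\Omega^*(\lambda_0))$, which rewrites as $\Omega^*(\lambda_0) = \textup{Prox}_{\lambda_0\theta}\bigl(\Omega^*(\lambda_0) - h(\Omega^*(\lambda_0)) + \Delta_0\bigr)$. Since $\textup{Prox}_{\lambda_0\theta}$ is $1$-Lipschitz, comparing this against the definition of $R_{\lambda_0}$ yields $\norm{R_{\lambda_0}(\Omega^*(\lambda_0))}_F \leq \norm{\Delta_0}_F \leq \epsilon$. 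The same argument (essentially a reversed reading of Proposition~\ref{prop:exp}) also certifies the output of every Step~\ref{algas: main} subproblem under the bound $\norm{\Delta_i^l}_F \leq \epsilon/\sqrt{2}$ on the perturbation, which is supported on $I^l(\lambda_i)$.

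For $i \geq 1$, once the while-loop exits the required bound holds by construction, so I only need to bound the number of inner iterations. I would do this by showing that $|I^l(\lambda_i)| = |I^{l-1}(\lambda_i)| + |J^l(\lambda_i)|$ is strictly increasing in $l$, capping the iteration count at $p^2$. Accordingly, suppose the loop enters iteration $l$ (so $\norm{R_{\lambda_i}(\Omega^{l-1}(\lambda_i))}_F > \epsilon$) while $J^l(\lambda_i) = \emptyset$. The emptiness of $J^l(\lambda_i)$ says that for every $(i',j') \in \bar I^{l-1}(\lambda_i)$ the scalar $-h(\Omega^{l-1}(\lambda_i))_{i'j'}$ lies within $\ell_\infty$-distance $\epsilon/\sqrt{2|\bar I^{l-1}(\lambda_i)|}$ of $(\lambda_i\partial\theta(\Omega^{l-1}(\lambda_i)))_{i'j'}$, so I can pick an error matrix $\tilde\Delta$ supported on $\bar I^{l-1}(\lambda_i)$ with $|\tilde\Delta_{i'j'}| \leq \epsilon/\sqrt{2|\bar I^{l-1}(\lambda_i)|}$ and $-h(\Omega^{l-1}(\lambda_i))_{i'j'}+\tilde\Delta_{i'j'} \in (\lambda_i\partial\theta(\Omega^{l-1}(\lambda_i)))_{i'j'}$. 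Combined with the Step~\ref{algas: main} KKT conditions on $I^{l-1}(\lambda_i)$ (whose perturbation $\Delta_i^{l-1}$ is supported there), $\Omega^{l-1}(\lambda_i)$ then satisfies the KKT system of the unconstrained perturbed problem with combined perturbation $\Delta_i^{l-1}+\tilde\Delta$.

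The technical heart, and the step I expect to be the main obstacle, is the error-budget bookkeeping that comes next. Because $\Delta_i^{l-1}$ and $\tilde\Delta$ have disjoint supports, Pythagoras yields $\norm{\Delta_i^{l-1}+\tilde\Delta}_F^2 \leq \epsilon^2/2 + |\bar I^{l-1}(\lambda_i)|\cdot\epsilon^2/(2|\bar I^{l-1}(\lambda_i)|) = \epsilon^2$, which is precisely what the $\sqrt{2|\bar I^{l-1}(\lambda_i)|}$ normalization baked into Step~\ref{algas: index} is designed to deliver. Re-running the proximal-Lipschitz step from the base case with this combined perturbation then forces $\norm{R_{\lambda_i}(\Omega^{l-1}(\lambda_i))}_F \leq \epsilon$, contradicting the loop-entry condition. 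Hence $J^l(\lambda_i)$ must be nonempty, the index sets grow strictly, the loop terminates in at most $p^2$ iterations, and the desired bound $\norm{R_{\lambda_i}(\Omega^*(\lambda_i))}_F \leq \epsilon$ holds at termination.
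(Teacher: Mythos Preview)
Your proposal is correct and follows essentially the same route as the paper's proof: the base case via the KKT identity and $1$-Lipschitzness of the proximal map, and the termination argument by contradiction that $J^l(\lambda_i)=\emptyset$ would allow one to combine the Step~\ref{algas: main} error $\Delta_i^{l-1}$ (supported on $I^{l-1}$) with a complementary error on $\bar I^{l-1}$ whose Frobenius norm is controlled by the $\sqrt{2|\bar I^{l-1}(\lambda_i)|}$ normalization, yielding $\norm{R_{\lambda_i}(\Omega^{l-1}(\lambda_i))}_F\le\epsilon$. The only cosmetic difference is that the paper writes the constrained KKT system explicitly with a Lagrange multiplier $\Lambda$ before restricting to $I^{l-1}$, whereas you invoke the componentwise KKT on $I^{l-1}$ directly; since $\partial\theta$ is separable this is equivalent.
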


\begin{proof}
	We first show that the index set $J^l(\lambda_i)$ is nonempty whenever $\norm{R_{\lambda_i}(\Omega^l(\lambda_i))}_F > \epsilon$.  {Suppose that $J^l(\lambda_i) = \emptyset$. Then we have} \[ - \left(h(\Omega^{l - 1}(\lambda_i))\right)_{ij} \in \lambda_i \left( \partial \theta (\Omega^{l - 1}(\lambda_i)) + \frac{\epsilon}{\lambda_i \sqrt{2|\bar I^{l - 1}(\lambda_i)|}} \mathbb{B}\infty \right)_{ij}, \quad \forall \, (i,j) \in \bar I^{l - 1} (\lambda_i).\]
	Thus, there is a matrix $\widehat{\Delta}^l_i \in \mathbb{S}^p$ with $(\widehat{\Delta}^l_i)_{I^{l-1}(\lambda_i)} = 0$ and $\norm{\widehat{\Delta}^l_i}_\infty \leq \frac{\epsilon}{ {\sqrt{2|\bar I^{l - 1}(\lambda_i)|}}} $ such that
	\begin{equation}\label{thm:k1}
		- \left(h(\Omega^{l - 1}(\lambda_i)) - \widehat{\Delta}^l_i \right)_{ij} \in \lambda_i \left( \partial \theta (\Omega^{l - 1}(\lambda_i))  \right)_{ij}, \quad \forall \, (i,j) \in \bar I^{l - 1} (\lambda_i).
	\end{equation}
	{Since $\Omega^{l - 1}(\lambda_i)$ is an optimal solution of
		\[ \min_{\Omega \in S^p} \left\{\frac{1}{2} \norm{\Omega A}^2_F - \langle \Omega , I_p \rangle + \lambda_i \norm{\Omega}_{1,\text{off}} - \left\langle \Delta_i^{l - 1} , \Omega \right\rangle  \mid \Omega_{\bar I^{l - 1}(\lambda_i)} = 0 \right\}, \]
		where $\Delta_i^{l - 1}$ is an error matrix with $\norm{\Delta^{l - 1}_i}_F \leq \epsilon / \sqrt{2}$ and $(\Delta_i^{l - 1})_{\bar I^{l - 1}(\lambda_i)} = 0$,   by the KKT conditions, we know that} there exists $\Lambda \in \mathbb{S}^p$ with $\Lambda_{I^{l - 1}(\lambda_i)} = 0$ such that
	\begin{equation}\label{thm:k2}
		\left\{
		\begin{aligned}
			&0 \in h(\Omega^{l - 1}(\lambda_i)) - \Delta^{l - 1}_i + \lambda_i \partial \theta(\Omega^{l - 1}(\lambda_i)) - \Lambda, \\
			&\left(\Omega^{l - 1}(\lambda_i)\right)_{\bar I^{l - 1}(\lambda_i)} = 0.
		\end{aligned}
		\right.
	\end{equation}
	Then, combining (\ref{thm:k1}) and (\ref{thm:k2}), we obtain
	\[ - h(\Omega^{l - 1}(\lambda_i)) + \widetilde{\Delta}^{l - 1}_i \in \lambda_i \partial \theta(\Omega^{l - 1}(\lambda_i)),\]
	where $\widetilde{\Delta}^{l - 1}_i \in \mathbb{S}^p$ with $(\widetilde{\Delta}^{l - 1}_i)_{I^{l - 1}(\lambda_i)} = ({\Delta}^{l - 1}_i)_{I^{l - 1}(\lambda_i)}$ and $(\widetilde{\Delta}^{l - 1}_i)_{\bar I^{l - 1}(\lambda_i)} = (\widehat{\Delta}^{l - 1}_i)_{\bar I^{l - 1}(\lambda_i)}$. This means
	\[ \Omega^{l - 1}(\lambda_i) = \textup{Prox}_{\lambda_i \theta}(\Omega^{l - 1}(\lambda_i) - h(\Omega^{l - 1}(\lambda_i)) + \widetilde{\Delta}^{l - 1}_i).\]
	Therefore, we have
	\begin{align*}
		\norm{R_{\lambda_i}(\Omega^{l - 1}(\lambda_i))}_F = \norm{\Omega^{l - 1}(\lambda_i) - \textup{Prox}_{\lambda_i \theta}(\Omega^{l - 1}(\lambda_i) - h(\Omega^{l - 1}(\lambda_i)) )}_F \leq  \norm{\widetilde{\Delta}^{l - 1}_i}_F \leq \epsilon,
		%	= &\norm{\textup{Prox}_{\lambda_i \theta}(\Omega^{l - 1}(\lambda_i) - h(\Omega^{l - 1}(\lambda_i)) + 	\widetilde{\Delta}^{l - 1}_i) - \textup{Prox}_{\lambda_i \theta}(\Omega^{l - 1}(\lambda_i) - h(\Omega^{l - 1}(\lambda_i)) )}_F \\
	\end{align*}
	where the first inequality follows from the property that the proximal mapping is globally Lipschitz continuous with modulus 1. Hence, a contradiction is found. Thus, $J^{l}(\lambda_i) \neq \emptyset$ if and only if $\norm{R_{\lambda_i}(\Omega^l(\lambda_i))} > \epsilon$. Since the total number of components of $\Omega$ is finite, the while loop in Algorithm \ref{alg:as} will terminate in a finite number of iterations.  {Additionally,  by the KKT conditions, we have} $\Omega^*(\lambda_0)  = \textup{Prox}_{\lambda_0 \theta}(\Omega^*(\lambda_0)) - h(\Omega^*(\lambda_0))) + \Delta_0)$. Thus
	\[\norm{R_{\lambda_0}(\Omega^*(\lambda_0))}_F \leq \norm{\Delta_0}_F \leq \epsilon, \]
which completes the proof.
\end{proof}

We provide two more remarks for Algorithm \ref{alg:as} as follows.

\begin{remark}\label{rm:maxl}
	\underline{Determination of $\lambda_{\max}$ in the ``Input" of Algorithm \ref{alg:as}.}
	Assume that the solution set to (\ref{eq:l3}) is nonempty. We can set \[\lambda_{\max} = \max_{i < j} \left\{ \frac{1}{2} | \widehat{\Sigma}_{ij} / \widehat{\Sigma}_{ii} + \widehat{\Sigma}_{ij} / \widehat{\Sigma}_{jj} |\right\}.\]
	If $\lambda \geq \lambda_{\max}$, the optimal solution of (\ref{eq:l3}) is a diagonal matrix $\Omega^*$ with $\Omega^*_{ii} = 1 / \widehat{\Sigma}_{ii}, i =  1, \cdots, p$. This can be easily verified by the KKT conditions.
\end{remark}

\begin{remark}\underline{Direct extension to the relaxed lasso.}
	Since we have defined the non-zero components set $\bar I$ in Algorithm \ref{alg:as}, we can easily insert the relaxed lasso \citep{Meinshausen2007} into our algorithm after Step \ref{algs:insert} to obtain a solution with  {a better prediction accuracy.}
\end{remark}

{In the remaining of this section, we will discuss the {statistical consistency} and the asymptotic positive definiteness of solutions generated by Algorithm \ref{alg:as}. The discussion of statistical properties here will focus on two cases. The first one is the case where the samples are independent and identically distributed (i.i.d.)  sub-Gaussian random variables (a random variable $Z \in \mathbb{R}$ is called sub-Gaussian \citep{Ravikumar2011} with parameter $\sigma_s \in (0, +\infty)$, if $\mathbb E[Z] = 0$ and $\mathbb E \left[\exp\{wZ\}\right] \leq \exp\left\{\sigma_s^2w^2/2\right\}, \ \forall \, w \in \mathbb{R}$). The second case is for i.i.d. observations of random variables with bounded moments, in particular, satisfying a polynomial-type tail bound.
	
}

%{
%\begin{definition}\label{def:subG}
%	A random variable $Z \in \mathbb{R}$ is called sub-Gaussian with parameter $\sigma_s \in (0, +\infty)$, if $\mathbb E[Z] = 0$ and
%	\[
%	\mathbb E \left[\exp\{wZ\}\right] \leq \exp\left\{\sigma_s^2w^2/2\right\}, \quad \forall \, w \in \mathbb{R}.
%	\]
%\end{definition}
%}

{Before proceeding to the main discussion, we present some notation and describe some assumptions.} Suppose that the true precision matrix $\Theta$ is sparse and its minimum eigenvalue $\gamma_{\min}(\Theta) > r$ with some $r > 0 $. For the associated graph, we denote the maximum node degree and the number of edges by $d$ and $s$, respectively. Then, let $s_\theta = \min \{\sqrt{s + p}, d\}$ to describe the sparse level of $\Theta$. %We use $\theta_{\min}$ to represent the absolute value of the smallest non-zero component of $\Theta$.
Similarly to the assumption in  {Section} 3.3 of \citep{Ravikumar2011}, we assume in addition that the parameters $\kappa_\Gamma =\norm{\Gamma_{\Psi,\Psi}^{-1}}_{1,\infty}$, $\kappa_\Sigma = \norm{\Sigma}_{1,\infty}$, $\kappa^* = \max_i \Sigma_{ii}$, and $\alpha$  are constants (not scaling with $p$ and $d$), where $\norm{X}_{1,\infty} = \max_{i} \sum_j |X_{ij}|$ for a matrix $X$. Assume that the following irrepresentability condition holds:
\begin{equation}\label{eq:irrcond}
	\max_{v \in \widebar\Psi} \norm{\Gamma_{v, \Psi} (\Gamma_{\Psi,\Psi})^{-1}}_1 = 1 - \alpha, \quad 0 < \alpha \leq 1,
\end{equation}
where $\Psi$ is the support set of $\Theta$, $\widebar\Psi$ is its complement, and $\Gamma = \frac{1}{2} \Sigma \oplus \Sigma$ ($\oplus$ denotes the Kronecker matrix sum). {We are now ready to present the following propositions. %, which are based on Theorems 2 and 3 of \cite{Zhang2014}, and Theorem \ref{thm:asmain}.
}

{
\begin{proposition}\label{lem:consistency1}
	Consider a zero-mean random variable $X = (X_1,\cdots,X_p)$ with covariance $\Sigma$ such that each $X_i/\Sigma_{ii}^{1/2}$ is sub-Gaussian with parameter $\sigma_s > 0$. Assume that the irrepresentability condition \eqref{eq:irrcond}  holds and the samples are drawn independently. Choose $\widebar \lambda_n = C_1 \sqrt{\eta\log p/n}$ for some $\eta > 2$ and $n > C_2 (s_\theta d / r)^2\eta\log p$ with the scalars $C_1$ and  $C_2$ sufficiently large. Then, with probability greater than $1 - 1/p^{\eta - 2}$, we have
	\[
		\norm{\Omega^*(\widebar \lambda_n)- \Theta}_{(2)} \leq C_s s_{\theta}d\sqrt{\eta\log p / n},
	\]
	where $\Omega^*(\widebar \lambda_n)$ is generated by Algorithm \ref{alg:as} with a small enough tolerance $\epsilon \geq  0$, and
	$C_s > 0$ depends only on $\kappa_\Gamma$, $\kappa_\Sigma$, $\kappa^*$, $\alpha$, and $\sigma_s$.
\end{proposition}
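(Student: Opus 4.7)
The plan is to split the target spectral-norm error into an optimization part and a statistical part. Let $\widetilde\Omega(\widebar\lambda_n)$ denote an exact minimizer of problem (\ref{eq:l3}) at tuning parameter $\widebar\lambda_n$; by the triangle inequality,
\[
\norm{\Omega^*(\widebar\lambda_n)-\Theta}_{(2)} \le \norm{\Omega^*(\widebar\lambda_n)-\widetilde\Omega(\widebar\lambda_n)}_{(2)} + \norm{\widetilde\Omega(\widebar\lambda_n)-\Theta}_{(2)}.
\]
The first summand is the inexactness inherited from Algorithm \ref{alg:as}; the second is the statistical error of the exact $\ell_1$-penalized D-trace estimator, to which I would apply a tailored primal-dual witness (PDW) analysis.

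For the statistical part, I would adapt the PDW scheme of \citep{Ravikumar2011} to the D-trace loss. A useful structural simplification is that the smooth part of the D-trace objective has constant Hessian $\Gamma=\tfrac12\widehat\Sigma\oplus\widehat\Sigma$, so there are no higher-order Taylor remainders to control, unlike the glasso case with its $-\Omega^{-1}$ gradient. First, a sub-Gaussian tail bound for sample covariances gives $\norm{\widehat\Sigma-\Sigma}_\infty \le c_1\sqrt{\eta\log p/n}$ with probability at least $1-1/p^{\eta-2}$; on this event, the KKT residual at the population precision matrix, $h(\Theta)=\tfrac12((\widehat\Sigma-\Sigma)\Theta+\Theta(\widehat\Sigma-\Sigma))$, is bounded by a constant multiple of $\widebar\lambda_n$ in $\ell_\infty$. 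Next, one constructs the PDW candidate by solving the problem restricted to the true support $\Psi$ (on which the quadratic is strictly convex because $\Gamma_{\Psi,\Psi}^{-1}$ exists with controlled $\kappa_\Gamma$) and extends a dual certificate to $\widebar\Psi$ via $\Gamma_{\widebar\Psi,\Psi}\Gamma_{\Psi,\Psi}^{-1}$; the irrepresentability condition (\ref{eq:irrcond}) then forces this certificate to satisfy strict dual feasibility with margin $\alpha/2$, as long as the constant $C_1$ in $\widebar\lambda_n$ is chosen large enough to dominate the sampling noise. This identifies the restricted solution with the global one, pins down the sparsity pattern of $\widetilde\Omega$, and yields the elementwise bound $\norm{\widetilde\Omega-\Theta}_\infty \le c_2\sqrt{\eta\log p/n}$. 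Converting to spectral norm via the sparsity structure (each row has at most $d$ nonzeros, and $s_\theta$ absorbs the minimum with $\sqrt{s+p}$) produces the claimed rate $C_s s_\theta d\sqrt{\eta\log p/n}$; the sample-size condition $n \ge C_2(s_\theta d/r)^2\eta\log p$ ensures the resulting neighborhood is tight enough to keep $\gamma_{\min}(\widetilde\Omega)>0$, which will be reused when handling the optimization part.

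For the optimization part, Theorem \ref{thm:asmain} guarantees $\norm{R_{\widebar\lambda_n}(\Omega^*(\widebar\lambda_n))}_F\le\epsilon$. Since the D-trace objective is restricted strongly convex on the support-$\Psi$ neighborhood identified above and $R_{\widebar\lambda_n}$ is a Lipschitz proximal residual, a standard error-bound argument (of the type used in Lemma \ref{lm:gc}) gives $\norm{\Omega^*(\widebar\lambda_n)-\widetilde\Omega(\widebar\lambda_n)}_F \le c\,\epsilon$ for a constant $c$ depending only on the model constants; choosing $\epsilon$ no larger than $\sqrt{\eta\log p/n}$ absorbs this term into the statistical rate. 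The main obstacle I anticipate is making the PDW step rigorous for the D-trace loss: the Hessian $\Gamma$ acts on a matrix through the symmetrized product $\tfrac12(\widehat\Sigma M+M\widehat\Sigma)$ rather than componentwise, so verifying strict dual feasibility requires a careful vectorization that exploits the irrepresentability structure of $\Gamma$ itself rather than of $\widehat\Sigma$. Together with the sample-size threshold $n\ge C_2(s_\theta d/r)^2\eta\log p$, which buys the margin needed to keep the support-restricted sample Hessian invertible, this verification is the technical core of the argument.
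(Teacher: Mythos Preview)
Your proposal is essentially correct and follows the same two-term decomposition as the paper: the statistical error of the exact D-trace minimizer plus the optimization error coming from Algorithm~\ref{alg:as}. For the statistical term the paper simply invokes \citep[Theorem~2]{Zhang2014}, which is exactly the PDW argument you sketch (so you are re-deriving the cited result rather than doing anything different).

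For the optimization term, the mechanism the paper uses is slightly different from what you emphasize. You justify the error bound via restricted strong convexity on the support-$\Psi$ neighborhood, but this requires knowing a priori that $\Omega^*(\widebar\lambda_n)$ lands on that support, which has not been established for the inexact iterate. The paper sidesteps this by using only the PLQ structure of the objective: it sets $\widehat\Omega:=\Omega^*-R_{\widebar\lambda_n}(\Omega^*)=\textup{Prox}_{\lambda\theta}(\Omega^*-h(\Omega^*))$, observes that $\textup{dist}(0,\partial\lambda\theta(\widehat\Omega)+h(\Omega^*))\le\epsilon$, transfers this to $\Omega^*$ via the local upper Lipschitz property of the polyhedral multifunction $\partial\lambda\theta$ (Definition~\ref{def:errbd}), and only then applies Lemma~\ref{lm:gc}. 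Since you do cite Lemma~\ref{lm:gc} as the relevant tool, you are close; just be aware that the bridge from $\norm{R_{\widebar\lambda_n}(\Omega^*)}_F\le\epsilon$ to $\textup{dist}(0,\partial\Gamma(\Omega^*))\lesssim\epsilon$ needs this intermediate step, and that the resulting constant $\iota=\kappa_2(1+\kappa_1)$ comes from the polyhedral error bound rather than from any strong-convexity modulus.
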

}

{
\begin{proof}
	Let $\Theta_{\widebar \lambda_n}$ be the optimal solution set of \eqref{eq:l3} with regularization parameter $\widebar\lambda_n$, and $\widehat{\Theta}(\widebar \lambda_n)$ be arbitrarily chosen from $\Theta_{\widebar \lambda_n}$. For simplicity, we will use $\widehat{\Theta}$ and $\Omega^*$ to represent $\widehat{\Theta}(\widebar \lambda_n)$ and $\Omega^*(\widebar \lambda_n)$, respectively, in this proof.
	By Theorem \ref{thm:asmain}, we have that $\norm{R_{\widebar\lambda_n}(\Omega^*)}_F \leq \epsilon$. We will then verify that there is $\iota > 0$ such that $\norm{\Omega^* - \widehat{\Theta}}_{(2)} \leq \iota \epsilon$.
	It is reasonable to assume that $\norm{R_{\widebar \lambda_n} (\Omega^*) }_F > 0$, since if it is not, $\Omega^*$ is exactly an optimal solution of \eqref{eq:l3}, and so the inequality holds automatically.
	
	Define $\widehat\Omega := \Omega^* - R_{\widebar \lambda_n} (\Omega^*) = \textup{Prox}_{\lambda\theta} (\Omega^* - h (\Omega^*))$. This implies that $0 \in  \partial \lambda  \theta (\widehat{\Omega}) + h(\Omega^*) - R_{\widebar \lambda_n} (\Omega^*)$ and thus $\textup{dist}(0, \partial \lambda \theta (\widehat{\Omega}) + h(\Omega^*)) \leq \epsilon$. Moreover, since $\lambda \theta$ is piecewise linear quadratic, $\partial \lambda \theta$ is locally upper Lipschitzian at $\Omega^*$. By Definition \ref{def:errbd}, for a small enough $\epsilon$, there exist $\kappa_1 \geq 0$ and some neighborhood $N(\Omega^*)$ of $\Omega^*$ such that $\widehat{\Omega} \in N(\Omega^*)$ and
	$\partial \lambda \theta(\widehat{\Omega}) \subset \partial \lambda \theta({\Omega^*}) + \kappa_1 \epsilon B$. Consequently,
	\[
		\textup{dist}(0, \partial \lambda \theta ({\Omega}^*) + h(\Omega^*) + \kappa_1 \epsilon B)  \leq \textup{dist}(0, \partial \lambda \theta (\widehat{\Omega}) + h(\Omega^*)) \leq \epsilon.
	\]
	Besides, it can be verified that $\textup{dist}(0, \partial \lambda \theta ({\Omega}^*) + h(\Omega^*)) - \kappa_1\epsilon \leq \textup{dist}(0, \partial \lambda \theta ({\Omega}^*) + h(\Omega^*) + \kappa_1 \epsilon B)$. Thus, we have $\textup{dist}(0, \partial \lambda \theta ({\Omega}^*) + h(\Omega^*)) \leq (1 + \kappa_1)\epsilon$.
	Moreover, by Lemma \ref{lm:gc}, we obtain that, for some $\kappa_2 \geq 0$,
	\[
		\dist(\Omega^*,\Theta_{\widebar \lambda_n}) \leq \kappa_2 \textup{dist}(0, \partial \lambda \theta ({\Omega}^*) + h(\Omega^*)) \leq \kappa_2(1 + \kappa_1)\epsilon.
	\]
	The closeness of $\Theta_{\widebar \lambda_n}$ follows from the fact that the graph of $\partial \lambda \theta + h$ is polyhedral and thus closed \citep[Theorem 19.1]{Rockafellar1970}. As a result, we can always choose $\widehat{\Theta} \in \Theta_{\widebar \lambda_n}$ such that $\norm{\Omega^*-\widehat{\Theta}}_F = \dist(\Omega^*,\Theta_{\widebar \lambda_n}) \leq \iota\epsilon$, where $\iota = \kappa_2(1+\kappa_1) \geq 0$.
	Then, we know that
	\[
		\norm{\Omega^* - \widehat{\Theta}}_{(2)} \leq \norm{\Omega^* - \widehat{\Theta}}_{F} \leq \iota \epsilon.
	\]
	Consequently, we have that
	\[
		\norm{ \Omega^* - \Theta}_{(2)} \leq \norm{ \Omega^* - \widehat\Theta}_{(2)} + \norm{\widehat\Theta - \Theta}_{(2)} \leq C_s s_{\theta}d\sqrt{\eta\log p / n},
	\]
	where the first inequality comes from the triangular inequality, and the second inequality follows from \citep[Theorem 2]{Zhang2014} and the fact that $\epsilon$ can be chosen proportional to $s_{\theta}d\sqrt{\eta\log p / n}$. This completes the proof.
\end{proof}
}

{Using similar arguments as in the proof of Proposition \ref{lem:consistency1} and  Theorem 3 in \cite{Zhang2014}, we can obtain the following proposition for the polynomial-type tails case.}

{
	\begin{proposition}\label{lem:consistency2}
		Consider a random variable $X = (X_1,\cdots,X_p)$ with covariance $\Sigma$ such that each $X_i/\Sigma_{ii}^{1/2}$
		has finite $4\tilde q$-th moments, i.e. there exist $\tilde q > 0$ and $K_{\tilde q} \in \mathbb{R}$ such that $\mathbb{E}[X_i/\Sigma_{ii}^{1/2}]^{4\tilde q} \leq K_{\tilde q}$. Assume that the irrepresentability condition \eqref{eq:irrcond}  holds and the samples are drawn independently. Choose $\widebar \lambda_n = C_3 \sqrt{p^{\eta / \tilde q} / n}$ for some $\eta > 2$ and $n > C_4 (s_{\theta}d/r)^2 p ^{\eta / q}$ with the scalars $C_3$ and  $C_4$ sufficiently large. Then, with probability greater than $1 - 1/p^{\eta - 2}$, we have
		\[
		\norm{\Omega^*(\widebar \lambda_n)- \Theta}_{(2)} \leq C_p s_{\theta}d\sqrt{p^{\eta / \tilde q} / n},
		\]
		where $\Omega^*(\widebar \lambda_n)$ is generated by Algorithm \ref{alg:as} with a small enough tolerance $\epsilon \geq  0$, and
		$C_p > 0$ depends only on $\kappa_\Gamma$, $\kappa_\Sigma$, $\kappa^*$, $\alpha$, and $K_{\tilde q}$.
	\end{proposition}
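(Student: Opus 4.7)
The plan is to mirror the argument in the proof of Proposition \ref{lem:consistency1} essentially verbatim, with the only substantive change being that the sub-Gaussian concentration bound used there (which gave the rate $\sqrt{\eta\log p/n}$) is replaced by a polynomial-tail concentration bound for the sample covariance matrix (giving the rate $\sqrt{p^{\eta/\tilde q}/n}$). Concretely, I would let $\widehat\Theta(\widebar\lambda_n)$ denote an exact minimizer of \eqref{eq:l3} at tuning parameter $\widebar\lambda_n$, and $\Omega^*(\widebar\lambda_n)$ the approximate solution returned by Algorithm \ref{alg:as} with tolerance $\epsilon\geq 0$; then write
\[
\norm{\Omega^*(\widebar\lambda_n)-\Theta}_{(2)}\leq \norm{\Omega^*(\widebar\lambda_n)-\widehat\Theta(\widebar\lambda_n)}_{(2)}+\norm{\widehat\Theta(\widebar\lambda_n)-\Theta}_{(2)}
\]
and control the two terms separately.

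For the first term, I would recycle the piecewise linear quadratic machinery from Proposition \ref{lem:consistency1}: by Theorem \ref{thm:asmain} we have $\norm{R_{\widebar\lambda_n}(\Omega^*)}_F\le\epsilon$, and setting $\widehat\Omega:=\Omega^*-R_{\widebar\lambda_n}(\Omega^*)=\textup{Prox}_{\widebar\lambda_n\theta}(\Omega^*-h(\Omega^*))$ gives $\textup{dist}(0,\partial(\widebar\lambda_n\theta)(\widehat\Omega)+h(\Omega^*))\le\epsilon$. Since $\widebar\lambda_n\theta$ is PLQ, its subdifferential is locally upper Lipschitzian (Definition \ref{def:errbd}), so we can transfer this estimate from $\widehat\Omega$ to $\Omega^*$ at the cost of an extra constant factor, and then Lemma \ref{lm:gc} yields $\dist(\Omega^*,\Theta_{\widebar\lambda_n})\leq\iota\epsilon$ for some $\iota\geq 0$. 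Picking $\widehat\Theta$ to attain this distance bounds the first term by $\iota\epsilon$ (since $\norm{\cdot}_{(2)}\leq\norm{\cdot}_F$). This step does not use any distributional assumption on $X$, so it is insensitive to the switch from sub-Gaussian to polynomial-type tails.

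For the second term, I would cite Theorem 3 of \cite{Zhang2014}, which is exactly the polynomial-tails analogue of the Theorem 2 invoked in the sub-Gaussian case. Under the irrepresentability condition \eqref{eq:irrcond}, the moment bound $\mathbb E[X_i/\Sigma_{ii}^{1/2}]^{4\tilde q}\leq K_{\tilde q}$, the sample-size condition $n > C_4 (s_\theta d/r)^2 p^{\eta/\tilde q}$, and the choice $\widebar\lambda_n=C_3\sqrt{p^{\eta/\tilde q}/n}$, that theorem gives $\norm{\widehat\Theta(\widebar\lambda_n)-\Theta}_{(2)}\leq C' s_\theta d\sqrt{p^{\eta/\tilde q}/n}$ with probability at least $1-1/p^{\eta-2}$, where $C'$ depends only on $\kappa_\Gamma$, $\kappa_\Sigma$, $\kappa^*$, $\alpha$, and $K_{\tilde q}$.

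Finally I would choose the tolerance $\epsilon$ in Algorithm \ref{alg:as} small enough to be of order $s_\theta d\sqrt{p^{\eta/\tilde q}/n}$, so that both summands are of the same order, and combine via the triangle inequality to obtain the desired bound with the constant $C_p$ absorbing $\iota$ and $C'$. I do not expect any real obstacle: all the delicate concentration work is outsourced to Theorem 3 of \cite{Zhang2014}, and the bridge from the exact minimizer to the inexact Algorithm \ref{alg:as} output is entirely deterministic and identical to the sub-Gaussian argument, since both rely only on the PLQ structure and the residual bound in Theorem \ref{thm:asmain}. The one point to be slightly careful about is verifying that $\epsilon$ can indeed be chosen arbitrarily small in practice (which is assumed in the statement), and that the local upper Lipschitz modulus $\iota$ does not depend on $n$ or $p$ in a way that would spoil the rate; both follow from the fact that $\partial(\widebar\lambda_n\theta)$ has a polyhedral graph whose Lipschitz modulus depends only on the combinatorial structure of $\theta$, not on $\widebar\lambda_n$.
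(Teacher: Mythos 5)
Your proposal is correct and matches the paper's intended argument exactly: the paper itself states that Proposition \ref{lem:consistency2} follows by repeating the proof of Proposition \ref{lem:consistency1} with Theorem 2 of \cite{Zhang2014} replaced by its polynomial-tail counterpart, Theorem 3, which is precisely your plan. The deterministic bridge via Theorem \ref{thm:asmain}, the local upper Lipschitz property of the PLQ subdifferential, and Lemma \ref{lm:gc}, together with the choice of $\epsilon$ proportional to the statistical rate, is the same in both cases.
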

}

%%%%%%%%

\begin{remark}
	{The probabilistic models assumed in Propositions \ref{lem:consistency1} and \ref{lem:consistency2} here are mainly used to establish the {statistical consistency} and the asymptotic positive definiteness of the solution.
	}For example, for the sub-Gaussian case,
	when $s_{\theta}d \sqrt{\eta\log p / n} \to 0$,   the estimated solution $\Omega^*(\widebar \lambda_n)$ would be positive definite with probability tending to $1$. {Thereby making it possible to consider the problem without the positive definite constraint for $\Omega$.} Besides, assuming that $n$ is the same as the statement of Theorem 2 in \citep{Zhang2014},  {then}  the positive definite property of the optimal solution estimated by the D trace estimator can be guaranteed.
	Moreover, if the estimated solution is not positive definite, a common remedy is to add a matrix $\pi I_p$ with a small $\pi > |\gamma_{\min}(\Omega^*(\bar\lambda_n))|$ to obtain a positive definite estimation.
	
\end{remark}

%%%%%%%%%%%%%%%%%%%%%%%%%%%%%
\section{A semismooth Newton augmented Lagrangian method}\label{sec:ialmssn}

In this section, we will develop a semismooth Newton augmented Lagrangian method for solving the minimization problems in Steps \ref{algas: initial} and \ref{algas: main} of Algorithm \ref{alg:as}. In order to implement the adaptive sieving reduction strategy more efficiently, we will define some linear operators which allow us to reformulate the original problem into a neater form by removing the zero components. After that, we shall derive an inexact augmented Lagrangian algorithm (ALM) for solving the dual of the original problem and a semismooth Newton algorithm (SSN) for solving its inner problems. We also analyze the global linear convergence rate and the asymptotically superlinear convergence rate of the proposed algorithm.

After introducing a matrix $W \in \mathbb{R}^{p \times n}$, for any $\lambda \in \{\lambda_i, i = 0, 1, \cdots, k\}$,  we can rewrite the original problems in Steps 1 and 10 as follows,
\begin{equation}\label{pbm:o}
	\min_{\Omega , W} \ \left\{\frac{1}{2} \norm{W}^2_F - \langle \Omega , I_p \rangle + \lambda \norm{\Omega}_{1,\text{off}} \mid W - \Omega A =0, \, \Omega \in S_{\bar I (\lambda)}\right\},
\end{equation}
where $S_{\bar I (\lambda)} := \{\Omega \in S^p \mid \Omega_{ij} = 0,\, (i,j) \in \bar I (\lambda)\}$.  {Note  that}
the number of nonzero components in the upper triangle (including the diagonal) of $\Omega$ is less than or equal to $t := (|I(\lambda)| + p) / 2$. Since we are dealing with a problem that is designed for a sparse estimation, $t$ will not be very large.
In practical applications, in order to ensure the statistical validity of the estimated solution, $t$ is generally no greater than $p + n(n-1)/2$.

{
We will then describe how to solve the problem \eqref{pbm:o} efficiently with using the constraint $\Omega \in S_{\bar I (\lambda)}$. As the zero value will not contribute to the computation, we will construct a linear operator to remove the zero components in $\Omega$ and preserve their index information. Using this operator, we can reformulate the problem \eqref{pbm:o} as a problem with a much smaller dimension. Then, when solving the reformulated problem, the operation widely used in the algorithm described later is the multiplication of two $t$-dimensional vectors, while the multiplications of $p\times n$ matrices are required when directly solving the original problem.
Thus, the complexity of the multiplication reduces from $O(p^2n)$ to $O(tn)$.
Now, let us start with the preparation.}
Define a linear operator {$L_{\bar I(\lambda)}: \mathbb{S}^p \to \mathbb{R}^t$} as follows: for any $\Omega \in \mathbb{S}^p$, let $\omega = L_{\bar I(\lambda)}(\Omega)$ be the vector of the remaining components of $\text{svec}(\Omega)$ with those components $\Omega_{ij},\,(i,j)\in {\bar I(\lambda)}$ being removed, where $\text{svec}(\Omega)$ is the vectorized components of the upper triangular (including the diagonal) of $\Omega$. {For ease of notation, we will use $L$ to represent $L_{\bar I(\lambda)}$ throughout this paper.}
Correspondingly, we define the generalized inverse $L^\dagger: \mathbb{R}^t \to \mathbb{S}^p$ of $L$ as follows: for any vector $\omega \in \mathbb{R}^t$, let $\Omega = L^\dagger (\omega)$ be a $p \times p$ symmetric matrix, where all the components $\Omega_{ij},\,(i,j) \in \bar I(\lambda)$ are equal to $0$ and the remaining vectorized upper triangular (including the diagonal) components are exactly $\omega$.
For later use, we further denote
\[ e_1 := L(I_p);\quad e_2 := 2L(E - I_p); \quad e_3 := e_1 + e_2 /4; \quad e_4 := e_1 + e_2,\]
where $E$ is the p-dimensional all-one matrix.
Let $L^*$ and $(L^\dagger)^*$ denote the adjoints of $L$ and $L^\dagger$, respectively.
For any vector $v \in \mathbb{R}^t$, by the definition of the adjoint operator, we have $\langle L(\Omega), v \rangle = \langle \Omega , L^*(v) \rangle$.   {Then we} immediately have
\[ L^*(v) = L^\dagger (v \circ e_3).
\]
Similarly, for any matrix $V \in \mathbb{S}^p$, we know
\[ (L^\dagger)^* (V) = L(V) \circ e_4.\]
We also define another linear operator ${S}: \mathbb{R}^{p \times n} \to \mathbb{R}^t$ by $S(Y): = \frac{1}{2} L(YA^T + AY^T), \ \forall \, Y \in \mathbb{R}^{p \times n}$, whose adjoint $S^* : \mathbb{R}^t \to \mathbb{R}^{p \times n}$ is {given by}
\[S^*(v) = L^*(v)A, \quad \forall \, v \in \mathbb{R}^t.\]

Then, we put a negative sign in front of the objective function of the problem (\ref{pbm:o}). {Following that, we rewrite this problem with the operators defined above and introducing another variable $x \in \mathbb{R}^t$ with $x = \omega \circ e_4$, $\omega = L(\Omega)$.} Consequently, we have the following equivalent problem:
\[ {\bf (P)} \quad  \max_{x \in \mathbb{R}^t}  \left\{ - \left( \Gamma(x) := \frac{1}{2}\norm{S^*(x)}^2_F - \langle x, e_1 \rangle + \lambda / 2\norm{x \circ e_2}_1 \right) \right\}, \]
whose dual is
\begin{equation}\nonumber
	{\bf (D)} \quad \min_{Y \in \mathbb{R}^{p \times n}, \, z \in \mathbb{R}^t  }   \left\{ \frac{1}{2} \norm{Y}^2_F + \delta_{b_\lambda} (z) \mid  S(Y) + z  = e_1 \right\},
\end{equation}
where $\delta_{b_\lambda}$ is an indicator function with $b_\lambda = \{z \in \mathbb{R}^t \mid e_1 \circ z = 0, \, |z_i| \leq \lambda, \, i = 1,\cdots, t \}$.

The KKT system {corresponding to $(\boldsymbol{D})$ is}
\begin{equation}\label{eq:KKT}
	\begin{cases}
		Y-S^*(x)=0, \\
		0\in\partial(\delta_{b_\lambda}(z))- x, \\
		S(Y) + z - e_1=0,
	\end{cases}
	(Y, z, x) \in\mathbb{R}^{p\times n}\times\mathbb{R}^t\times\mathbb{R}^t.
\end{equation}
As  mentioned earlier, we solve $(\boldsymbol{P})$ by solving its dual, provided that the KKT system is nonempty \citep*[Corollary 28.3.1]{Rockafellar1970}. 	%Assume that the solution set to $(\boldsymbol{P})$ is nonempty.
For any $(Y, z, x) \in\mathbb{R}^{p\times n}\times\mathbb{R}^t\times\mathbb{R}^t$,  the Lagrangian function for $(\boldsymbol{D})$ is $	\mathcal{L}(Y,z,x)= \frac{1}{2} \norm{Y}^2_F + \delta_{b_\lambda} (z) - \left \langle S(Y) + z - e_1, x \right \rangle.$
{For any given constant} $\sigma>0$, the augmented Lagrangian function associated with $(\boldsymbol{D})$ is given by, $\forall \, Y\times z\times x \in\mathbb{R}^{p\times n}\times\mathbb{R}^t\times\mathbb{R}^t$,
\begin{equation*}
	\mathcal{L}_{\sigma}(Y,z;x) = \frac{1}{2} \norm{Y}^2_F + \delta_{b_\lambda} (z) - \left \langle S(Y) + z - e_1, x \right \rangle + \frac{\sigma}{2} \left\| {S(Y) + z  - e_1}\right\| ^2_F.
\end{equation*}

Next, we introduce the semismooth Newton augmented Lagrangian method in the following two subsections.

%%%%%%%%%%%%%%%%%
\subsection{An inexact augmented Lagrangian algorithm}\label{subsec:alm}

In this subsection, we will develop an inexact ALM for solving $(\boldsymbol{D})$, and establish the global linear convergence rate and asymptotically superlinear convergence rate of the proposed algorithm. We remark that some standard stopping criteria are used for analyzing the convergence rate of our algorithm here {since the inner problem is solved inexactly. A semismooth Newton algorithm to solve the inner problems of the inexact ALM together with the implementable stopping criteria will be introduced in the next subsection.}

\begin{algorithm}[htb]
	\centering
	\caption{An inexact augmented Lagrangian method for solving ({\bf D}).}
	\label{alg:main}
	\begin{algorithmic}[1]
		\REQUIRE ~~\\ %nput
		A given parameter $\sigma_0 > 0$;\\
		An initial point $(Y^0,z^0,x^0)\in \mathbb{R}^{p\times n} \times \mathbb{R}^{t} \times \mathbb{R}^{t}$;
		An integer $k = 0$;
		\ENSURE ~~\\ %Output
		Approximate optimal solution $(\widehat{Y}, \hat z, \hat x)$;
		\WHILE{Stopping  criteria are not satisfied}
		\STATE Compute
		\begin{equation}\label{eq:subprob}
			(Y^{k+1},z^{k+1}) \approx \arg\min \{\Psi_k (Y,z):=\mathcal{L}_{\sigma_k}(Y,z; x^k)\};
		\end{equation}
		\STATE Compute
		$x^{k+1} = x^k - \sigma_k  \left( S(Y^{k+1})+ z^{k+1} - e_1 \right)  $
		and update $\sigma_{k+1} \uparrow \sigma_\infty\leq \infty$;
		\STATE Update $\widehat{Y} = Y^{k+1}, \,\hat z = z^{k + 1}, \, \hat x = x^{k+1}$;
		\STATE $k++$;
		\ENDWHILE
	\end{algorithmic}
\end{algorithm}

Details of the inexact ALM are provided in Algorithm \ref{alg:main}. For later use, we define two maximal monotone operators $\cT_\Gamma$ and $\cT_l$ as follows
\begin{equation}\nonumber
	\cT_\Gamma(x):=\partial \Gamma(x), \quad \cT_\mathcal{L}(Y,z,x):=\{(Y',z',x')\mid(Y',z',-x')\in\partial \mathcal{L}(Y,z,x)\}.
\end{equation}
{To establish the global linear convergence rate of Algorithm \ref{alg:main}, we shall analyze that $\cT_\Gamma$ and $\cT_l$ globally satisfy the error bound condition given in \citep*{Li2018}. Since the objective function $\Gamma$ in $(\boldsymbol{P})$ is PLQ, we know that $\cT_\Gamma$ and $\cT_\Gamma^{-1}$ are both polyhedral by \citep[Proposition 2.2.4]{Sun1986}. It then follows from Lemma \ref{lm:gc} that $\cT_\Gamma$ satisfies the condition \eqref{eq:glr} for $\cT_\Gamma$ with some modulus $\kappa_\gamma \geq 0$ when the optimal solution set $\cO := \cT_\Gamma^{-1}(0)$ of $(\boldsymbol{P})$ is nonempty. In addition, following a similar argument, one can easily obtain that the operator $\cT_l$ satisfies the condition \eqref{eq:glr} with some modulus $\kappa_l \geq 0$ when the KKT system associated with $(\boldsymbol{P})$ and $(\boldsymbol{D})$ is nonempty.}

Now, we are ready to proceed with the analysis of the convergence properties of Algorithm \ref{alg:main}.
Since we solve the inner problem (\ref{eq:subprob}) inexactly, we use the following standard stopping criterion introduced in \citep{Rockafellar1976} to obtain an approximated solution:
\begin{equation}\label{criteria1}
	\Psi_k (Y^{k+1},z^{k+1})-{\rm inf}\Psi_k\leq\epsilon_k^2/2\sigma_k,\ \sum_{k=0}^\infty\epsilon_k \leq \alpha_\epsilon <\infty.
\end{equation}
Besides, for analyzing the convergence rate, we need to introduce  the following two stopping criteria \citep{Rockafellar1976}:
\begin{align*}
	&(S1) \quad \Psi_k (Y^{k+1},z^{k+1})-{\rm inf}\Psi_k\leq (\theta^2_k/2\sigma_k)\norm{x^{k+1}-x^k}^2,\ \sum_{k=0}^\infty\theta_k<+\infty, \\
	&(S2) \quad {\rm dist}(0,\partial\Psi_k(Y^{k+1},z^{k+1}))\leq (\theta'_k/\sigma_k)\norm{x^{k+1}-x^k},\ 0\leq\theta'_k\to 0.
\end{align*}
{ Further discussions on how to implement these criteria into our algorithm in solving the subproblem will be provided at the end of the next subsection.}
Based on \citep{Rockafellar1976, Li2018, zhang2020efficient} and Lemma \ref{lm:gc}, the following theorem establishes  convergence results for the primal sequence $\{x^k\}$ and the dual sequence $\{(y^k,\, z^k)\}$ generated by the inexact ALM.

\begin{theorem}\label{thm:MC}
	Suppose that the solution set to $(\boldsymbol{P})$ is nonempty and the initial point $x^0 \in \mathbb{R}^t$ satisfies $\dist (x^0, \cO) \leq r - \alpha_\epsilon$, where $\alpha_\epsilon$ is given in (\ref{criteria1}). Let $\{(Y^k,z^k,x^k)\}$ be any infinite sequence generated by Algorithm \ref{alg:main} satisfying stopping criteria {\rm (\ref{criteria1})} and $(S1)$. Then, the sequence $\{x^k\}$ converges to an optimal solution $x^* \in \cO$, and for all $k \geq 0$,
	\begin{equation}\label{eq:MC1}
		{\rm dist}(x^{k+1},\cO)\leq\zeta_k{\rm dist}(x^k,\cO),
	\end{equation}
	where $\zeta_k=(\kappa_\gamma(\theta_k+1)(\kappa_\gamma^2+\sigma^2_k)^{-1/2}+\theta_k)(1-\theta_k)^{-1}$ and $\zeta_k\to\zeta_{\infty}=\kappa_\gamma(\kappa_\gamma^2+\sigma^2_{\infty})^{-1/2}<1$ when $k\to\infty$. In addition, the sequence $\{Y^k,z^k\}$ converges to the unique optimal solution $(Y^*,z^*)$ to $(\boldsymbol{D})$. Furthermore, if the stopping criterion $(S2)$ is satisfied, then for all $k \geq 0$,
	\begin{equation}\label{eq:condthm2}
		\norm{(Y^{k+1},z^{k+1})-(Y^*,z^*)} \leq \zeta'_k\norm{x^{k+1}-x^k},
	\end{equation}\nonumber
	where $\zeta'_k=\kappa_l(1+\theta'_k)/\sigma_k$ and $\zeta'_k\to\zeta'_{\infty}=\kappa_l/\sigma_{\infty}$ as $k\to\infty$.
\end{theorem}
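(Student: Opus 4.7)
The plan is to follow the classical framework of \cite{Rockafellar1976} as refined in \cite{Li2018}, which identifies the inexact ALM applied to $(\boldsymbol{D})$ with an inexact proximal point iteration on the maximal monotone operator $\cT_\Gamma = \partial \Gamma$ of the dual problem $(\boldsymbol{P})$. Under this identification, the multiplier update $x^{k+1} = x^k - \sigma_k(S(Y^{k+1}) + z^{k+1} - e_1)$ is an inexact evaluation of the resolvent $(I + \sigma_k \cT_\Gamma)^{-1}(x^k)$, and the residual is controlled precisely by how inexactly the inner problem for $(Y^{k+1}, z^{k+1})$ is solved. The first step is to translate each of the stopping criteria \eqref{criteria1}, $(S1)$, and $(S2)$ into the corresponding bounds on $\|x^{k+1} - (I + \sigma_k \cT_\Gamma)^{-1}(x^k)\|$ and on $\dist(0, \cT_\Gamma(x^{k+1}))$, using KKT duality between $(\boldsymbol{P})$ and $(\boldsymbol{D})$ together with the definition of $\Psi_k$.

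The second step is to apply the error bound for $\cT_\Gamma$. Because $\Gamma$ is piecewise linear--quadratic, the graph of $\cT_\Gamma$ is polyhedral, so Lemma \ref{lm:gc} gives $\dist(x, \cO) \leq \kappa_\gamma \dist(0, \cT_\Gamma(x))$ whenever $\dist(x, \cO) \leq r$. The initial hypothesis $\dist(x^0, \cO) \leq r - \alpha_\epsilon$ combined with the summability $\sum \epsilon_k \leq \alpha_\epsilon$ from \eqref{criteria1} should be used in an induction to keep every iterate inside this $r$-ball, so the error bound applies at each step. Once this invariant is secured, combining the nonexpansivity of the exact resolvent with the error bound and the $(S1)$ residual estimate produces exactly $\dist(x^{k+1}, \cO) \leq \zeta_k \dist(x^k, \cO)$ with $\zeta_k$ in the stated form, and the limit $\zeta_\infty = \kappa_\gamma / \sqrt{\kappa_\gamma^2 + \sigma_\infty^2} < 1$ follows from $\sigma_\infty > 0$.

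For the dual iterates I would first verify that $(\boldsymbol{D})$ has a unique optimum: the objective is strongly convex in $Y$, and the constraint $z = e_1 - S(Y)$ together with $z \in b_\lambda$ then pins down $z$. Given $x^k \to x^*$, standard inexact-PPA arguments on the KKT operator $\cT_\mathcal{L}$ force $(Y^k, z^k) \to (Y^*, z^*)$. For the sharper estimate \eqref{eq:condthm2} under $(S2)$, I would invoke the error bound for $\cT_\mathcal{L}$: since $\mathcal{L}$ is piecewise linear--quadratic in $(Y, z, x)$ the graph of $\cT_\mathcal{L}$ is again polyhedral, so Lemma \ref{lm:gc} yields the modulus $\kappa_l$; combining this with the $(S2)$ bound on $\dist(0, \partial \Psi_k(Y^{k+1}, z^{k+1}))$ and the relation $x^{k+1} - x^k = -\sigma_k(S(Y^{k+1}) + z^{k+1} - e_1)$ delivers $\norm{(Y^{k+1}, z^{k+1}) - (Y^*, z^*)} \leq \zeta'_k \norm{x^{k+1} - x^k}$ with the claimed $\zeta'_k$.

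The main obstacle I expect is the careful bookkeeping in the second paragraph: maintaining the invariant that all $x^k$ remain in the error-bound region while the inexactness tolerances $\epsilon_k$ accumulate, and cleanly tracking how the summable errors $\{\epsilon_k\}, \{\theta_k\}, \{\theta'_k\}$ propagate into the asymptotic constants $\zeta_\infty$ and $\zeta'_\infty$. The remainder is a transcription of the now-standard inexact ALM convergence proofs into the notation of our operators $S$ and $L$ and the polyhedral set $b_\lambda$.
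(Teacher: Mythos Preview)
Your proposal is correct and follows essentially the same route as the paper: identify the inexact ALM with an inexact proximal point iteration on $\cT_\Gamma$, use the PLQ structure of $\Gamma$ (hence polyhedrality of the graph of $\cT_\Gamma$ and $\cT_\mathcal{L}$) together with Lemma~\ref{lm:gc} to obtain the error bounds with moduli $\kappa_\gamma$ and $\kappa_l$, then feed these into the Rockafellar--Li framework to derive \eqref{eq:MC1} and \eqref{eq:condthm2}. The paper's proof is simply more terse, citing directly \cite[Theorem~5]{Rockafellar1976}, \cite[Lemma~4.1]{zhang2020efficient}, and \cite[Theorem~3.3]{Li2018} rather than unpacking the resolvent and induction bookkeeping you outline.
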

\begin{proof}
	Since the solution set to $(\boldsymbol{P})$ is nonempty, the optimal value of $(\boldsymbol{P})$ is finite. Besides, the effective domain of the quadratic function in the objective function of $(\boldsymbol{P})$ is $\mathbb{R}^t$ and the objective function in $(\boldsymbol{D})$ is strongly convex. Then, according to Fenchel's duality theorem \citep*[Corollary 31.2.1]{Rockafellar1970}, the solution set to $(\boldsymbol{D})$ is nonempty and the optimal values of $(\boldsymbol{P})$ and $(\boldsymbol{D})$ are equal to each other and also finite. This implies that the KKT system associated with $(\boldsymbol{P})$ and $(\boldsymbol{D})$ is nonempty. The uniqueness of the optimal solution of $(\boldsymbol{D})$ is obtained directly by the strong convexity of $(\boldsymbol{D})$. Then, from \citep*[Theorem 4]{Rockafellar1976}, we have that the sequence $\{(Y^k,z^k)\}$ is bounded.
	The results under stopping criteria {\rm (\ref{criteria1})} and $(S1)$ can be obtained directly from Lemma \ref{lm:gc}, Lemma 4.1 in \citep*{zhang2020efficient}, and Theorem 5 in \citep{Rockafellar1976}. The remaining result follows from Theorem 3.3 in \citep*{Li2018}. This completes the proof.
\end{proof}

\begin{remark}
	{Suppose that $\{\theta_k\}$ in $(S1)$ and $\{\theta'_k\}$ in $(S2)$ are both nonincreasing for all $k \geq 0$. Since $\{\sigma_k\}$ is nondecreasing, we know that $\{\zeta_k\}$ and $\{\zeta'_k\}$ are nonincreasing. Thus, if we choose $\sigma_0$ large enough such that $\zeta_0, \zeta'_0 < 1$, we have $\zeta_k , \zeta'_k <1$, $\forall \, k \geq 0$.} Then, from Theorem \ref{thm:MC}, we know that Algorithm \ref{alg:main} enjoys a global linear convergence rate.
	If $\sigma_{\infty} = +\infty$, from (\ref{eq:MC1}), the sequence $\{x^k\}$ generated by Algorithm \ref{alg:main} will converge Q-superlinearly. Combing this with (\ref{eq:condthm2}), we know that the sequence $\{(y^k,z^k)\}$ converges R-superlinearly. Thus, according to Theorem \ref{thm:MC}, we can say that our algorithm converges asymptotically superlinearly.
\end{remark}

%%%%%%%%%%%%%%%%%%%%%%%%%%%%%%%%%%%%%%%%%
\subsection{A semismooth Newton algorithm for solving the subproblem in Algorithm \ref{alg:main}}\label{subs:ssn}
In this subsection, we will develop a semismooth Newton (SSN) algorithm for solving (\ref{eq:subprob}), and  introduce the implementations of the stopping criteria used in the previous subsection. Given $\sigma> 0 $ and $x \in\mathbb{R}^t$, the problem is to find an optimal solution for ${\rm min}_{Y,z} \Psi(Y,z),$ $\forall \, (Y,z)\in\mathbb{R}^{p\times n}\times \mathbb{R}^t$. Since $\Psi(\cdot)$ is strongly convex, there is a unique optimal solution $(\widebar Y,\bar z)\in\mathbb{R}^{p\times n}\times \mathbb{R}^t$ and it can be obtained by solving $\mathop{\min}\limits_Y\{\mathop{\inf}\limits_{z}\Psi(Y, z)\}$. For any $Y\in\mathbb{R}^{p\times n}$, we first denote $\psi(Y) := \inf_{z} \Psi (Y,z)$. That is
\begin{align*}
	\psi(Y) =\frac{1}{2} \norm{Y}^2_F - \frac{1}{2\sigma} \left\| x \right\|^2_F + \sigma \inf_{z} \left\{ \sigma^{-1} \delta_{b_\lambda} (z) + \frac{1}{2} \left\|z - (x /\sigma - S(Y) + e_1 )\right\|^2_F\right\}.
\end{align*}
Thus, we can obtain $(\overline Y,\bar z)$ simultaneously by
\begin{equation}\label{eq:subm}
	\overline Y= \arg\min \psi(Y), \qquad
	\bar z = \textup{Prox}_{\delta_{b_\lambda}}(x /\sigma - S(\overline Y) + e_1).
\end{equation}

Define $f(Y) := x /\sigma - S(Y) + e_1$, $\forall \, Y \in \mathbb{R}^{p \times n}$ and $\mathcal{G}(v) := \inf_{z} \sigma^{-1} \delta_{B_\lambda} (z) + \frac{1}{2} \norm{z - v}^2_F$, $\forall \, v \in \mathbb{R}^t$.
Notice that  $\nabla\mathcal{G}(\cdot)$ is continuously differentiable \citep[Theorem 2.26]{rockafellar2009variational}. We then have
\begin{equation}
	\nabla \psi (Y) = Y - \sigma S^*\left(\nabla \mathcal{G}(f(Y))\right),
\end{equation}
where $\nabla \mathcal{G}(f(Y)) = \textup{Prox}_\phi(f(Y))$
with $\phi(v) =\lambda / 2 \norm{v \circ e_4}_1$, $\forall \, v \in \mathbb{R}^t$.
Therefore, $\overline Y$ can be obtained by finding a root of
\begin{equation}\label{Dfy}
	\nabla \psi (Y) = 0.
\end{equation}
For later use, we define $\hat{\partial}^2 \psi(Y)$ as follows:
\begin{equation}\nonumber
	V\in\hat{\partial}^2 \psi(Y)\ \Leftrightarrow \ \exists\, u \in \cU \ {\rm such \ that} \ V(D)\in \partial ^2 \psi(Y)(D), \quad \forall D\in\mathbb{R}^{p\times n},
\end{equation}
where ${\partial}^2 \psi(Y)$ is the generalized Hessian of $\psi$ at $Y$, and $\cU := \partial \textup{Prox}_\phi (f(Y))$ is the Clarke subdifferential of $\textup{Prox}_\phi (f(\cdot))$ at $Y$ \citep{Clarke1990}.
Besides,
\begin{equation}\label{eq:partalhessian}
	\hat\partial ^2 \psi(Y)(D) = \{D + \sigma L^\dagger (S(D) \circ u \circ e_3)A \mid u \in \cU\},\quad \forall D\in\mathbb{R}^{p\times n}.
\end{equation}
It then follows from \citep*[Proposition 2.3.3 and Theorem 2.6.6]{Clarke1990} that $\partial ^2 \psi(Y)(D) = \hat\partial ^2 \psi(Y)(D)$, $\forall D\in\mathbb{R}^{p\times n}$.
Now, we can introduce our SSN algorithm for solving (\ref{Dfy}) in Algorithm \ref{alg:sub}.

\begin{algorithm}[htb]
	\centering
	\caption{A semismooth Newton algorithm for solving (\ref{Dfy}).}
	\label{alg:sub}
	\begin{algorithmic}[1]
		\REQUIRE ~~\\ %nput
		Given parameters $\mu \in (0, 1/2)$, $\bar{\eta} \in (0, 1)$, $\tau \in (0,1]$, and $\delta \in (0, 1)$;\\
		An initial point $ Y^0\in \mathbb{R}^{p\times n}$ and a given $x \in \mathbb{R}^t$; \\
		An integer $j = 0$;
		\ENSURE ~~\\ %Output
		Approximate optimal solution $\widehat{Y}$;
		\WHILE{Stopping criteria are not satisfied}
		\STATE  Choose $u_j \in \partial \textup{Prox}_{\phi}(x/\sigma - S(Y^j) + e_1) $. For $D \in \mathbb{R}^{p \times n}$, let $ V_jD :=D+\sigma L^{\dagger} (S(D)\circ u_j \circ e_3)A$. Solve the equation
		\begin{equation}\label{eq:CGm}
			V_jD+\nabla \psi (Y^j)=0
		\end{equation}
		by the conjugate gradient algorithm to find $D^j$ such that
		\begin{equation}\label{eq:tau}
			\norm{V_jD^j+\nabla \psi(Y^j)} \leq \min ( \bar{\eta},\norm{\nabla \psi(Y^j)}^{1+\tau} );
		\end{equation}
		\STATE (Line search) Set $\alpha_j=\delta^{m_j}$, where $m_j$ is the first nonnegative integer m such that
		\begin{equation}
			\psi(Y^j+\delta^{m_j}D^j) \leq \psi(Y^j)+\mu\delta^{m_j} \left \langle \nabla \psi(Y^j),D^j \right \rangle;
		\end{equation}
		\STATE Set $Y^{j+1}=Y^j+\alpha_jD^j$ and update $\hat Y = Y^{k+1}$;
		\STATE $j++$;
		\ENDWHILE
	\end{algorithmic}
\end{algorithm}

{It is well known that continuous piecewise affine functions and twice continuously differentiable functions are all strongly semismooth everywhere. Besides, the composition preserves the (strongly) semismooth \citep{fischer1997solution}. Since $\textup{Prox}_{\lambda\norm{\cdot}_{1}}(\cdot)$ is Lipschitz continuous piecewise affine and $S^*$ is differentiable, we know that $\nabla \psi(\cdot)$ is strongly semismooth. Now, we are ready to state the convergence result of SSN in the following theorem.
}

\begin{theorem}\label{convergence-SSN}
	 Let the sequence $\{Y^k\}$ generated by Algorithm \ref{alg:sub}. Then $\{Y^k\}$ converges to the unique optimal solution $\widebar Y\in\mathbb{R}^{p\times n}$ of the problem in {\rm (\ref{eq:subm})} and the convergence is of order $1+\tau$, that is
	\begin{align*}
		\norm{Y^{j+1}-\widebar Y}=O \left(\norm{Y^j-\widebar Y}^{1+\tau}\right),
	\end{align*}
	{where $\tau$ is defined in \eqref{eq:tau}.}
\end{theorem}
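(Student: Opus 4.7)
The plan is to establish Theorem \ref{convergence-SSN} by following the now-standard three-stage template for globalized semismooth Newton methods: first show that each Newton subproblem is well-posed and produces a descent direction, then prove global convergence of the line-search iterates to $\widebar Y$, and finally upgrade the local convergence to order $1+\tau$ using strong semismoothness of $\nabla\psi$ together with the inexact Newton condition \eqref{eq:tau}.

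First I would analyze the operator $V_j$. From \eqref{eq:partalhessian}, $V_j D = D + \sigma L^\dagger(S(D)\circ u_j \circ e_3) A$ with $u_j\in\partial\textup{Prox}_\phi(\cdot)$. Because $\textup{Prox}_\phi$ is the componentwise soft-threshold in the rescaled coordinates, every element of $u_j$ lies in $[0,1]$, so the map $D\mapsto \sigma L^\dagger(S(D)\circ u_j\circ e_3)A$ is self-adjoint and positive semidefinite (its expression via $S$ and $S^*$ with diagonal nonnegative middle factor). Consequently $\langle D, V_j D\rangle \geq \|D\|_F^2$ for every $D$, so $V_j$ is symmetric positive definite with $\lambda_{\min}(V_j)\ge 1$ uniformly in $j$. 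This immediately gives that CG solves \eqref{eq:CGm} and, together with \eqref{eq:tau} for small enough $\bar\eta$, yields the descent property $\langle \nabla\psi(Y^j),D^j\rangle \le -c\,\|D^j\|_F^2$ for some $c>0$. Hence the Armijo line search in Step 3 of Algorithm \ref{alg:sub} is well-defined and produces finite $m_j$.

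Second, for global convergence, strong convexity of $\psi$ implies its level sets are compact, so the iterates $\{Y^j\}$ lie in a compact set and any subsequence has a cluster point. A standard Zoutendijk-type argument for Armijo line search, combined with the uniform lower bound $\lambda_{\min}(V_j)\ge 1$, forces $\nabla\psi(Y^{j})\to 0$. Strong convexity then gives $Y^j\to \widebar Y$, the unique root of \eqref{Dfy}. The main subtlety here is not a deep obstacle but must be handled carefully: one must check that the residual requirement \eqref{eq:tau}, the PSD bound on $V_j$, and the sufficient-decrease inequality together yield Zoutendijk's summability $\sum_j \|\nabla\psi(Y^j)\|^2 < \infty$; this follows because the angle between $D^j$ and $-\nabla\psi(Y^j)$ is uniformly bounded away from $\pi/2$.

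Third and crucially, I would establish the local rate. Once $Y^j$ enters a neighborhood of $\widebar Y$, strong semismoothness of $\nabla\psi$ (justified in the paragraph preceding the theorem from the facts that $\textup{Prox}_{\lambda\|\cdot\|_1}$ is piecewise affine and $S^*$ is linear, and compositions preserve strong semismoothness) gives
\begin{equation*}
\nabla\psi(Y^j) - \nabla\psi(\widebar Y) - V_j(Y^j - \widebar Y) = O\bigl(\|Y^j - \widebar Y\|_F^{2}\bigr),
\end{equation*}
for any $V_j\in\hat\partial^2\psi(Y^j)$. Using $\nabla\psi(\widebar Y)=0$, the inexactness bound \eqref{eq:tau}, and $V_j^{-1}$ being uniformly bounded (by the spectral lower bound above), the standard Dembo--Eisenstat--Steihaug style manipulation yields
\begin{equation*}
\|Y^j + D^j - \widebar Y\|_F = O\bigl(\|Y^j - \widebar Y\|_F^{1+\tau}\bigr).
\end{equation*}
Next, I would show unit step acceptance: a Taylor expansion of $\psi$ along $D^j$, combined with $\langle\nabla\psi(Y^j),D^j\rangle \approx -\langle D^j,V_j D^j\rangle$ and $\mu<1/2$, implies that for $j$ large enough $\alpha_j=1$ satisfies the Armijo test. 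Combining unit-step acceptance with the previous display yields the claimed convergence of order $1+\tau$. The hardest step, in my view, will be verifying uniform boundedness of all relevant constants (in $V_j^{-1}$ and in the semismoothness remainder) along the sequence; this is where the uniform spectral bound $\lambda_{\min}(V_j)\ge 1$ coming from the $\tfrac12\|Y\|_F^2$ term is essential and makes the argument clean.
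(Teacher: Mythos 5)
Your proposal is correct and follows essentially the same route as the paper: the paper verifies exactly the three ingredients you identify (that $V_j\in\hat\partial^2\psi(Y^j)$, that every $V_j$ is self-adjoint positive definite — with the uniform lower bound coming from the identity term, since $V_j = I + \sigma S^*\,\textup{Diag}(u_j)\,S$ with $u_j\ge 0$ — and that $\nabla\psi$ is strongly semismooth) and then invokes the proof of Theorem 3.5 of \cite{Zhao2010}, which is precisely the globalization-plus-local-rate template you have written out in detail.
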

\begin{proof}
	From \eqref{eq:partalhessian}, we know that $V_j\in\hat{\partial}^2 \psi(Y^j)$, $\forall \, j \geq 0$. Besides, all $V_j, \, j \geq 0$ are self-adjoint positive definite. Because $\nabla \psi(\cdot)$ is strongly semismooth, the stated conclusion can be derived by following the proofs of Theorem 3.5 in \citep{Zhao2010}.
\end{proof}

Theorem \ref{convergence-SSN} shows that the convergence rate of SSN is of order $1 + \tau$. This implies that SSN can converge quadratically if $\tau = 1$. However, in practice, we often set $\tau$ to be smaller, such  as  $0.1$ or $0.2$, for computational considerations.

We then discuss how to implement the stopping criteria (\ref{criteria1}), (S1), and (S2) into Algorithm \ref{alg:sub} to guarantee the convergence results as discussed in Section \ref{subsec:alm}. Since $\psi$ is strongly convex with a parameter $\tau_c > 0$, we can obtain
\begin{align*}
	\Psi_k(Y^{k+1},z^{k+1})-{\rm inf}\Psi_k=\psi_k(Y^{k+1})-{\rm inf}\psi_k\leq 1/(2\tau_c)\norm{\nabla\psi_k(Y^{k+1})}^2
\end{align*}
and $(\nabla\psi_k(Y^{k+1}),0)\in\partial\Psi_k(Y^{k+1},Z^{k+1})$. As a result, in practical  implementation, we can replace the stopping criteria (\ref{criteria1}), (S1) and (S2) to the following implementable criteria
\begin{equation}\nonumber
	\left\{
	\begin{aligned}
		&     \norm{\nabla\psi_k(Y^{k+1})}\leq\sqrt{\tau_c/\sigma_k}\epsilon_k,\ \sum_{k=0}^\infty\epsilon_k <\infty,      \\
		&      \norm{\nabla\psi_k(Y^{k+1})}\leq \sqrt{\tau_c\sigma_k}\theta_k\norm{\frac{1}{2}Y^{k+1}A^T+\frac{1}{2}A(Y^{k+1})^T+z^{k+1}-c},\quad \sum_{k=0}^\infty\theta_k<+\infty,      \\
		&      \norm{\nabla\psi_k(Y^{k+1})}\leq \theta'_k\norm{\frac{1}{2}Y^{k+1}A^T+\frac{1}{2}A(Y^{k+1})^T+z^{k+1}-c},\quad 0\leq\theta'_k\to 0.
	\end{aligned}
	\right.
\end{equation}
In other words, the stopping criteria (\ref{criteria1}), (S1), and (S2) will be satisfied when $\norm{\nabla\psi_k(Y^{k+1})}$ is small enough.

% #################
\section{Some further discussions}\label{sec:dis}

{

Our main purpose in this section is to discuss the computational and memory complexities of MARS compared to some state-of-the-art algorithms to demonstrate its performance. Before that, to facilitate the discussion, we will describe the overall structure of MARS by connecting the three algorithms it contains.

As we mentioned in the introduction, the three algorithms presented in the two sections above comprise MARS. For clarity, an illustration is provided in Figure \ref{fig:mars} to connect the three algorithms. This figure demonstrates how MARS generates a precision matrix solution path. Specifically, to generate a precision matrix solution path, we need to solve the $\ell_1$ penalized D-trace estimator for a collection of given regularization parameters $\{\lambda_i\},\, i = 0,1,\cdots,k$. For each $\lambda_i$,  Algorithm \ref{alg:as} enables the generation of a precision matrix by solving some reduced subproblems with a considerably smaller dimension than the original problem. Following that, Algorithm \ref{alg:main} solves each reduced subproblem, whereas the challenging problems (\ref{eq:subprob}) are solved by Algorithm \ref{alg:sub}. Note that, while MARS is designed to generate a solution path, it is also able to efficiently solve problems with a single regularization parameter (extensive numerical experiments can be found in Section \ref{sec:numexpcom}).
}

{

\begin{figure}[htbp]
	\centering
	\includegraphics[width=1\textwidth]{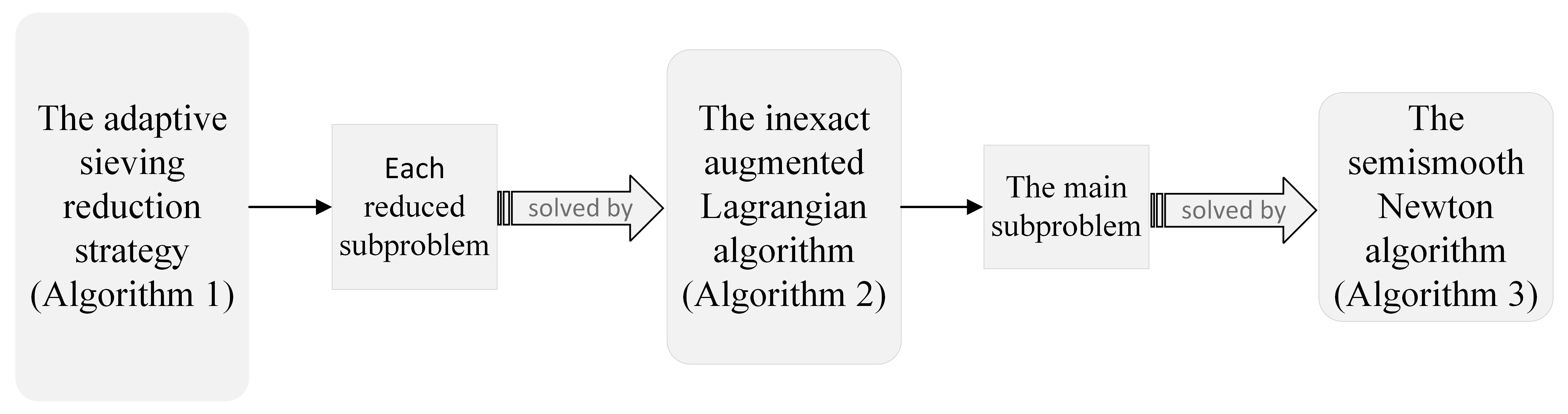}
	\caption{The overall structure of MARS}
	\label{fig:mars}
\end{figure}
}

{
Next, we discuss the convergence properties of the proposed three algorithms. The conclusions reached in Sections \ref{sec:as} and \ref{sec:ialmssn} concerning this are summarized below. In the proof of Theorem \ref{thm:asmain}, we know that Algorithm \ref{alg:as} can terminate in a finite number of iterations; Theorem \ref{thm:MC} shows that Algorithm \ref{alg:main} is asymptotically superlinearly convergent; According to Theorem \ref{convergence-SSN}, the local convergence rate of Algorithm \ref{alg:sub} is at least superlinear (capable of reaching a quadratic convergence rate). For Algorithm \ref{alg:as}, one may be concerned that it needs a large number of iterations in real applications. In fact, from the extensive numerical experiments we have conducted, we found that for each regularization parameter, it usually obtained a solution satisfying the stopping criterion within $3$ steps. For some large regularization parameters, the while loop in Algorithm $1$ stops even after one iteration.
The {sensible} worst-case iteration complexity of the inexact ALM (Algorithm \ref{alg:main}) and the semismooth Newton method (Algorithm \ref{alg:sub}) are still not very clear, although they have been widely used to solve different problems. We leave these two topics for further research in the future.
We point out that, due to the close connection between the proximal point method and ALM \citep{Rockafellar1976}, for the exact ALM, the result in \citep{guler1991convergence} indicates that the convergence rate in terms of the objective function value of the primal problem $\bf (P)$ is
\[
O\left(\frac{1}{\sum_{j=0}^{k-1}\sigma_j}\right),
\]
where $\{\sigma_j\}$ is a given nondecreasing sequence. This suggests that its convergence rate is at least $O(1/k)$ and it can be arbitrarily fast. For the inexact ALM, there is no such result yet to the best of our Knowledge. However, the fast convergence rate has been proved in Theorem \ref{thm:MC}. The numerical experiments presented in the next section also demonstrate the superior performance of the inexact ALM, where Algorithm \ref{alg:main} could reach a satisfied solution within a few iterations (typically no more than $7$).
Besides, the empirical performance of Algorithm \ref{alg:sub} is promising due to its super-fast local convergence rate (up to quadratic). As a piece of evidence, in most of the simulation studies we conducted, the required Newton steps are not larger than $5$. We want to emphasize that the semismooth Newton method is the key to the success of the designed algorithm. On the one hand, the performance of the inexact ALM highly depends on the accuracy of the obtained solution to (\ref{eq:subm}). On the other hand, as we will show in the following paragraph, the per-iteration computational and memory complexities are comparable to or even better than the ones of some first-order methods, such as ADMM.
}

\begin{table}[]
\centering
\caption{{The per-iteration complexities of the most internal algorithms of MARS, SSNAL, iADMM, and eADMM.}}
	\resizebox{\textwidth}{!}{
	\begin{threeparttable}
\begin{tabular}{lllll}
\hline
\multicolumn{1}{|c|}{}           & \multicolumn{1}{c|}{MARS} & \multicolumn{1}{c|}{SSNAL} & \multicolumn{1}{c|}{iADMM} & \multicolumn{1}{c|}{eADMM} \\ \hline
\multicolumn{1}{|c|}{Complexity} & \multicolumn{1}{c|}{$O(4w_mn\sqrt{\kappa_m})$}     & \multicolumn{1}{c|}{$O((3p^2+w_s)n\sqrt{\kappa_s})$}      & \multicolumn{1}{c|}{$O(4p^2n\sqrt{\kappa_i})$}      & \multicolumn{1}{c|}{$O(6p^2n+2n^2p+C_{s})$}
\\ \hline
\end{tabular}
		\begin{tablenotes}
		    {
		    \item[1.] The parameter $w_m = 2t-p$ denotes the cardinality of the nonzero index set $I$ as given in \eqref{pbm:o}, $w_s$ is the number of nonzero elements in $U_j$ (defined in Step \ref{step:ssnalmain} of Algorithm \ref{alg:ssnal}), and $C_s$ is the average per-iteration computational complexity of the SVD decomposition and the construction of two matrices $\Lambda_1$ and $\Lambda_2$ in Algorithm \ref{alg:eADMM}.
			\item[2.] The parameters $\kappa_m$, $\kappa_s$, and $\kappa_i$ are the condition numbers of the matrices in the linear systems solved by the CG algorithm in MARS, SSNAL, and iADMM respectively.
			\item[3.] Under the sparse and high dimensional settings, due to $w_m$ and $w_s$ representing the number of nonzero elements in the generated precision matrix, they both are far smaller than $p^2$.
			}
		\end{tablenotes}
	\end{threeparttable}
    	}
\label{tab:itercom}
\end{table}

{
From the detailed steps of our algorithms, we can see that the vast majority of the computations in Algorithm \ref{alg:as} are contained in solving the reduced subproblems, and almost all of the computations in Algorithm \ref{alg:main} are contained in solving its subproblems (\ref{eq:subprob}). Therefore, based on the connections among the three algorithms shown in Figure \ref{fig:mars}, we may conclude that Algorithm \ref{alg:sub} is responsible for a significant portion of the computations in MARS. Then, for simplicity, we only discuss the per-iteration computational complexity of Algorithm \ref{alg:sub} here. We know that in each iteration of Algorithm \ref{alg:sub}, a Newton equation is solved using the conjugate gradient (CG) algorithm, and a line search step is applied to determine the step size. {Then, according to the convergence result of the CG algorithm in \citep[Chapter 10]{Shewchuk1994introduction}, we know that the computational complexity of CG in each iteration of Algorithm \ref{alg:sub} is $O(4w_m\sqrt{\kappa_m})$, where $w_m = 2t-p$ is the cardinality of the nonzero index set $I$ as given in \eqref{pbm:o}, and $\kappa_m$ is the condition number of the corresponding matrix in the Newton system and it contributes to measuring the maximum iteration number of CG. According to the numerical experiments we performed in Section \ref{sec:num}, the iteration number of CG is usually small (about $5$), and the semismooth Newton algorithm takes a unit step in most of the cases (i.e., the iteration number of the line search seldom exceeds $2$). Thus, as shown by the numerical results, we may conclude that the per-iteration complexity of Algorithm \ref{alg:sub} is $O(4w_mn\sqrt{\kappa_m})$. }
For comparison, the per-iteration complexities of the most internal algorithms for SSNAL, iADMM, and eADMM (see Section \ref{soa} for details) are also listed in Table \ref{tab:itercom}. Recall that, $t$ is the dimension of problem ${\bf (P)}$ and it is far smaller than $p(p+1)/2$ under the sparse and high-dimensional settings. Therefore, MARS can be much more efficient than other algorithms. As can be seen from Algorithms \ref{alg:main} and \ref{alg:sub}, the memory complexity of our algorithm is $O(pn + t)$.  Under the sparse and high-dimensional settings, $pn$ is generally greater than $t$, and so the memory complexity is $O(pn)$ instead of the usual $O(p^2)$ (such as in eADMM).
}

{
From the above discussion, we know that the computational and memory complexities of our algorithm are satisfactory compared to other algorithms, which also explains the promising performance of our algorithm given in the next section.
}

% #################
\section{Numerical experiments}\label{sec:num}

In this section, we will conduct several tests to illustrate the performance of our MARS. For comparison, we consider several popular solvers including scio \citep{Liu2015}, EQUAL \citep{Wang2020}, glasso \citep{Friedman2008}, and QUIC \citep{Hsieh2014}. Since the existing popular methods are mainly first-order methods and the stopping criteria of those algorithms are different from each other and also ours, for better comparison, we will also introduce some other algorithms for solving (\ref{eq:l3}) in Section \ref{soa}. Specifically, we will introduce a second-order algorithm, namely a semismooth Newton augmented Lagrangian method (SSNAL), and two kinds of alternating direction methods of multipliers (ADMM), where one is derived by solving the sub-problem inexactly (iADMM) and the other derived by solving it exactly (eADMM). The numerical experiments here are divided into two parts by the source of the data. The first part is conducted for some random data generated by five given models, and the second part uses data derived from real-world applications.

Before proceeding to the experiments, we provide some explanations about our MARS. For any vector $\nu \in \mathbb{R}^t$,  we can choose the $i$-th component of $u \in \partial \textup{Prox}_{\phi}(\mathcal{\nu})$ as
\begin{equation}\nonumber
	u_{i}=
	\begin{cases}
		0,& \ \text{if } d_i \neq 0\,\&\, |\nu_i|\leq\lambda, \\
		1,& \ \text{otherwise},
	\end{cases}
	\quad i = 1,2,\ldots,t.
\end{equation}
This is because the components of $\nu^p:=\textup{Prox}_{\phi}(\nu)$ can be found by
\begin{equation}\nonumber
	\nu^p_i=
	\begin{cases}
		\textup{sign}(\nu_{i})\cdot \max\{|\nu_i|-\lambda\, ,0\},& \ \text{if } d_i \neq 0,\\
		\nu_i,& \ \text{if } d_i = 0,
	\end{cases}
	\quad i = 1,2,\ldots,t.
\end{equation}
In our MARS, we use the relative KKT residual
\begin{equation}\nonumber
	\eta=\frac{\norm{R(\Omega)}_F}{1+\norm{h\left(\Omega\right)}_F+\norm{\Omega}_F}
\end{equation}
to measure the accuracy of the generated solution $\Omega$. That is, we use $\eta$ to decide whether our MARS should be stopped.
Unless otherwise specified, we set the stopping tolerance to $10^{-4}$ for all the solvers/algorithms except EQUAL in the following experiments. Based on several tests, the stopping tolerance of EQUAL is set to $10^{-6}$. The reason for such an adjustment is that their stopping criterion is determined by the distance between two solutions in two consecutive iterations, and a slightly larger stopping tolerance may cause the generated solution to be too far from the optimal solution set, in terms of the relative KKT residual. Moreover, from the test results in Section \ref{NE:real}, we found that even if the stopping tolerance is set to $10^{-6}$, the relative KKT residuals of some solutions obtained by EQUAL are still not less than $10^{-3}$ (we also try to set the stopping tolerance to $10^{-5}$, but none of the associated relative KKT residuals is less than $5 \times 10^{-2}$. More details can be found from Appendix \ref{app:stopEQUAL}).

All the numerical results are obtained by running Microsoft R Open 4.0.2 on a Windows workstation (Intel(R) Core(TM) i7-10700 CPU @2.90GHz 2.00GHz RAM 32GB). For simplicity, we will use R to represent Microsoft R Open 4.0.2.

\subsection{Some other algorithms}\label{soa}

In this subsection, we will introduce some other algorithms to compare the performance with our MARS. The first algorithm is the SSNAL, which is similar to our algorithm, but does not use the adaptive sieving reduction strategy. For later use, we define a linear operator $\mathcal{S}:\mathbb{R}^{p\times n}\to \mathbb{S}^p$ by $\mathcal{S}(Y)=\frac{1}{2}(YA^T+AY^T),\,\forall \, Y\in\mathbb{R}^{p\times n}$, whose conjugate $\mathcal{S}^*$ is in the form of $\mathcal{S}^*(\Omega)=\Omega A,\,\forall \, \Omega\in\mathbb{S}^p$.
By putting a negative sign in front of the objective function of the original problem (\ref{eq:l3}), we obtain
\begin{equation}\label{eq:Pm}
	\mathop{\max}\limits_{\Omega \in \mathbb{S}^{p}}\left\{ - \left( \frac{1}{2} \norm{\Omega A}^2_F- \langle \Omega, I_p \rangle + \lambda\left\|\Omega\right\|_{1,{\rm off}} \right) \right\},
\end{equation}
whose dual is
\begin{equation}\label{eq:Dm}
	\mathop{\min} \limits_{Y\in\mathbb{R}^{p\times n},Z\in\mathbb{S}^{p}}\left\{ \frac{1}{2}\norm{Y}^2_F +\delta_{B_\lambda}(Z)\mid \frac{1}{2}(YA^T+AY^T)+Z=I_p \right\}.
\end{equation}
Given $\sigma > 0$, the augmented Lagrangian function associated with (\ref{eq:Dm}) is given by
\begin{align*}\nonumber
	\mathcal{L}^m_{\sigma}(Y,Z;\Omega) =\frac{1}{2} \norm{Y}^2_F+\delta_{B_\lambda}(Z)- \left \langle \mathcal{S}(Y)+Z-I_p,\Omega \right \rangle+\frac{\sigma}{2}\left\|\mathcal{S}(Y)+Z-I_p\right\|^2_F.
\end{align*}

\begin{algorithm}[htb]
	\centering
	\caption{An SSNAL for solving (\ref{eq:Dm}).}
	\label{alg:ssnal}
	\begin{algorithmic}[1]
		\REQUIRE ~~\\ %nput
		Given parameters $\sigma_0 > 0$, $\mu \in (0, 1/2)$, $\bar{\eta} \in (0, 1)$, $\tau \in (0,1]$, and $\delta \in (0, 1)$;\\
		An initial point $(Y^0,Z^0,\Omega^0)\in \mathbb{R}^{p\times n} \times \mathbb{S}^{p} \times \mathbb{S}^{p}$;
		An integer $k = 0$;
		\ENSURE ~~\\ %Output
		Approximate optimal solution $(\widehat{Y}, \widehat{Z}, \widehat{\Omega})$;
		\WHILE{Stopping criteria are not satisfied}
		\STATE An integer j = 0; Set an initial value $Y_0 = Y^k$ for the inner loop;
		\WHILE{Stopping criteria of the inner problem are not satisfied}
		\STATE\label{step:ssnalmain}  Choose $U_j \in \partial \textup{Prox}_{\theta}(\Omega^k/\sigma - \mathcal{S}(Y_j) + I_p) $. For $D \in \mathbb{R}^{p \times n}$, let $ V_jD :=D+\sigma  (\mathcal{S}(D)\circ U_j )A$. Solve the equation
		\begin{equation}\label{eq:CG}
			V_jD+\nabla \psi^m (Y_j)=0
		\end{equation}
		by the conjugate gradient algorithm to find $D_j$ such that
		\begin{equation}
			\norm{V_jD_j+\nabla \psi^m(Y_j)} \leq \min ( \bar{\eta},\norm{\nabla \psi^m(Y_j)}^{1+\tau} );
		\end{equation}
		\STATE (Line search) Set $\alpha_j=\delta^{m_j}$, where $m_j$ is the first nonnegative integer m such that
		\begin{equation}
			\psi^m(Y_j+\delta^{m_j}D_j) \leq \psi^m(Y_j)+\mu\delta^{m_j} \left \langle \nabla \psi^m(Y_j),D_j \right \rangle;
		\end{equation}
		\STATE Set $Y_{j+1}=Y_j+\alpha_jD_j$ and update $ Y^{k + 1} = Y_{j+1}$;
		\STATE $j++$;
		\ENDWHILE
		\STATE Compute
		$ Z^{k + 1} = \textup{Prox}_{\delta_{B_\lambda}}( \Omega^k/\sigma_k - \mathcal{S}(Y^{k + 1}) + I_p);$
		\STATE Compute
		$\Omega^{k+1} = \Omega^k - \sigma_k  \left( \mathcal{S}(Y^{k+1})+ Z^{k+1} - I_p \right)  $
		and update $\sigma_{k+1} \uparrow \sigma_\infty\leq \infty$;
		\STATE Update $\widehat{Y} = Y^{k+1}, \,\widehat{Z} = Z^{k+1},\, \widehat{\Omega} = \Omega^{k+1}$;
		\STATE $k++$;
		\ENDWHILE
	\end{algorithmic}
\end{algorithm}

For any $Y\in\mathbb{R}^{p\times n}$, we define $\psi^m(Y) := \inf_{Z} \mathcal{L}^m_{\sigma}(Y,Z;\Omega)$.
%and thus have
%\begin{align*}
%\psi^m(Y) = \frac{1}{2} \norm{Y}^2_F-\frac{1}{2\sigma}\left\| \Omega \right\|^2_F+\sigma \inf_{Z}\left\{\sigma^{-1}\delta_{B_\lambda}(Z)+\frac{1}{2}\left\| \Omega/\sigma -\mathcal{S}(Y)+I_p-Z\right\|^2_F\right\}.
%\end{align*}
Then, we can obtain a unique optimal solution $(\widebar Y,\widebar Z)$ of $\min_{Y,Z} \mathcal{L}^m_\sigma(Y,Z;\Omega)$ by
\begin{equation}\label{eq:sub1}
	\widebar Y= \arg\min \psi^m(Y), \quad
	\widebar Z = \textup{Prox}_{\delta_{B_\lambda}}( \Omega/\sigma -\mathcal{S}(\widebar Y) + I_p).
\end{equation}
Similar to the arguments in Section \ref{subs:ssn}, we have
\begin{equation}
	\nabla \psi^m(Y)=Y-\sigma \textup{Prox}_{\theta}( \Omega/\sigma -\mathcal{S}( Y) + I_p)   A, \quad \forall \, Y \in \mathbb{R}^{p \times n}
\end{equation}
and
\begin{equation}\nonumber
	\hat{\partial}^2 \psi^m(Y)(D)= \{ D+\sigma \mathcal{S}(D)\circ U)A\mid U\in\cU\},\quad \forall D\in\mathbb{R}^{p\times n},
\end{equation}
where $\cU = \partial \textup{Prox}_{\theta}( \Omega/\sigma -\mathcal{S}( Y) + I_p)$.
Now, we are ready to introduce the detailed steps of SSNAL in Algorithm \ref{alg:ssnal}.

Next, we will introduce an iADMM for solving (\ref{eq:Dm}). Given $Z,\, \Omega \in \mathbb{S}^p$, we define
\begin{align*}
	&\psi^a(Y) := \mathcal{L}^m_{\sigma}(Y,Z;\Omega) = \frac{1}{2}\norm{Y}^2_F +\frac{\sigma}{2}\left\|\mathcal{S}(Y) - C\right\|^2_F +\delta_{B_\lambda}(Z) -\frac{1}{2\sigma}\norm{\Omega}^2_F\,,
\end{align*}
where $C=\Omega/\sigma+I_p-Z$. The gradient of $\psi^a (\cdot)$ at $Y\in\mathbb{R}^{p\times n}$ is
\begin{equation}
	\nabla\psi^a(Y)=Y+\mathcal{S}(Y)A-\sigma CA.
\end{equation}
Then the optimal solution $\widebar Y_a={\rm arg min}_{Y}\psi^a(Y)$ can be found by finding a root of $\nabla\psi^a(Y)=0$, which can be rewritten as
\begin{equation}\label{eq:ADMMmain}
	(I_p+\sigma\mathcal{S}^* \mathcal{S})(Y)=\sigma CA, \quad\forall\, Y\in\mathbb{R}^{p\times n}.
\end{equation}
We then use the conjugate gradient method to find a solution of (\ref{eq:ADMMmain}). Detailed steps of iADMM are provided in Algorithm \ref{alg:iADMM}.

\begin{algorithm}[htb]
	\centering
	\caption{An inexact ADMM for solving (\ref{eq:Dm}).}
	\label{alg:iADMM}
	\begin{algorithmic}[1]
		\REQUIRE ~~\\ %nput
		Given parameters $\sigma > 0$ and $\pi \in (0, (1+\sqrt{5})/2)$;
		An initial point $(Y^0,Z^0,\Omega^0)\in \mathbb{R}^{p\times n} \times \mathbb{S}^{p} \times \mathbb{S}^{p}$;
		An integer $k = 0$;
		\ENSURE ~~\\ %Output
		Approximate optimal solution $(\widehat{Y}, \widehat{Z}, \widehat{\Omega})$;
		\WHILE{Stopping criteria are not satisfied}
		\STATE Use conjugate gradient method to find an optimal solution $Y^{k+1}$ such that
		\[Y^{k + 1} \approx \argmin_{Y \in \mathbb{R}^{p \times n}} \psi^a(Y).\]
		\STATE Compute
		$ Z^{k + 1} = \textup{Prox}_{\delta_{B_\lambda}}( \Omega^k/\sigma - \mathcal{S}(Y^{k + 1}) + I_p);$
		\STATE Compute
		$\Omega^{k+1} = \Omega^k - \pi \sigma  \left( \mathcal{S}(Y^{k+1})+ Z^{k+1} - I_p \right)  $;
		\STATE Update $\widehat{Y} = Y^{k+1}, \,\widehat{Z} = Z^{k+1},\, \widehat{\Omega} = \Omega^{k+1}$;
		\STATE $k++$;
		\ENDWHILE
	\end{algorithmic}
\end{algorithm}

For introducing the eADMM, we need equivalently rewrite (\ref{eq:Pm}) to be
\begin{equation}
	\mathop{\max} \limits_{\Omega, \, M \in \mathbb{S}^p}\ \left\{ - \left(\frac{1}{2} \norm{MA}^2_F -\left \langle \Omega,I_p \right \rangle + \lambda\left\|\Omega\right\|_{1,\textup{off}} \right) \mid M -\Omega = 0 \right\},
\end{equation}
whose dual is
\begin{equation}\label{eq:De}
	\mathop{\min} \limits_{V, \, Z \in \mathbb{S}^p}\ \left\{ \frac{1}{2} \norm{VA}^2_F + \delta_{B_\lambda}(Z)  \mid \frac{1}{2}(V\widehat{\Sigma}+\widehat{\Sigma}V)+Z=I_p \right\}.
\end{equation}
This reformulation is designed for implementing the conclusion in \citep{Wang2020}.
We point out that, the main difference between $(\ref{eq:Dm})$ and $(\ref{eq:De})$ is that the dimensions of variables $Y$ and $V$ are $p \times n$ and $p \times p$ respectively.   Under the high-dimensional setting, it can be easily seen that solving $(\ref{eq:Dm})$ can be  much more efficient than solving $(\ref{eq:De})$.  But, when $p$ is relatively small, eADMM could be more efficient than other algorithms.
Then, we shall start introducing eADMM. Given $\sigma > 0$, for any $V, Z \in \mathbb{S}^p$, the augmented Lagrangian function associated with $(\ref{eq:De})$ is given by
\begin{equation}\nonumber
	\mathcal{L}^e_{\sigma}(V,Z;\Omega)=\frac{1}{2} \norm{VA}^2_F+\delta_{B_\lambda}(Z)- \left \langle T(V)+Z-I_p,\Omega \right \rangle+\frac{\sigma}{2}\left\|T(V)+Z-I_p\right\|^2_F,
\end{equation}
where $T(V) := \frac{1}{2}(VS+SV)$. Likewise, given $Z, \Omega \in \mathbb{S}^p$, we define
\begin{align*}
	\psi^e(V) := \mathcal{L}^e_{\sigma}(V,Z;\Omega) = \frac{1}{2}\norm{VA}^2_F +\frac{\sigma}{2}\left\|T(V) - C\right\|^2_F +\delta_{B_\lambda}(Z) -\frac{1}{2\sigma}\norm{\Omega}^2_F,
\end{align*}
where $C = \Omega/\sigma+I_p-Z$. Then the optimal solution $\widebar V_e$ of $ \arg\min\psi^e(V)$ can be obtained by solving
\begin{equation}
	V/\sigma+ T(V)-C=0.
\end{equation}

\begin{algorithm}[htb]
	\centering
	\caption{An exact ADMM for solving (\ref{eq:De}).}
	\label{alg:eADMM}
	\begin{algorithmic}[1]
		\REQUIRE ~~\\ %nput
		Given parameters $\sigma > 0$ and $\pi \in (0, (1+\sqrt{5})/2)$;
		An initial point $(V^0,Z^0,\Omega^0)\in \mathbb{S}^{p}  \times \mathbb{S}^{p} \times \mathbb{S}^{p}$;
		An integer $k = 0$;
		\ENSURE ~~\\ %Output
		Approximate optimal solution $(\widehat{V}, \widehat{Z}, \widehat{\Omega})$;
		\STATE Calculate $\Lambda_1$ and $\Lambda_2$;
		\WHILE{Stopping criteria are not satisfied}
		\STATE Update $C^k = \Omega^k/\sigma+I_p-Z^k$;
		\STATE Compute $V^{k + 1}$ using formula (\ref{eq:eadmmmain});
		\STATE Compute
		$ Z^{k + 1} = \textup{Prox}_{\delta_{B_\lambda}}( \Omega^k/\sigma - T(V^{k + 1}) + I_p);$
		\STATE Compute
		$\Omega^{k+1} = \Omega^k - \pi \sigma \left( T(V^{k+1})+ Z^{k+1} - I_p \right)  $;
		\STATE Update $\widehat{Y} = Y^{k+1}, \,\widehat{Z} = Z^{k+1},\, \widehat{\Omega} = \Omega^{k+1}$;
		\STATE $k++$;
		\ENDWHILE
	\end{algorithmic}
\end{algorithm}

By applying the thin SVD to the sample covariance matrix, we can obtain $\mathcal{V}\in\mathbb{R}^{p\times n}$ and $\Lambda=\textup{diag}(\tau_1,\cdots,\tau_m)$ with $\tau_1,\cdots,\tau_m\geq 0$ such that $\widehat{\Sigma}=\mathcal{V}\Lambda \mathcal{V}^T$ and $\mathcal{V}^T\mathcal{V}=I_n$.
After calculating $\Lambda_1$ and $\Lambda_2$ by
\begin{align*}
	\Lambda_1&=\textup{diag}\left\{\frac{\tau_1}{\tau_1+2/\sigma},\cdots,\frac{\tau_m}{\tau_m+2/\sigma}\right\}, \\
	\Lambda_2&=\left\{\frac{\tau_i\tau_j(\tau_i+\tau_j+4/\sigma)}{(\tau_i+2/\sigma)(\tau_j+2/\sigma)(\tau_i+\tau_j+2/\sigma)}\right\}_{m\times m},
\end{align*}
we then have
\begin{equation}\label{eq:eadmmmain}
	\widebar V_e=\sigma
	\left( C- C\mathcal{V}\Lambda_1\mathcal{V}^T- \mathcal{V}\Lambda_1\mathcal{V}^TC+ \mathcal{V}(\Lambda_2 \circ (\mathcal{V}^TC\mathcal{V}))\mathcal{V}^T\right).
\end{equation}
Then, we give the detailed steps of eADMM in Algorithm \ref{alg:eADMM}.

% Table generated by Excel2LaTeX from sheet 'Performance_NEW_BIG'
\begin{table}[htbp]
	\centering
	\caption{Average performance among different methods for precision matrix estimation with 100 replications, p = 1000 in Models 1 to 4, p = 1024 in Model 5 and n = 400 for all the Models.}
	\scriptsize
	\resizebox{\textwidth}{!}{
		\begin{tabular}{lccccccc}
			\hline
			& Frobenius & Spectral & Infinity & TP    & TN    & $s_{\textup{off}}$ & $\bar s_{\textup{off}}$ \bigstrut\\
			\hline
			& mean $|$ sd & mean $|$ sd & mean $|$ sd & mean $|$ sd & mean $|$ sd & mean  & mean \bigstrut\\
			\hline
			\multicolumn{8}{l}{Model 1} \bigstrut\\
			\hline
			MARS & 10.9437 $|$ 0.0433 & 0.8193 $|$ 0.0119 & 1.0949 $|$ 0.0280 & 0.8134 $|$ 0.0069 & 0.9947 $|$ 0.0001 & 8353.36 & 8351.84 \bigstrut[t]\\
			SSNAL & 10.9469 $|$ 0.0438 & 0.8195 $|$ 0.0119 & 1.0947 $|$ 0.0280 & 0.8134 $|$ 0.0069 & 0.9947 $|$ 0.0001 & 8375.55 & 8350.06 \\
			iADMM & 10.9460 $|$ 0.0433 & 0.8195 $|$ 0.0119 & 1.0949 $|$ 0.0280 & 0.8148 $|$ 0.0070 & 0.9945 $|$ 0.0001 & 8528.64 & 8348.64 \\
			eADMM & 10.9460 $|$ 0.0433 & 0.8195 $|$ 0.0119 & 1.0949 $|$ 0.0280 & 0.8148 $|$ 0.0070 & 0.9945 $|$ 0.0001 & 8529.00 & 8349.04 \\
			scio  & 11.2131 $|$ 0.0424 & 0.8315 $|$ 0.0117 & 1.0782 $|$ 0.0237 & 0.7697 $|$ 0.0065 & 0.9964 $|$ 0.0001 & 6399.24 & 6396.24 \\
			glasso & 11.2632 $|$ 0.0346 & 0.8287 $|$ 0.0081 & 1.2692 $|$ 0.0374 & 0.8111 $|$ 0.0065 & 0.9903 $|$ 0.0002 & 12724.18 & 12717.21 \bigstrut[b]\\
			\hline
			\multicolumn{8}{l}{Model 2} \bigstrut\\
			\hline
			MARS & 17.2284 $|$ 0.0372 & 1.6560 $|$ 0.0094 & 1.9535 $|$ 0.0282 & 0.5399 $|$ 0.0052 & 0.9944 $|$ 0.0001 & 9390.04 & 9388.16 \bigstrut[t]\\
			SSNAL & 17.2320 $|$ 0.0372 & 1.6564 $|$ 0.0094 & 1.9534 $|$ 0.0282 & 0.5399 $|$ 0.0051 & 0.9944 $|$ 0.0001 & 9407.36 & 9382.32 \\
			iADMM & 17.2295 $|$ 0.0372 & 1.6561 $|$ 0.0094 & 1.9536 $|$ 0.0282 & 0.5427 $|$ 0.0050 & 0.9942 $|$ 0.0001 & 9624.22 & 9377.70 \\
			eADMM & 17.2295 $|$ 0.0372 & 1.6561 $|$ 0.0094 & 1.9536 $|$ 0.0282 & 0.5427 $|$ 0.0050 & 0.9942 $|$ 0.0001 & 9625.02 & 9377.74 \\
			scio  & 17.4139 $|$ 0.0363 & 1.6704 $|$ 0.0092 & 1.9402 $|$ 0.0236 & 0.4965 $|$ 0.0052 & 0.9962 $|$ 0.0001 & 7208.28 & 7204.52 \\
			glasso & 17.3500 $|$ 0.0300 & 1.6623 $|$ 0.0064 & 2.1461 $|$ 0.0480 & 0.5721 $|$ 0.0052 & 0.9883 $|$ 0.0003 & 15757.24 & 15748.44 \bigstrut[b]\\
			\hline
			\multicolumn{8}{l}{Model 3} \bigstrut\\
			\hline
			MARS & 12.6053 $|$ 0.0964 & 0.9441 $|$ 0.0117 & 1.0930 $|$ 0.0247 & 0.4480 $|$ 0.0351 & 0.9978 $|$ 0.0007 & 3461.28 & 3459.78 \bigstrut[t]\\
			SSNAL & 12.6059 $|$ 0.0961 & 0.9441 $|$ 0.0117 & 1.0928 $|$ 0.0247 & 0.4498 $|$ 0.0329 & 0.9978 $|$ 0.0007 & 3486.69 & 3473.66 \\
			iADMM & 12.6056 $|$ 0.0957 & 0.9442 $|$ 0.0117 & 1.0931 $|$ 0.0248 & 0.4508 $|$ 0.0332 & 0.9977 $|$ 0.0007 & 3528.22 & 3473.72 \\
			eADMM & 12.6057 $|$ 0.0959 & 0.9442 $|$ 0.0117 & 1.0931 $|$ 0.0248 & 0.4508 $|$ 0.0332 & 0.9977 $|$ 0.0007 & 3528.30 & 3473.78 \\
			scio  & 12.8151 $|$ 0.0246 & 0.9555 $|$ 0.0103 & 1.0715 $|$ 0.0142 & 0.3708 $|$ 0.0089 & 0.9991 $|$ 0.0001 & 1710.32 & 1709.18 \\
			glasso & 12.7205 $|$ 0.0376 & 0.9425 $|$ 0.0099 & 1.1469 $|$ 0.0319 & 0.4338 $|$ 0.0151 & 0.9976 $|$ 0.0005 & 3583.68 & 3581.24 \bigstrut[b]\\
			\hline
			\multicolumn{8}{l}{Model 4} \bigstrut\\
			\hline
			MARS & 7.7640 $|$ 0.0493 & 0.5496 $|$ 0.0115 & 0.7630 $|$ 0.0332 & 0.0064 $|$ 0.0001 & 0.9973 $|$ 0.0001 & 4289.52 & 4288.96 \bigstrut[t]\\
			SSNAL & 7.7651 $|$ 0.0493 & 0.5496 $|$ 0.0115 & 0.7629 $|$ 0.0331 & 0.0064 $|$ 0.0001 & 0.9973 $|$ 0.0001 & 4367.77 & 4347.22 \\
			iADMM & 7.7642 $|$ 0.0493 & 0.5497 $|$ 0.0115 & 0.7631 $|$ 0.0331 & 0.0065 $|$ 0.0001 & 0.9972 $|$ 0.0001 & 4414.12 & 4347.84 \\
			eADMM & 7.7642 $|$ 0.0493 & 0.5497 $|$ 0.0115 & 0.7631 $|$ 0.0331 & 0.0065 $|$ 0.0001 & 0.9972 $|$ 0.0001 & 4414.08 & 4347.86 \\
			scio  & 7.9572 $|$ 0.0492 & 0.5570 $|$ 0.0120 & 0.7479 $|$ 0.0267 & 0.0054 $|$ 0.0001 & 0.9982 $|$ 0.0001 & 3341.48 & 3339.84 \\
			glasso & 7.7360 $|$ 0.0516 & 0.5459 $|$ 0.0094 & 0.8877 $|$ 0.0399 & 0.0096 $|$ 0.0003 & 0.9943 $|$ 0.0003 & 7458.78 & 7454.63 \bigstrut[b]\\
			\hline
			\multicolumn{8}{l}{Model 5} \bigstrut\\
			\hline
			MARS & 8.0015 $|$ 0.0603 & 0.6061 $|$ 0.0145 & 0.9493 $|$ 0.0358 & 0.9893 $|$ 0.0021 & 0.9938 $|$ 0.0001 & 10389.18 & 10386.06 \bigstrut[t]\\
			SSNAL & 8.0058 $|$ 0.0603 & 0.6064 $|$ 0.0145 & 0.9493 $|$ 0.0358 & 0.9893 $|$ 0.0021 & 0.9932 $|$ 0.0001 & 11059.55 & 10378.32 \\
			iADMM & 8.0000 $|$ 0.0604 & 0.6060 $|$ 0.0145 & 0.9494 $|$ 0.0358 & 0.9904 $|$ 0.0019 & 0.9935 $|$ 0.0001 & 10662.36 & 10385.10 \\
			eADMM & 8.0002 $|$ 0.0604 & 0.6060 $|$ 0.0145 & 0.9494 $|$ 0.0358 & 0.9904 $|$ 0.0019 & 0.9935 $|$ 0.0001 & 10663.20 & 10384.46 \\
			scio  & 8.3156 $|$ 0.0613 & 0.6213 $|$ 0.0147 & 0.9294 $|$ 0.0325 & 0.9825 $|$ 0.0026 & 0.9963 $|$ 0.0001 & 7747.98 & 7745.16 \\
			glasso & 8.2708 $|$ 0.0559 & 0.6212 $|$ 0.0111 & 1.1149 $|$ 0.0560 & 0.9951 $|$ 0.0015 & 0.9881 $|$ 0.0002 & 16319.38 & 16310.56 \bigstrut[b]\\
			\hline
		\end{tabular}%
	}
	\label{tab:pfbig}%
\end{table}%

%%%%%%%%%%%%%%%
\subsection{Simulation studies}\label{sec:results}

{
In this subsection, we will present two group tests to illustrate the performance of MARS on some synthetic data generated by five different models. At first, we will demonstrate that the performance of the D-trace loss estimator is comparable to the negative log-likelihood-based estimator. Then, extensive numerical experiments were conducted to show the high efficiency and promising performance of MARS compared to some state-of-art solvers. Additionally, we will test MARS on some higher dimension datasets where existing solvers may fail to generate solutions or be significantly time-consuming.

In all the simulation studies, the data were generated from the following five different models:}
\begin{enumerate}
	\item $\Theta_{ij}=0.2, \,  \text{if $i\neq j$ and $1 \leq |i - j| \leq 2$};\, \Theta_{ii}=1$;\, $\Theta_{ij} = 0$ otherwise.
	\item $\Theta_{ij}=0.2, \,  \text{if $i\neq j$ and $1 \leq |i - j| \leq 4$};\, \Theta_{ii}=1$;\, $\Theta_{ij} = 0$ otherwise.
	\item $\Theta = \textup{diag}\{\Theta_0, \cdots, \Theta_0\}$ with $\Theta_0 \in \mathbb{S}^5$, the off-diagonal components are equal to $0.2$ and the diagonal is all $1$;  $\Theta_{ij} = 0$ otherwise.
	\item $\Theta_{ij} = 0.2 ^{|i - j|}$.
	\item $\Theta_{ij} = 0.2, \, \text{if the remainder after division of $i$ by $\sqrt{p}$ is not equal to 0 and $j = i + 1$}$; $\Theta_{ij} = 0.2, \, \text{if $j = i + \sqrt{p}$}$; $\Theta_{ii}=1$;\, $\Theta_{ij} = 0$ otherwise.
\end{enumerate}
We point out that Models 1, 2, and 5 are derived from \citep{Zhang2014}.

\begin{figure}[htbp]
	\centering
	\includegraphics[width=1\textwidth]{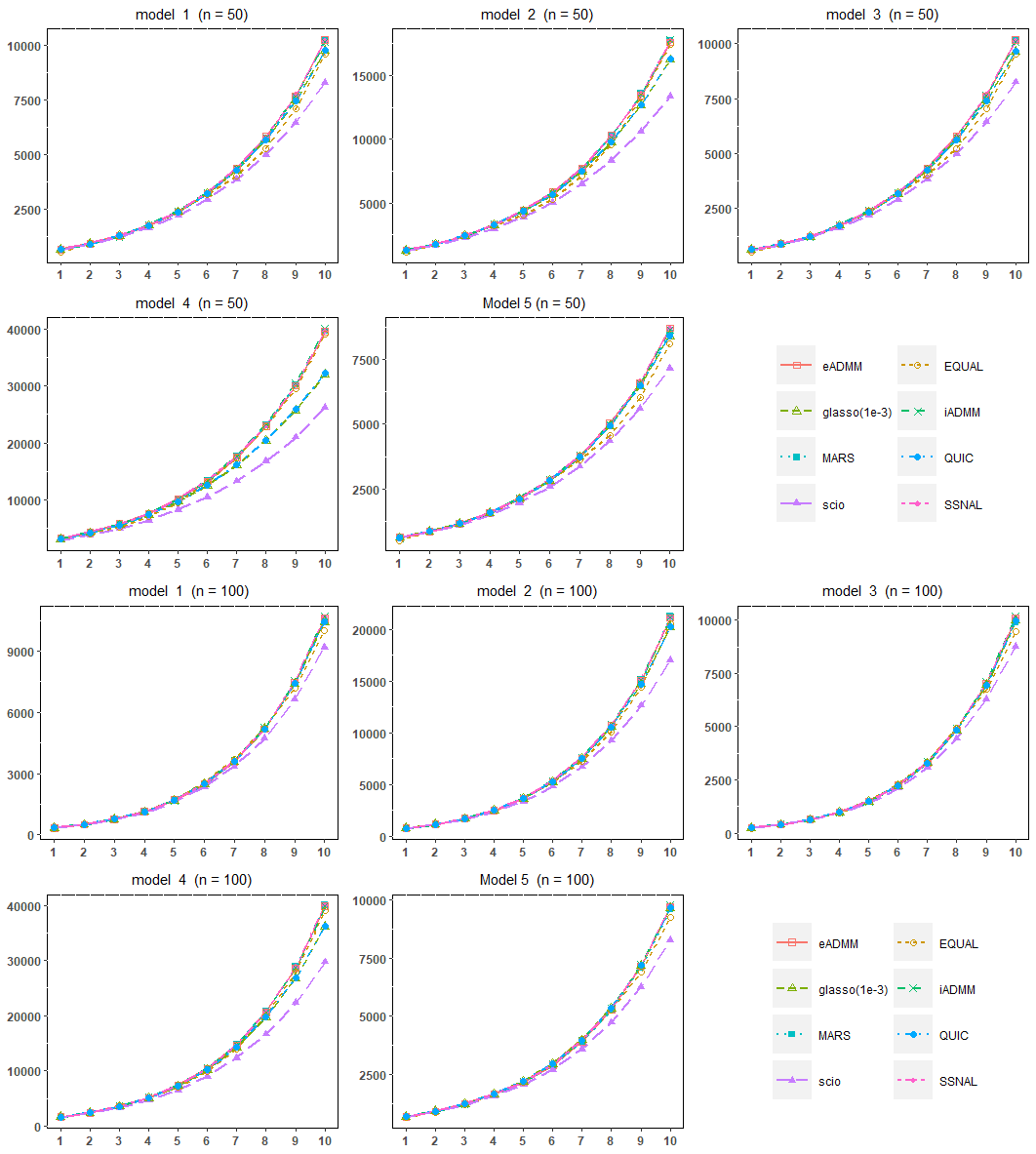}
	\caption{Average number of $\bar{s}_{\textup{off}}$ for $10$ different regularization parameters  with $10$ replications.  $p$ is set to $2000$ in Models 1-4, and $2025$ in Model 5. The horizontal axis is the $\lambda$ index, and the vertical axis is the number of off-diagonal elements whose absolute value is greater than $10^{-5}$.}
	\label{fig:perfbig}

\end{figure}

% Table generated by Excel2LaTeX from sheet 'Sheet5'
\begin{table}[htbp]
	\centering
	\caption{Average computation time (seconds) of different algorithms and 10 regularization parameters with 10 replications under Models 1-5 ($p=2000$ for Models 1-4 and $p=2025$ for Model 5). The times listed in this table are the total time to obtain estimated precision matrices for the 10 regularization parameters.}
	%	on some high-dimensional generated data}
	\small
	\resizebox{\textwidth}{!}{
		\begin{threeparttable}
			\centering
			\begin{tabular}{clccccc}
				\hline
				&       & \multicolumn{1}{c}{Model 1} & \multicolumn{1}{c}{Model 2} & \multicolumn{1}{c}{Model 3} & \multicolumn{1}{c}{Model 4} & \multicolumn{1}{c}{Model 5} \bigstrut[t]\\
				&       & \multicolumn{1}{c}{mean $|$ sd} & \multicolumn{1}{c}{mean $|$ sd} & \multicolumn{1}{c}{mean $|$ sd} & \multicolumn{1}{c}{mean $|$ sd} & \multicolumn{1}{c}{mean $|$ sd} \\
				\hline
				\multicolumn{7}{l}{problems solved sequentially (warm-started for each subsequent regularization parameter)} \bigstrut\\
				\midrule
				\multirow{8}[2]{*}{\begin{sideways}n $=$ 50\end{sideways}} & MARS  & 2.84 $ | $ 0.12 & 3.59 $ | $ 0.14 & 2.68 $ | $ 0.13 & 6.77 $ | $ 0.25 & 2.96 $ | $ 0.14 \\
				& SSNAL & 24.23 $ | $ 1.11 & 33.81 $ | $ 0.83 & 24.8 $ | $ 0.86 & 46.19 $ | $ 1.71 & 31.15 $ | $ 1.83 \\
				& iADMM & 153.66 $ | $ 1.44 & 157.07 $ | $ 1.37 & 155.58 $ | $ 1.26 & 166.24 $ | $ 1.11 & 166.51 $ | $ 3.98 \\
				& eADMM & 276.14 $ | $ 6.08 & 318.4 $ | $ 5.74 & 275.12 $ | $ 10.01 & 353.27 $ | $ 7.4 & 324.26 $ | $ 13.63 \\
				& scio$^{\#}$  & 45.88 $ | $ 0.62 & 46.54 $ | $ 0.31 & 45.86 $ | $ 0.7 & 228.93 $ | $ 12.07 & 48.03 $ | $ 0.7 \\
				& EQUAL$^{\#}$ & 84.13 $ | $ 1.26 & 102.79 $ | $ 2.03 & 83.38 $ | $ 0.98 & 133.43 $ | $ 3.23 & 86.36 $ | $ 2.05 \\
				& glasso(1e-3) & 51.13 $ | $ 0.97 & 80.53 $ | $ 5.12 & 50.87 $ | $ 0.85 & 176.14 $ | $ 5.11 & 45.53 $ | $ 1.04 \\
				& QUIC$^{\#}$  & 7.65 $ | $ 0.36 & 9.36 $ | $ 0.16 & 7.43 $ | $ 0.07 & 14.71 $ | $ 0.76 & 7.69 $ | $ 0.11 \\
				\midrule
				\multirow{8}[2]{*}{\begin{sideways}n $=$ 100\end{sideways}} & MARS  & 3.35 $ | $ 0.38 & 4.55 $ | $ 0.05 & 3.16 $ | $ 0.02 & 7.28 $ | $ 0.18 & 3.54 $ | $ 0.15 \\
				& SSNAL & 24.56 $ | $ 2.93 & 26.24 $ | $ 0.47 & 21.73 $ | $ 0.49 & 34.83 $ | $ 0.44 & 26.39 $ | $ 1.25 \\
				& iADMM & 116.87 $ | $ 9.35 & 121.41 $ | $ 1.26 & 110.4 $ | $ 1.13 & 124.92 $ | $ 0.99 & 126.65 $ | $ 0.94 \\
				& eADMM & 167.19 $ | $ 4.98 & 175.22 $ | $ 1.78 & 160.72 $ | $ 2.05 & 203.17 $ | $ 1.77 & 177.44 $ | $ 1.74 \\
				& scio$^{\#}$  & 47.39 $ | $ 5.21 & 46.37 $ | $ 0.74 & 45.76 $ | $ 0.79 & 48.01 $ | $ 0.46 & 47.72 $ | $ 0.63 \\
				& EQUAL$^{\#}$ & 42.69 $ | $ 0.98 & 50.81 $ | $ 0.49 & 40.07 $ | $ 0.21 & 61.68 $ | $ 0.35 & 48.2 $ | $ 0.57 \\
				& glasso(1e-3) & 46.17 $ | $ 3.46 & 95.2 $ | $ 7.35 & 41.54 $ | $ 0.89 & 150.71 $ | $ 7.67 & 50.27 $ | $ 4.3 \\
				& QUIC$^{\#}$  & 7.3 $ | $ 1.02 & 9.06 $ | $ 0.13 & 6.69 $ | $ 0.11 & 13.15 $ | $ 0.25 & 7.75 $ | $ 0.15 \\
				\midrule
				\multicolumn{7}{l}{problems solved independently (cold-started for each subsequent regularization parameter)} \bigstrut\\
				\midrule
				\multirow{8}[2]{*}{\begin{sideways}n $=$ 50\end{sideways}} & MARS  & 4.63 $|$ 0.26 & 6.37 $|$ 0.27 & 4.82 $|$ 0.19 & 11.76 $|$ 0.52 & 5.1 $|$ 0.21 \\
				& SSNAL & 38.6 $|$ 0.66 & 53.51 $|$ 0.62 & 44.08 $|$ 1.20 & 68.54 $|$ 0.53 & 43.69 $|$ 1.98 \\
				& iADMM & 298.54 $|$ 1.98 & 305.43 $|$ 3.28 & 298.8 $|$ 2.61 & 336.61 $|$ 3 & 319.75 $|$ 3.26 \\
				& eADMM & 528.30 $|$ 4.36 & 591.72 $|$ 8.14 & 616.26 $|$ 3.29 & 528.12 $|$ 6.2 & 592.57 $|$ 6.22 \\
				& scio$^{\#}$  & 46.01 $|$ 0.68 & 47.66 $|$ 0.94 & 46.12 $|$ 0.83 & 216.71 $|$ 9.61 & 47.88 $|$ 0.85 \\
				& EQUAL$^{\#}$ & 270.91 $|$ 1.61 & 268.12 $|$ 1.22 & 272.18 $|$ 1.46 & 287.32 $|$ 0.91 & 288.26 $|$ 1.66 \\
				& glasso(1e-3) & 148.47 $|$ 4.31 & 232.31 $|$ 2.98 & 145.68 $|$ 4.87 & 361.98 $|$ 7.52 & 131.26 $|$ 3.82 \\
				& QUIC$^{\#}$  & 11.04 $|$ 0.35 & 13.27 $|$ 0.43 & 10.88 $|$ 0.36 & 22.01 $|$ 1.85 & 11.79 $|$ 0.66 \\
				\midrule
				\multirow{8}[2]{*}{\begin{sideways}n $=$ 100\end{sideways}} & MARS  & 5.5 $|$ 0.11 & 7.74 $|$ 0.22 & 5.34 $|$ 0.15 & 12.42 $|$ 0.15 & 6.52 $|$ 0.17 \\
				& SSNAL & 32.43 $|$ 0.36 & 37.37 $|$ 0.57 & 30.31 $|$ 0.36 & 48.52 $|$ 0.52 & 40.85 $|$ 0.86 \\
				& iADMM & 240.56 $|$ 1.99 & 225.71 $|$ 2.31 & 240.81 $|$ 1.71 & 218.42 $|$ 1.03 & 249.7 $|$ 3.43 \\
				& eADMM & 381.09 $|$ 1.36 & 332.7 $|$ 3.37 & 364.37 $|$ 3.04 & 342.43 $|$ 2.79 & 380.1 $|$ 2.09 \\
				& scio$^{\#}$  & 46.24 $|$ 0.37 & 46.88 $|$ 0.79 & 46.27 $|$ 0.69 & 48.64 $|$ 0.38 & 49.11 $|$ 0.45 \\
				& EQUAL$^{\#}$ & 163.35 $|$ 1.69 & 164.52 $|$ 1.31 & 160.54 $|$ 1.16 & 166.48 $|$ 1.75 & 178.81 $|$ 0.7 \\
				& glasso(1e-3) & 128.6 $|$ 3.06 & 216.83 $|$ 2.76 & 123.16 $|$ 2.66 & 298.55 $|$ 3.15 & 143.83 $|$ 4.4 \\
				& QUIC$^{\#}$  & 8.55 $|$ 0.09 & 10.62 $|$ 0.15 & 8.42 $|$ 0.14 & 15.31 $|$ 0.14 & 9.59 $|$ 0.36 \\
				\bottomrule
			\end{tabular}%
			\begin{tablenotes}
				\item[1] The symbol ``$\#$" indicates that the average relative KKT residuals do not reach the stopping tolerance of $10^{-4}$. \item[2] The notation ``glasso(1e-3)" means an inputted stopping tolerance of $10^{-3}$, the associated solution is already reached the stopping tolerance of $10^{-4}$ for this test, see Table \ref{tab:kktsmall} in Appendix \ref{app:comperf} for details.
			\end{tablenotes}
		\end{threeparttable}
	}
	\label{tab:times}%
\end{table}%

\subsubsection{Statistical performance}
{ In this part, we will show the performance of the $\ell_1$ penalized D-trace estimator solved by MARS and others, as well as the performance of the graphical lasso estimator, to illustrate the comparable performance among them.}

For the last model, the sample dimension $p$ must satisfy that $\sqrt{p}$ is an integer. Thus, we set $p = 1000$ in Models $1$ to $4$, $p = 1024$ in Model $5$, and $n = 400$ for all the models.  In the test, the estimated precision matrix for each random sample is chosen by five-fold cross-validation. The test results conducted by 100 replications are shown in Table \ref{tab:pfbig}. The performance among different algorithms is compared in terms of seven quantities: the Frobenius norm (Frobenius), the spectral norm (Spectral) and the infinity norm (Infinity) between the estimated precision matrix and the true precision matrix, the ratio (TP) of correctly estimated non-zero components, the ratio (TN) of correctly estimated zero components, the number ($s_{\textup{off}}$) of the off-diagonal non-zero components and the number ($\bar s_{\textup{off}}$) of the off-diagonal components whose absolute values are greater than $10^{-5}$ in the estimated solution.

By comparing the first five quantities of the results in Table \ref{tab:pfbig}, we can see that the performance among the first four algorithms is similar to each other. The reason is that they are using the same estimator and stopping criteria. The performance of these four algorithms is slightly better than that of both scio and glasso in most cases. We point out that the main difference between our MARS and scio is that scio estimates the precision matrix column by column.
As for the sparsity of the estimated solutions, the results are quite different. Specifically, scio always generates more sparse solutions compared with others and the solutions obtained by glasso are less sparse than others for all the cases. MARS can generate more sparse solutions when compared with SSNAL and two kinds of ADMM,  and more importantly, it can generate solutions with fewer small-value components.
In other words, we do not need to artificially remove some components with insignificant values. In addition, all the estimated precision matrices in this test are nonsingular.

% Table generated by Excel2LaTeX from sheet 'PATH'
\begin{table}[htbp]
	\centering
	\caption{Average computation time (seconds) of different algorithms
		with 50 regularization parameters and 10 replications for generating a solution path.}
	\tiny
	\resizebox{\textwidth}{!}{
		\begin{threeparttable}
			\centering
			\begin{tabular}{clccccc}
				\hline
				&       & Model 1 & Model 2 & Model 3 & Model 4 & Model 5 \bigstrut[t]\\
				&       & mean $|$ sd & mean $|$ sd & mean $|$ sd & mean $|$ sd & mean $|$ sd \bigstrut[b]\\
				\hline
				\multicolumn{7}{l}{Models 1 to 4: p $=$ 2000;   Model 5: p $=$ 2025} \bigstrut\\
				\hline
				\multirow{10}[2]{*}{\begin{sideways}n$=$50\end{sideways}} & MARS(1e-4) & 6.78 $|$ 0.22 & 7.32 $|$ 0.18 & 6.39 $|$ 0.13 & 10.64 $|$ 0.37 & 7.61 $|$ 0.13 \bigstrut[t]\\
				& MARS(1e-8) & 13.59 $|$ 1.67 & 15.33 $|$ 0.67 & 11.93 $|$ 0.76 & 27.79 $|$ 1.57 & 15.95 $|$ 0.98 \\
				& SSNAL & 41.57 $|$ 1.97 & 45.24 $|$ 1.94 & 39.96 $|$ 1.41 & 66.01 $|$ 2.12 & 51.43 $|$ 1 \\
				& iADMM & 242.65 $|$ 6.83 & 260.9 $|$ 7.55 & 237.81 $|$ 6.2 & 297.62 $|$ 7.93 & 286.6 $|$ 6 \\
				& eADMM & 440.09 $|$ 6.28 & 439.84 $|$ 12.72 & 417.39 $|$ 12.85 & 581.59 $|$ 16.75 & 470.29 $|$ 14.46 \\
				& scio$^{\#}$  & 225.61 $|$ 1.88 & 226.54 $|$ 1.76 & 226.13 $|$ 1.75 & 390.91 $|$ 13.12 & 236.39 $|$ 2.17 \\
				& EQUAL$^{\#}$ & 103.76 $|$ 4.15 & 127.98 $|$ 0.75 & 102.78 $|$ 2.24 & 167.17 $|$ 1.95 & 117.45 $|$ 1.57 \\
				& glasso(1e-3) & 50.44 $|$ 0.39 & 78.67 $|$ 7.08 & 49.82 $|$ 0.73 & 145.97 $|$ 5.02 & 57.94 $|$ 1.1 \\
				& glasso(1e-4) & 92.04 $|$ 0.99 & 136.63 $|$ 1.68 & 90.7 $|$ 1.62 & 224.05 $|$ 1.71 & 107.26 $|$ 1.96 \\
				& QUIC$^{\#}$  & 24.95 $|$ 0.17 & 26.58 $|$ 0.14 & 24.87 $|$ 0.37 & 30.71 $|$ 0.89 & 27.56 $|$ 0.4 \bigstrut[b]\\
				\hline
				\multirow{10}[2]{*}{\begin{sideways}n$=$100\end{sideways}} & MARS(1e-4) & 7.17 $|$ 0.24 & 8.4 $|$ 0.36 & 6.98 $|$ 0.18 & 11.2 $|$ 0.32 & 8.11 $|$ 0.34 \bigstrut[t]\\
				& MARS(1e-8) & 14.19 $|$ 1.23 & 20.2 $|$ 1.8 & 13.92 $|$ 0.99 & 29.02 $|$ 1.21 & 17.39 $|$ 1.1 \\
				& SSNAL & 32.67 $|$ 1.83 & 38.01 $|$ 1.89 & 30.89 $|$ 1.46 & 45.21 $|$ 1.55 & 42.63 $|$ 1.67 \\
				& iADMM & 139.63 $|$ 4.03 & 152.38 $|$ 3.87 & 133.35 $|$ 2.85 & 165.33 $|$ 5.86 & 176.43 $|$ 3.7 \\
				& eADMM & 179.79 $|$ 7.63 & 208.62 $|$ 11.41 & 178.69 $|$ 7.98 & 228.48 $|$ 8.45 & 224.87 $|$ 4.31 \\
				& scio$^{\#}$  & 229.94 $|$ 2.49 & 228.34 $|$ 0.98 & 228.96 $|$ 1.39 & 228.04 $|$ 0.97 & 236.28 $|$ 1.18 \\
				& EQUAL$^{\#}$ & 48.54 $|$ 1.26 & 59.96 $|$ 0.91 & 48.03 $|$ 0.59 & 71.12 $|$ 1 & 55.8 $|$ 1.6 \\
				& glasso(1e-3) & 47.84 $|$ 7.57 & 89.34 $|$ 4.11 & 48.43 $|$ 8.67 & 122.91 $|$ 7.43 & 52.34 $|$ 7.29 \\
				& glasso(1e-4) & 72.72 $|$ 1.04 & 110.16 $|$ 1.62 & 69.49 $|$ 0.93 & 165.18 $|$ 7.7 & 77.69 $|$ 1.47 \\
				& QUIC$^{\#}$  & 24.39 $|$ 0.32 & 26 $|$ 0.12 & 24.21 $|$ 0.24 & 28.73 $|$ 0.26 & 26.39 $|$ 0.18 \bigstrut[b]\\
				\hline
				\multicolumn{7}{l}{Models 1 to 4: p $=$ 3000;   Model 5: p $=$ 3024} \bigstrut\\
				\hline
				\multirow{6}[2]{*}{\begin{sideways}n$=$50\end{sideways}} & MARS(1e-4) & 13.99 $|$ 0.77 & 16.4 $|$ 0.69 & 14.48 $|$ 0.65 & 20.47 $|$ 0.74 & 15.47 $|$ 0.63 \bigstrut[t]\\
				& MARS(1e-8) & 26.58 $|$ 2.74 & 36.72 $|$ 2.12 & 26.99 $|$ 2.37 & 50.5 $|$ 3.07 & 29.64 $|$ 2.15 \\
				& SSNAL & 91.14 $|$ 3.68 & 117.92 $|$ 4.09 & 93.07 $|$ 2.54 & 144.3 $|$ 4.33 & 104.6 $|$ 4.63 \\
				& EQUAL$^{\#}$ & 258.29 $|$ 2.85 & 344.63 $|$ 5.24 & 261.65 $|$ 13.18 & 432.73 $|$ 10.04 & 272.97 $|$ 3.48 \\
				& glasso(1e-3) & 146.51 $|$ 2.3 & 260.34 $|$ 2.5 & 145.85 $|$ 2.74 & 419.1 $|$ 22.35 & 164.27 $|$ 2.69 \\
				& glasso(1e-4) & 271.12 $|$ 6.34 & 497.62 $|$ 5.07 & 269.1 $|$ 4.9 & 691.6 $|$ 9.53 & 306.98 $|$ 5.72 \bigstrut[b]\\
				\hline
				\multirow{6}[2]{*}{\begin{sideways}n$=$100\end{sideways}} & MARS(1e-4) & 15.24 $|$ 0.74 & 17.42 $|$ 0.87 & 15.36 $|$ 0.72 & 19.41 $|$ 0.82 & 17.13 $|$ 1 \bigstrut[t]\\
				& MARS(1e-8) & 30.68 $|$ 2.69 & 39.1 $|$ 3.69 & 29.4 $|$ 2.2 & 45.08 $|$ 2.7 & 33 $|$ 2.98 \\
				& SSNAL & 71.42 $|$ 3.65 & 87.63 $|$ 2.6 & 69.48 $|$ 1.64 & 96.58 $|$ 4.8 & 96.36 $|$ 4.02 \\
				& EQUAL$^{\#}$ & 133.33 $|$ 0.7 & 159.65 $|$ 4.39 & 132.13 $|$ 1.04 & 183.01 $|$ 0.96 & 146.28 $|$ 0.94 \\
				& glasso(1e-3) & 133.15 $|$ 1.41 & 247.69 $|$ 1.4 & 127.63 $|$ 1.04 & 325.16 $|$ 1.75 & 140.07 $|$ 4.02 \\
				& glasso(1e-4) & 244.54 $|$ 2.88 & 374.9 $|$ 4.36 & 233.46 $|$ 1.68 & 481.69 $|$ 3.39 & 255.37 $|$ 7.79 \bigstrut[b]\\
				\hline
				\multicolumn{7}{l}{Models 1 to 4: p $=$ 5000;   Model 5: p $=$ 5041} \bigstrut\\
				\hline
				\multirow{3}[2]{*}{\begin{sideways}n$=$50\end{sideways}} & MARS(1e-4) & 22.14 $|$ 0.68 & 28.17 $|$ 1.15 & 22.53 $|$ 0.85 & 35.7 $|$ 1.87 & 24.51 $|$ 1.79 \bigstrut[t]\\
				& MARS(1e-8) & 55.4 $|$ 7.57 & 75.15 $|$ 8.43 & 59.67 $|$ 5.96 & 107.86 $|$ 9.6 & 63.66 $|$ 6.9 \\
				& SSNAL & 218.99 $|$ 5.15 & 299.27 $|$ 7.52 & 221.99 $|$ 10.28 & 366.03 $|$ 5.89 & 233.71 $|$ 6.59 \bigstrut[b]\\
				\hline
				\multirow{3}[2]{*}{\begin{sideways}n$=$100\end{sideways}} & MARS(1e-4) & 18.66 $|$ 1.17 & 23.71 $|$ 1.76 & 20.02 $|$ 1.17 & 33.08 $|$ 0.95 & 25.15 $|$ 1.22 \bigstrut[t]\\
				& MARS(1e-8) & 50.02 $|$ 5.83 & 72.22 $|$ 5.86 & 59.36 $|$ 7.34 & 102.32 $|$ 6.29 & 66.58 $|$ 5.79 \\
				& SSNAL & 147.6 $|$ 5.19 & 179.6 $|$ 9.08 & 156.94 $|$ 6.73 & 231.67 $|$ 7.94 & 207.09 $|$ 5.12 \bigstrut[t]\\
				\hline
			\end{tabular}%
			\begin{tablenotes}
				\item[1] The symbol ``$\#$" indicates that the average relative KKT residuals do not reach the stopping tolerance of $10^{-4}$. \item[2]  The notation ``glasso(1e-3)" means an inputted stopping tolerance of $10^{-3}$, the associated solution is already reached the stopping tolerance of $10^{-4}$ for this test, see Table \ref{tab:pathkkt50} in Appendix \ref{app:comperf} for details. We also show results for MARS and glasso with inputted stopping tolerances of $10^{-8}$ and $10^{-4}$ respectively, for obtaining solutions with relative KKT residuals smaller than $10^{-8}$.
				\item[3] In the last test, due to out of memory of glasso and less efficiency of EQUAL, these two solvers were not tested.
			\end{tablenotes}
		\end{threeparttable}
	}
	\label{tab:pathtimes}%
\end{table}%

\subsubsection{Computational performance}\label{sec:numexpcom}

{ In this part,  we will demonstrate the computational performance of MARS compared with some state-of-art solvers for both a solution path  and also some single regularization problems. Furthermore, we will test our MARS on some higher dimension datasets to characterize the ability of MARS to handle higher size data.}
	
{For simplicity, here we still use the same five models to generate random data. We set $p = 2000$ in Models  1-4, $p = 2025$ in Model $5$, and set $n$ to $50$ and $100$.
Then, we use $10$ equally decreasing regularization parameters to demonstrate the performance of different algorithms. The regularization parameters are chosen by a pretest such that the oracle sparsity of the precision matrix lies in the sparsity of those ten generated solutions, and the sparsity levels of the solutions from different algorithms are close. Specifically, we start the test with  $\lambda_{\max}$ and decrease it by $0.01$ until the sparsity of the associated solution is less than the oracle sparsity, and then use the 10 smallest regularization parameters for the following tests. It can be seen from Figure \ref{fig:perfbig} that under the same regularization parameter range, the estimated solutions obtained by different algorithms are similar in sparsity, except for scio, which tends to get more sparse solutions.}
	
{For the test on Model 4 with $n=50$ and $p=2000$, the sparsities of the solutions for some smaller regularization parameters obtained by glasso and QUIC are slightly different from that of MARS, SNNAL, and the two ADMMs. However, the difference in quantity is not large, see Figure \ref{fig:perfbig} for details. The corresponding mean and standard deviation (sd) of the computation time for solving a solution path with the $10$ parameters are shown in Table \ref{tab:times} (the relative KKT residuals $(\eta)$ are listed in Table \ref{tab:kktsmall} in Appendix \ref{app:comperf}). We can see that MARS can be up to $25$ times faster than glasso and $10$ times faster than SSNAL. Besides, for each $\lambda$, we list the statistical performance of MARS and glasso in Table \ref{tab:errors} in Appendix \ref{app:comperf}. To compare the computation efficiency of different methods with a single regularization parameter, we also test the problems under the 10 regularization parameters without using any prior information (cold-started). The average computation times are also reported in Table \ref{tab:times}. It can be seen that MARS can be up to $32$ times faster than glasso and $9$ times faster than SSNAL. From the results above, we can conclude that MARS is significantly more efficient than others both in  generating a solution path and in solving the problem with a single regularization parameter.

Next, we conduct tests for generating a solution path with $50$ regularization parameters. The regularization parameter interval is 50 equally decreasing values from $1$ to $\lambda_{ \textup{min}}$, which is the same as in the previous test. After observing the test results in Table \ref{tab:pathtimes}, we found that MARS, SSNAL, glasso, and EQUAL are much more efficient than other algorithms,  so when $p$ is set to $3000$ in Models $1$ to $4$, and $3025$ in Model $5$, we will only focus on the comparison among these four algorithms. Besides, due to the out-of-memory error of glasso and less efficiency of EQUAL, we will only test MARS and SSNAL in subsequent higher dimension tests. For fairness of comparison, in all tests where $p$ is less than $5000$, we did not use Remark \ref{rm:maxl} to narrow the range of $\lambda$ path for MARS, SSNAL, iADMM, and eADMM and the generated data are all standardized in the first beginning. It can be seen from Table \ref{tab:pathtimes} that MARS is significantly faster than other algorithms. Especially, when $p$ is larger, the computation time of MARS could be roughly $1/10$ of that of glasso and SSNAL.

}

\begin{sidewaystable}[htbp]
	\centering
	\caption{The objective values and the relative KKT residuals ($\eta$) of paths of estimated precision matrices generated by different algorithms for the 		prostate data set.}
	\tiny
	\begin{threeparttable}
		\begin{tabular}{c|c|c|c|c}
			\hline
			\multirow{3}[6]{*}{$\lambda$} & \multicolumn{2}{c|}{control group} & \multicolumn{2}{c}{cancer group} \bigstrut\\
			\cline{2-5}          & Objective value & $\eta$ & Objective value & $\eta$ \bigstrut\\
			\cline{2-5}          & MARS $|$ SSNAL $|$ EQUAL $|$ scio & MARS $|$ SSNAL $|$ EQUAL $|$ scio & MARS $|$ SSNAL $|$ EQUAL $|$ scio & MARS $|$ SSNAL $|$ EQUAL $|$ scio \bigstrut\\
			\hline
			0.99  & -3016.50  $|$ -3016.50  $|$ 924.68  $|$ -3016.50  & 6.47e-06 $|$ 4.80e-06 $|$ 5.05e-02 $|$ 2.05e-07 & -3016.50 $|$ -3016.50 $|$ -1964.80 $|$ -3016.50 & 9.36e-06 $|$ 6.95e-06 $|$ 3.40e-02 $|$ 2.33e-07 \bigstrut[t]\\
			0.98  & -3016.50  $|$ -3016.50  $|$ -861.70  $|$ -3016.53  & 3.31e-05 $|$ 3.15e-05 $|$ 3.89e-01 $|$ 3.44e-07 & -3016.50 $|$ -3016.50 $|$ -1966.85 $|$ -3016.56 & 4.33e-05 $|$ 4.11e-05 $|$ 3.41e-02 $|$ 4.69e-07 \\
			0.97  & -3016.50  $|$ -3016.50  $|$ -1798.68  $|$ -3016.63  & 7.56e-05 $|$ 7.43e-05 $|$ 3.54e-02 $|$ 5.44e-07 & -3016.50 $|$ -3016.50 $|$ -2448.14 $|$ -3016.72 & 9.84e-05 $|$ 9.67e-05 $|$ 6.39e-02 $|$ 6.86e-07 \\
			0.96  & -3016.83  $|$ -3016.52  $|$ -1802.12  $|$ -3016.83  & 4.86e-06 $|$ 9.40e-05 $|$ 3.56e-02 $|$ 7.04e-07 & -3017.04 $|$ -3016.68 $|$ -2447.90 $|$ -3017.05 & 7.30e-06 $|$ 9.43e-05 $|$ 6.40e-02 $|$ 7.99e-07 \\
			0.95  & -3017.10  $|$ -3016.94  $|$ -2351.89  $|$ -3017.19  & 9.54e-05 $|$ 7.80e-05 $|$ 7.22e-02 $|$ 8.51e-07 & -3017.58 $|$ -3017.24 $|$ -2456.80 $|$ -3017.59 & 1.04e-05 $|$ 8.73e-05 $|$ 6.31e-02 $|$ 1.01e-06 \\
			0.94  & -3017.75  $|$ -3017.53  $|$ -2356.68  $|$ -3017.75  & 3.39e-06 $|$ 7.18e-05 $|$ 7.18e-02 $|$ 1.03e-06 & -3018.40 $|$ -3018.06 $|$ -2700.02 $|$ -3018.41 & 1.15e-05 $|$ 8.54e-05 $|$ 2.20e-02 $|$ 1.21e-06 \\
			0.93  & -3018.55  $|$ -3018.22  $|$ -2371.45  $|$ -3018.55  & 7.20e-06 $|$ 9.92e-05 $|$ 7.04e-02 $|$ 5.63e-05 & -3019.57 $|$ -3019.23 $|$ -2698.98 $|$ -3019.58 & 1.16e-05 $|$ 8.50e-05 $|$ 2.22e-02 $|$ 1.35e-06 \\
			0.92  & -3019.66  $|$ -3019.39  $|$ -2645.23  $|$ -3019.66  & 8.62e-06 $|$ 8.12e-05 $|$ 2.33e-02 $|$ 1.21e-04 & -3021.14 $|$ -3020.79 $|$ -2704.31 $|$ -3021.15 & 1.17e-05 $|$ 8.53e-05 $|$ 2.22e-02 $|$ 1.49e-06 \\
			0.91  & -3021.13  $|$ -3020.90  $|$ -2643.28  $|$ -3021.13  & 7.98e-06 $|$ 7.48e-05 $|$ 2.35e-02 $|$ 1.82e-04 & -3023.16 $|$ -3022.80 $|$ -2858.82 $|$ -3023.17 & 1.24e-05 $|$ 8.55e-05 $|$ 2.50e-02 $|$ 1.58e-06 \\
			0.9   & -3023.02  $|$ -3022.80  $|$ -2648.98  $|$ -3023.01  & 8.74e-06 $|$ 7.22e-05 $|$ 2.35e-02 $|$ 2.47e-04 & -3025.69 $|$ -3025.32 $|$ -2858.34 $|$ -3025.70 & 1.17e-05 $|$ 8.59e-05 $|$ 2.54e-02 $|$ 1.70e-06 \\
			0.89  & -3025.39  $|$ -3025.18  $|$ -2832.42  $|$ -3025.37  & 8.11e-06 $|$ 7.04e-05 $|$ 2.70e-02 $|$ 3.28e-04 & -3028.77 $|$ -3028.40 $|$ -2860.17 $|$ -3028.78 & 1.16e-05 $|$ 8.60e-05 $|$ 2.55e-02 $|$ 1.04e-05 \\
			0.88  & -3028.30  $|$ -3028.09  $|$ -2833.94  $|$ -3028.27  & 8.69e-06 $|$ 6.95e-05 $|$ 2.73e-02 $|$ 4.00e-04 & -3032.46 $|$ -3032.08 $|$ -2863.98 $|$ -3032.47 & 1.22e-05 $|$ 8.63e-05 $|$ 2.56e-02 $|$ 1.13e-04 \\
			0.87  & -3031.79  $|$ -3031.60  $|$ -2837.75  $|$ -3031.75  & 1.51e-05 $|$ 6.88e-05 $|$ 2.74e-02 $|$ 4.81e-04 & -3036.81 $|$ -3036.42 $|$ -2931.86 $|$ -3036.77 & 1.17e-05 $|$ 8.69e-05 $|$ 1.40e-02 $|$ 4.20e-04 \\
			0.86  & -3035.94  $|$ -3035.74  $|$ -2912.43  $|$ -3035.86  & 1.12e-05 $|$ 6.84e-05 $|$ 1.49e-02 $|$ 6.08e-04 & -3041.88 $|$ -3041.49 $|$ -2939.44 $|$ -3041.71 & 1.17e-05 $|$ 8.67e-05 $|$ 1.39e-02 $|$ 7.96e-04 \\
			0.85  & -3040.77  $|$ -3040.58  $|$ -2918.38  $|$ -3040.61  & 1.57e-05 $|$ 6.79e-05 $|$ 1.49e-02 $|$ 8.34e-04 & -3047.73 $|$ -3047.34 $|$ -2995.49 $|$ -3046.91 & 1.30e-05 $|$ 8.65e-05 $|$ 1.05e-02 $|$ 1.68e-03 \\
			0.84  & -3046.36  $|$ -3046.16  $|$ -2989.64  $|$ -3046.13  & 9.47e-06 $|$ 6.76e-05 $|$ 1.12e-02 $|$ 9.71e-04 & -3054.41 $|$ -3054.01 $|$ -3002.93 $|$ -3050.94 & 1.19e-05 $|$ 8.64e-05 $|$ 1.08e-02 $|$ 3.73e-03 \\
			0.83  & -3052.73  $|$ -3052.54  $|$ -2993.46  $|$ -3051.80  & 1.60e-05 $|$ 6.73e-05 $|$ 1.20e-02 $|$ 1.81e-03 & -3061.97 $|$ -3061.57 $|$ -3008.80 $|$ -3057.54 & 1.15e-05 $|$ 8.62e-05 $|$ 1.14e-02 $|$ 3.70e-03 \\
			0.82  & -3059.95  $|$ -3059.77  $|$ -2997.82  $|$ -2992.17  & 1.74e-05 $|$ 6.70e-05 $|$ 1.30e-02 $|$ 5.80e-02 & -3070.43 $|$ -3070.04 $|$ -3020.96 $|$ -3015.15 & 1.97e-05 $|$ 8.61e-05 $|$ 9.59e-03 $|$ 3.61e-02 \\
			0.81  & -3068.06  $|$ -3067.88  $|$ -3014.42  $|$ -806.76  & 1.82e-05 $|$ 6.67e-05 $|$ 1.00e-02 $|$ 4.25e-01 & -3079.89 $|$ -3079.48 $|$ -3040.97 $|$ -2534.57 & 1.35e-05 $|$ 8.59e-05 $|$ 7.75e-03 $|$ 1.47e-01 \\
			0.8   & -3077.11  $|$ -3076.93  $|$ -3057.88  $|$ 1170.63  & 1.84e-05 $|$ 6.66e-05 $|$ 5.86e-03 $|$ 4.60e-01 & -3090.34 $|$ -3089.94 $|$ -3067.80 $|$ -464.84 & 2.07e-05 $|$ 8.56e-05 $|$ 5.64e-03 $|$ 2.49e-01 \\
			0.79  & -3087.13  $|$ -3086.95  $|$ -3069.02  $|$ 4.49e+03 & 1.90e-05 $|$ 6.62e-05 $|$ 6.06e-03 $|$ 4.36e-01 & -3101.87 $|$ -3101.46 $|$ -3081.18 $|$ 9861.79 & 1.38e-05 $|$ 8.55e-05 $|$ 5.79e-03 $|$ 3.60e-01 \\
			0.78  & -3098.15  $|$ -3097.98  $|$ -3066.00  $|$ 1.24e+04 & 1.86e-05 $|$ 6.59e-05 $|$ 1.06e-02 $|$ 4.22e-01 & -3114.48 $|$ -3114.08 $|$ -3084.28 $|$ 6.00e+04 & 2.51e-05 $|$ 8.52e-05 $|$ 8.97e-03 $|$ 4.75e-01 \\
			0.77  & -3110.23  $|$ -3110.05  $|$ -3085.58  $|$ 3.13e+04 & 1.85e-05 $|$ 6.57e-05 $|$ 7.64e-03 $|$ 4.21e-01 & -3128.26 $|$ -3127.84 $|$ -3102.08 $|$ 1.73e+05 & 3.06e-05 $|$ 8.50e-05 $|$ 7.10e-03 $|$ 5.18e-01 \\
			0.76  & -3123.39  $|$ -3123.21  $|$ -3106.15  $|$ 7.23e+04 & 1.82e-05 $|$ 6.54e-05 $|$ 5.13e-03 $|$ 4.29e-01 & -3143.20 $|$ -3142.77 $|$ -3124.53 $|$ 3.72e+05 & 2.91e-05 $|$ 8.46e-05 $|$ 4.94e-03 $|$ 5.32e-01 \\
			0.75  & -3137.67  $|$ -3137.49  $|$ -3121.64  $|$ 1.54e+05 & 3.02e-05 $|$ 6.55e-05 $|$ 4.35e-03 $|$ 4.38e-01 & -3159.40 $|$ -3158.91 $|$ -3144.88 $|$ 6.84e+05 & 2.53e-05 $|$ 9.04e-05 $|$ 3.75e-03 $|$ 5.34e-01 \\
			0.74  & -3153.12  $|$ -3152.93  $|$ -3139.48  $|$ 3.12e+05 & 1.95e-05 $|$ 6.48e-05 $|$ 4.26e-03 $|$ 4.52e-01 & -3176.87 $|$ -3176.32 $|$ -3164.75 $|$ 1.15e+06 & 2.60e-05 $|$ 8.64e-05 $|$ 3.75e-03 $|$ 5.32e-01 \\
			0.73  & -3169.78  $|$ -3169.57  $|$ -3144.13  $|$ 5.54e+05 & 3.63e-05 $|$ 6.50e-05 $|$ 8.05e-03 $|$ 4.58e-01 & -3195.68 $|$ -3195.04 $|$ -3172.27 $|$ 1.82e+06 & 4.81e-05 $|$ 8.83e-05 $|$ 7.52e-03 $|$ 5.29e-01 \\
			0.72  & -3187.73  $|$ -3187.46  $|$ -3162.36  $|$ 9.22e+05 & 4.16e-05 $|$ 6.75e-05 $|$ 6.12e-03 $|$ 4.62e-01 & -3215.87 $|$ -3215.12 $|$ -3190.12 $|$ 2.77e+06 & 5.54e-05 $|$ 9.05e-05 $|$ 6.33e-03 $|$ 5.25e-01 \\
			0.71  & -3206.96  $|$ -3206.65  $|$ -3188.84  $|$ 1.55e+06 & 2.69e-05 $|$ 6.93e-05 $|$ 4.08e-03 $|$ 4.72e-01 & -3237.51 $|$ -3236.59 $|$ -3220.13 $|$ 4.06e+06 & 4.12e-05 $|$ 9.38e-05 $|$ 3.79e-03 $|$ 5.21e-01 \\
			0.7   & -3227.61  $|$ -3227.17  $|$ -3204.06  $|$ 2.58e+06 & 6.86e-05 $|$ 7.28e-05 $|$ 6.15e-03 $|$ 4.85e-01 & -3260.82 $|$ -3259.52 $|$ -3244.18 $|$ 5.76e+06 & 2.41e-05 $|$ 9.77e-05 $|$ 3.82e-03 $|$ 5.15e-01 \\
			0.69  & -3249.70  $|$ -3249.08  $|$ -3229.79  $|$ 4.01e+06 & 6.15e-05 $|$ 7.72e-05 $|$ 3.71e-03 $|$ 4.93e-01 & -3285.61 $|$ -3284.16 $|$ -3271.68 $|$ 8.07e+06 & 3.79e-05 $|$ 8.62e-05 $|$ 3.70e-03 $|$ 5.11e-01 \\
			0.68  & -3273.48  $|$ -3272.43  $|$ -3257.15  $|$ 6.07e+06 & 2.37e-05 $|$ 8.30e-05 $|$ 3.59e-03 $|$ 4.99e-01 & -3312.60 $|$ -3310.34 $|$ -3285.98 $|$ 1.13e+07 & 3.12e-05 $|$ 8.70e-05 $|$ 7.49e-03 $|$ 5.08e-01 \\
			0.67  & -3298.90  $|$ -3297.28  $|$ -3289.17  $|$ 9.13e+06 & 2.45e-05 $|$ 8.99e-05 $|$ 3.31e-03 $|$ 5.07e-01 & -3342.17 $|$ -3338.18 $|$ -3310.62 $|$ 1.56e+07 & 3.54e-05 $|$ 9.14e-05 $|$ 6.34e-03 $|$ 5.07e-01 \\
			0.66  & -3326.19  $|$ -3323.70  $|$ -3310.93  $|$ 1.38e+07 & 2.91e-05 $|$ 9.78e-05 $|$ 5.40e-03 $|$ 5.18e-01 & -3374.62 $|$ -3367.78 $|$ -3349.71 $|$ 2.16e+07 & 3.83e-05 $|$ 9.92e-05 $|$ 3.58e-03 $|$ 5.06e-01 \\
			0.65  & -3355.73  $|$ -3352.01  $|$ -3330.89  $|$ 2.05e+07 & 3.15e-05 $|$ 9.37e-05 $|$ 6.44e-03 $|$ 5.28e-01 & -3410.80 $|$ -3400.77 $|$ -3370.41 $|$ 2.97e+07 & 4.81e-05 $|$ 9.16e-05 $|$ 7.03e-03 $|$ 5.07e-01 \\
			0.64  & -3388.70  $|$ -3382.07  $|$ -3370.90  $|$ 3.08e+07 & 3.97e-05 $|$ 9.88e-05 $|$ 3.53e-03 $|$ 5.40e-01 & -3453.01 $|$ -3436.03 $|$ -3424.12 $|$ 4.05e+07 & 5.28e-05 $|$ 9.07e-05 $|$ 2.13e-03 $|$ 5.08e-01 \\
			0.63  & -3426.33 $|$ -3415.49 $|$ -3399.63 $|$ $-$ & 4.39e-05 $|$ 8.55e-05 $|$ 5.77e-03 $|$ $-$ & -3501.98 $|$ -3474.07 $|$ -3445.53 $|$ $-$ & 6.00e-05 $|$ 9.92e-05 $|$ 7.45e-03 $|$ $-$ \\
			0.62  & -3469.25 $|$ -3449.88 $|$ -3429.19 $|$ $-$ & 5.38e-05 $|$ 9.86e-05 $|$ 5.79e-03 $|$ $-$ & -3556.03 $|$ -3519.00 $|$ -3475.93 $|$ $-$ & 6.72e-05 $|$ 9.98e-05 $|$ 6.86e-03 $|$ $-$ \bigstrut[b]\\
			\hline
		\end{tabular}%
		\begin{tablenotes}
			\item The symbol ``$-$" indicates out of memory.
		\end{tablenotes}
	\end{threeparttable}
	\label{tab:realsol}%
\end{sidewaystable}%

\subsection{Real data analysis}\label{NE:real}

In this subsection, we will use some real data sets to demonstrate the promising performance of our MARS for generating a solution path. The publicly available data sets we are going to use include a prostate data set (\href{https://web.stanford.edu/~hastie/CASI_files/DATA/prostate.html}{\url{https://web.stanford.edu/~hastie/CASI_files/DATA/prostate.html}}) and a breast cancer data set \citep{Hess2006}, which can be found on (\href{https://bioinformatics.mdanderson.org/public-datasets/}{\url{https://bioinformatics.mdanderson.org/public-datasets/}}). The prostate data set contains two groups, the first one is $6033$ genetic activity measurements of $50$ control subjects and the other is that of $52$ prostate cancer subjects. Thus, the number of variables contained in the precision matrix that needs to be estimated is more than 18 million. As for the breast cancer data set, it contains the measurements of $22283$ genes with $133$ subjects, where $99$ of them are labeled as residual disease (RD) and the remaining $34$ subjects are labeled as pathological complete response (pCR). For this data set, the estimated precision matrix contains about 250 million parameters.

After standardizing the two groups of the prostate data set, we use MARS, SSNAL, EQUAL, and scio to generate solution paths for the two groups separately. We should note that, when $\lambda$ is too small, there may not exist optimal solutions for the precision matrix estimator. Therefore, before going further to the main comparison tests, we should conduct some pretests to find a suitable smallest $\lambda$. The performance of the estimations generated by different algorithms is concluded in Table \ref{tab:realsol}. Since $\eta$ is to measure the accuracy of the generated solution, we notice that when $\lambda$ is larger, the estimated solutions generated by scio perform very well, but when $\lambda$ gradually becomes smaller, its estimated solutions are not desired, which can also be observed from the associated objective value. The performance of EQUAL is the opposite, that is, it performs better when $\lambda$ is smaller. We should point out that even if the stopping tolerance of EQUAL has been set to be $10^{-6}$, none of the generated solutions makes $\eta$ less than $10^{-3}$.
Thus, by comparing the objective value and $\eta$, we conclude that MARS and SSNAL can outperform both EQUAL and scio since all the $\eta$ are smaller than the set tolerance $10^{-4}$. Although both MARS and SSNAL can generate satisfactory solutions, from Table \ref{tab:tmprost}, we find that MARS is much more efficient. In particular,  the computation time of SSNAL to generate the solution path is more than 14 times that of MARS in the Control group and more than 18 times that of MARS in the Cancer group. This can also be seen in Figure \ref{fig:tmpath}, which illustrates that MARS has high efficiency in generating solutions for each $\lambda$. Besides, we obtain the final precision matrix estimations of the two different groups through 5-fold cross-validation, and the corresponding graphs are shown in Figure \ref{fig:pcsiginv}. From this figure, we can clearly see that the genes of the control group and the cancer group have different connections.

% Table generated by Excel2LaTeX from sheet 'summ——pros'
\begin{table}[htbp]
	\centering
	\caption{The computation time (seconds) of different algorithms for generating a solution path with the prostate data set.}
	\small
	\begin{threeparttable}
		\begin{tabular}{ccccc}
			\hline
			& MARS  & SSNAL & EQUAL & scio \bigstrut\\
			\hline
			control group & 113.6  & 1629.88  & 1325.67*  & 5690.73+ \bigstrut[t]\\
			cancer group & 112.43  & 2108.06  & 1273.22*  & 5949.71+ \bigstrut[b]\\
			\hline
		\end{tabular}%
		\begin{tablenotes}
			\item[1.] The symbol ``$*$" indicates that none of the relative KKT residuals of EQUAL is less than $10^{-3}$.
			\item[2.] The symbol ``$+$" indicates that, due to out of memory, the time here does not include the time for generating estimations by scio with the two smallest $\lambda$.
		\end{tablenotes}
	\end{threeparttable}
	\label{tab:tmprost}%
\end{table}%

\begin{figure}[htbp]
	\centering
	\includegraphics[width=0.7\textwidth]{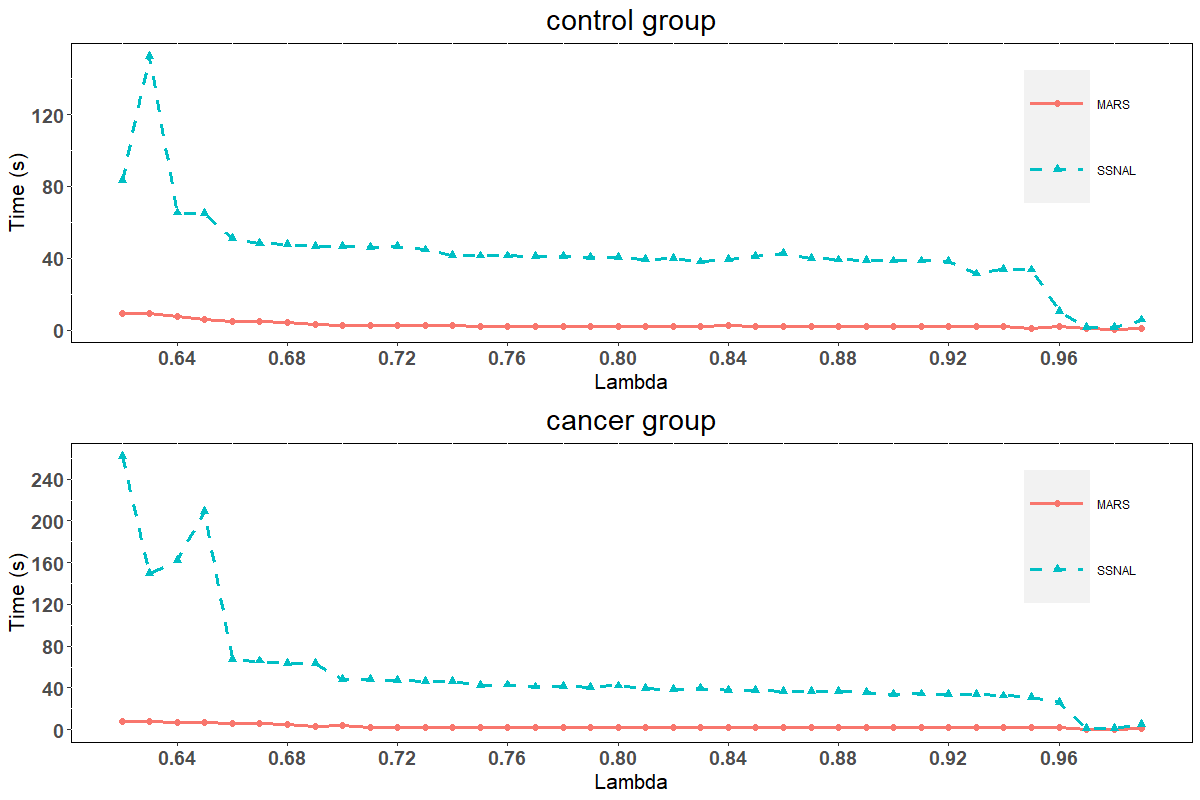}
	\caption{The computation time of MARS and SSNAL for each $\lambda$ with the prostate data set.}
	\label{fig:tmpath}
\end{figure}

\begin{figure}[htbp]
	\centering
	\includegraphics[width=0.7\textwidth]{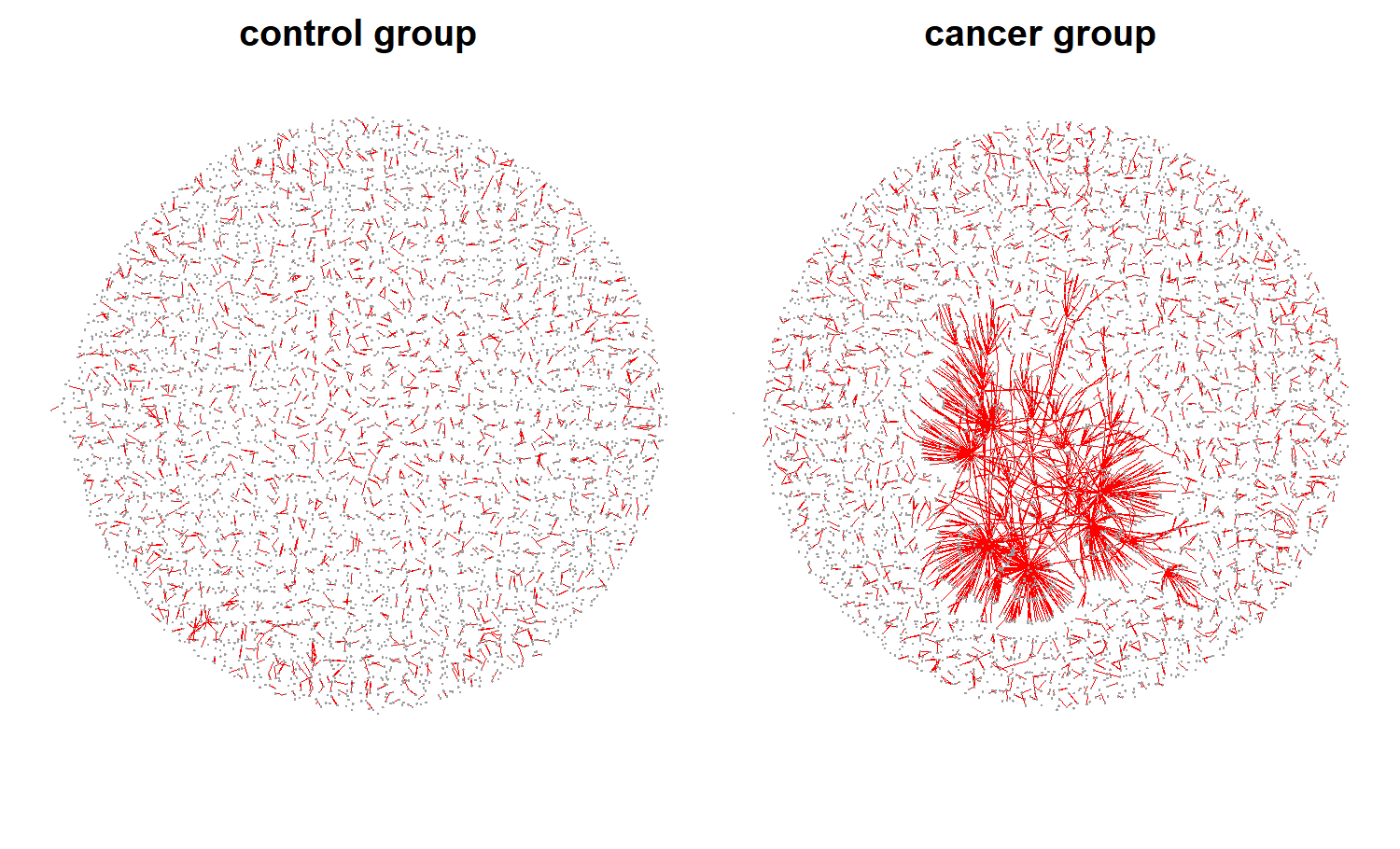}
	\caption{The estimated graphs chosen by five-fold cross-validation generated by MARS with the prostrate data set.}
	\label{fig:pcsiginv}
\end{figure}

Next, we will test the performance of MARS and glasso on the breast cancer data set. We follow the same assumption stated in \citep{Cai2011} that this gene measurements data are normally distributed with $N(\mu_k, \Sigma), \, k = 1,\,2$, where $\Sigma$ is the same for RD group and pCR group, but the means are different. Some two-sample t-tests are performed with given p-value tolerances, which are set to $0.005$, $0.01$, $0.05$, $0.1$, and $1$, to obtain the most significant genes (with a smaller p-value). Under those set p-values, the numbers of chosen genes are $1228$, $1646$, $3640$, $5418$, and $22283$ respectively. Note that, the last one contains all the genes with nearly $250$ million parameters. We point out that, the $\lambda$ paths for all the tests, except the test with p-value tolerance $0.05$, are set from $\lambda_{\textup{min}}$ to $1$ by $0.01$, where $\lambda_{\textup{min}}$ is decided by some pre-tests with the D-trace estimator. When the p-value tolerance is set to $0.05$, if the gap between two subsequent regularization parameters in the path is $0.01$, glasso will fail due to insufficient memory, so we set the $\lambda$ gap for this test to $0.02$. The regularization parameters for each test are chosen by five-fold cross-validation, and the total computation times are concluded in Table \ref{tab:bc}. { The stopping tolerance for MARS and glasso is set to $10^{-4}$ to ensure that all the relative KKT residuals are less than $10^{-4}$.} The estimated graphs obtained by MARS and glasso with p-value tolerance $0.005$, $0.01$, and $0.01$ can be found in Figure \ref{fig:bcsum}. From this figure, we notice that the graphs obtained by MARS and glasso are similar to each other, but the times taken by MARS are obviously less than those taken by glasso. Especially when the p-value tolerance is $0.05$, the total computation time of glasso is more than 20 times that of MARS. Besides, Figure \ref{fig:bcsum} also shows the estimated graphs obtained by MARS when the p-value tolerances are 0.1 and 1, but the figure for the latter one only plots the connections among the most significant $5418$ genes.

% Table generated by Excel2LaTeX from sheet 'time'
\begin{table}[htbp]
	\centering
	\caption{Test results of MARS and glasso on the breast cancer data sets with different p-value tolerances.}
	\scriptsize
	\begin{threeparttable}
		\begin{tabular}{c|c|c|c|c}
			\hline
			\multirow{2}[4]{*}{p-value tolerance} & \multirow{2}[4]{*}{No. of genes} & \multicolumn{2}{c|}{time (mins) including cross-validation} & \multirow{2}[4]{*}{No. of $\lambda$} \bigstrut\\
			\cline{3-4}          &       & MARS  & glasso &  \bigstrut\\
			\hline
			0.005 & 1228  & 20.26  & 71.51  & 63 \bigstrut[t]\\
			0.01  & 1646  & 23.06  & 159.79  & 60 \\
			0.05  & 3640  & 58.32  & 1257.81  & 28 \\
			0.1   & 5418  & 150.54  & $-$   & 54 \\
			1     & 22283 & 553.35  & $-$   & 29 \bigstrut[b]\\
			\hline
		\end{tabular}%
		\begin{tablenotes}
			\item The symbol ``$-$" indicates out of memory.
		\end{tablenotes}
	\end{threeparttable}
	\label{tab:bc}%
\end{table}%

\begin{figure}[htbp]
	\centering
	\includegraphics[width=0.95\textwidth]{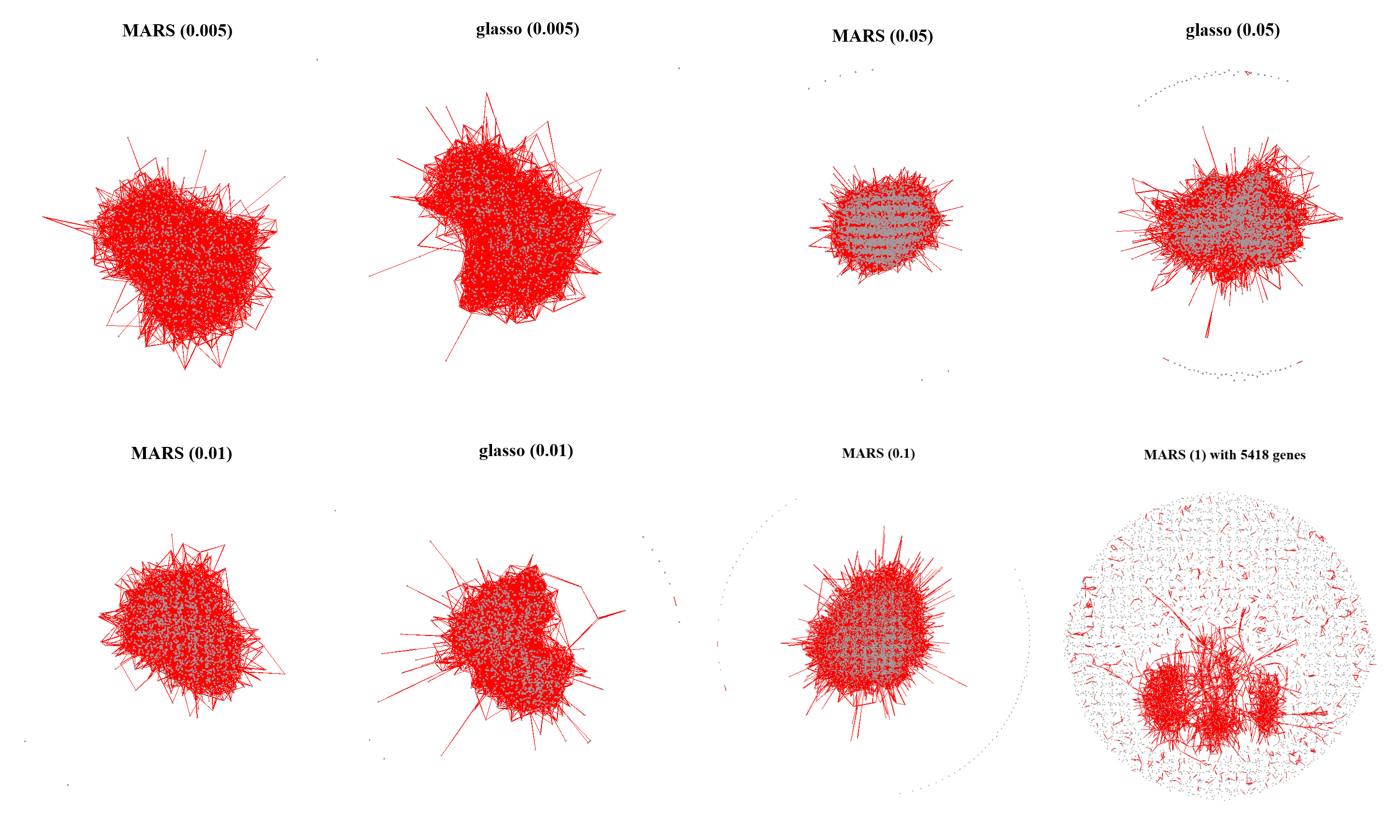}
	\caption{The estimated graphs for the breast cancer data set chosen by five-fold cross-validation with using MARS and glasso under different p-value tolerances.}
	\label{fig:bcsum}
\end{figure}

%%%%%%%%%%%%%%%%%%%%
\section{Conclusions}\label{sec:col}
In this paper, we have derived an efficient second-order algorithm for high-dimensional sparse precision matrices estimation under the $\ell_1$-penalized D-trace loss. By using a dual approach and adopting an adaptive sieving reduction strategy, our algorithm is capable of handling large-scale datasets. The theoretical properties of our algorithm have been well established. In particular, we have shown that our algorithm enjoys a global linear convergence rate and converges asymptotically superlinearly. Numerical results have further convincingly demonstrated the promising performance and high efficiency of our algorithm when compared with other state-of-the-art solvers. For instance, our algorithm can be up to $20$ times faster than glasso for some subsets of a breast cancer dataset with much less storage requirement.

We conclude by pointing out that our algorithm is not only designed for sparse precision matrix estimation but also can be extended to solve other matrix-form problems under a  penalized quadratic loss function. More specifically, our algorithm can be extended to solve problems of the following form:
\begin{equation}\label{ext}
	\mathop{\min}\limits_{\Omega \in \mathbb{S}^{p}}\left\{ \frac{1}{2} {\rm tr}(\Omega \widehat{\Sigma} \Omega^T) - {\rm tr}(\widehat{Q}\Omega) + \textup{pen}_\lambda(\Omega) \right\},
\end{equation}
where $\textup{pen}_\lambda(\Omega)$ is a penalty term to encourage different structures.
The algorithm we derived in this paper can be viewed as a special case with $\widehat{Q}=I_p$, and $\textup{pen}_\lambda(\Omega)=\lambda\left\|\Omega\right\|_{1,{\rm off}}$.
With different choices of $\widehat\Sigma$ and $\widehat{Q}$, the quadratic loss \eqref{ext} outputs sparse solutions for different statistical analysis such as canonical vectors for Fisher's LDA \citep{GAYNANOVA2016}, sparse canonical correlation analysis, and sparse sliced inverse regression \citep{Tan2018}.
This is left for future work.

\section*{Acknowledgments}

The authors would like to thank the action editor and the three anonymous referees for their constructive comments and suggestions to improve the quality of this paper.  Thanks also go to Dr. Yancheng Yuan at  The Hong Kong Polytechnic University  for his many helpful suggestions on the revision of this paper.

%\acks{}

\appendix
\section{Performance of EQUAL}
\label{app:stopEQUAL}

This is the test performance of EQUAL, in terms of the objective value and relative KKT residual ($\eta$), for generating solution paths on the prostrate datasets under different stopping tolerances, which are set to be $10^{-5}$ and $10^{-6}$ respectively.
% Table generated by Excel2LaTeX from sheet 'Sheet2'
\begin{table}[htbp]
	\centering
	\caption{Test performance of the estimated solution paths with different stopping tolerances by using EQUAL on the prostate data set.}
	\tiny
		\resizebox{\textwidth}{!}{
	\begin{tabular}{c|cc|cc|cc|cc}
		\hline
		\multirow{3}[5]{*}{$\lambda$} & \multicolumn{4}{c|}{control group} & \multicolumn{4}{c}{cancer group} \bigstrut\\
		\cline{2-9}          & \multicolumn{2}{c|}{EQUAL (1e-6)} & \multicolumn{2}{c|}{EQAUL (1e-5)} & \multicolumn{2}{c|}{EQUAL (1e-6)} & \multicolumn{2}{c}{EQAUL (1e-5)} \bigstrut\\
		\cline{2-9}          & objective value & $\eta$  & objective value & $\eta$  & objective value & $\eta$  & objective value & $\eta$ \bigstrut\\
		\hline
		0.99  & 924.68 & 5.05e-02 & 276912.70 & 5.16e-01 & -1964.80 & 3.40e-02 & 264907.50 & 5.03e-01 \\
		0.98  & -861.70 & 3.89e-01 & 21190.85 & 1.30e-01 & -1966.85 & 3.41e-02 & 263387.60 & 5.06e-01 \\
		0.97  & -1798.68 & 3.54e-02 & 20554.88 & 1.31e-01 & -2448.14 & 6.39e-02 & 18529.18 & 1.36e-01 \\
		0.96  & -1802.12 & 3.56e-02 & 9137.18 & 9.49e-02 & -2447.90 & 6.40e-02 & 7894.90 & 8.70e-02 \\
		0.95  & -2351.89 & 7.22e-02 & 9518.00 & 9.85e-02 & -2456.80 & 6.31e-02 & 8194.18 & 8.98e-02 \\
		0.94  & -2356.68 & 7.18e-02 & 9696.31 & 1.01e-01 & -2700.02 & 2.20e-02 & 8304.20 & 9.18e-02 \\
		0.93  & -2371.45 & 7.04e-02 & 9670.20 & 1.03e-01 & -2698.98 & 2.22e-02 & 8225.00 & 9.29e-02 \\
		0.92  & -2645.23 & 2.33e-02 & 9444.32 & 1.04e-01 & -2704.31 & 2.22e-02 & 7962.94 & 9.30e-02 \\
		0.91  & -2643.28 & 2.35e-02 & 9029.55 & 1.03e-01 & -2858.82 & 2.50e-02 & 7530.24 & 9.22e-02 \\
		0.90  & -2648.98 & 2.35e-02 & 8442.32 & 1.01e-01 & -2858.34 & 2.54e-02 & 6943.96 & 9.04e-02 \\
		0.89  & -2832.42 & 2.70e-02 & 2736.37 & 2.65e-01 & -2860.17 & 2.55e-02 & 2018.53 & 3.05e-01 \\
		0.88  & -2833.94 & 2.73e-02 & 2927.30 & 2.55e-01 & -2863.98 & 2.56e-02 & 2186.41 & 2.92e-01 \\
		0.87  & -2837.75 & 2.74e-02 & 3034.11 & 2.49e-01 & -2931.86 & 1.40e-02 & 2280.82 & 2.84e-01 \\
		0.86  & -2912.43 & 1.49e-02 & 3051.66 & 2.47e-01 & -2939.44 & 1.39e-02 & 2295.27 & 2.81e-01 \\
		0.85  & -2918.38 & 1.49e-02 & 2977.98 & 2.49e-01 & -2995.49 & 1.05e-02 & 2226.93 & 2.82e-01 \\
		0.84  & -2989.64 & 1.12e-02 & 2814.58 & 2.53e-01 & -3002.93 & 1.08e-02 & 2076.93 & 2.88e-01 \\
		0.83  & -2993.46 & 1.20e-02 & 2566.14 & 2.62e-01 & -3008.80 & 1.14e-02 & 1850.30 & 2.98e-01 \\
		0.82  & -2997.82 & 1.30e-02 & 2240.63 & 2.76e-01 & -3020.96 & 9.59e-03 & 1554.29 & 3.14e-01 \\
		0.81  & -3014.42 & 1.00e-02 & 1848.97 & 2.96e-01 & -3040.97 & 7.75e-03 & 1198.97 & 3.37e-01 \\
		0.80  & -3057.88 & 5.86e-03 & 1404.33 & 3.24e-01 & -3067.80 & 5.64e-03 & 796.57 & 3.67e-01 \\
		0.79  & -3069.02 & 6.06e-03 & -480.09 & 5.09e-02 & -3081.18 & 5.79e-03 & -942.61 & 4.70e-02 \\
		0.78  & -3066.00 & 1.06e-02 & -236.76 & 5.41e-02 & -3084.28 & 8.97e-03 & -714.84 & 4.97e-02 \\
		0.77  & -3085.58 & 7.64e-03 & -41.10 & 5.71e-02 & -3102.08 & 7.10e-03 & -528.46 & 5.22e-02 \\
		0.76  & -3106.15 & 5.13e-03 & 89.13 & 5.97e-02 & -3124.53 & 4.94e-03 & -401.82 & 5.43e-02 \\
		0.75  & -3121.64 & 4.35e-03 & 141.05 & 6.16e-02 & -3144.88 & 3.75e-03 & -345.45 & 5.59e-02 \\
		0.74  & -3139.48 & 4.26e-03 & 110.25 & 6.29e-02 & -3164.75 & 3.75e-03 & -363.44 & 5.68e-02 \\
		0.73  & -3144.13 & 8.05e-03 & -2.31 & 6.32e-02 & -3172.27 & 7.52e-03 & -455.55 & 5.70e-02 \\
		0.72  & -3162.36 & 6.12e-03 & -191.14 & 6.27e-02 & -3190.12 & 6.33e-03 & -616.71 & 5.65e-02 \\
		0.71  & -3188.84 & 4.08e-03 & -2316.43 & 5.47e-02 & -3220.13 & 3.79e-03 & -837.45 & 5.51e-02 \\
		0.70  & -3204.06 & 6.15e-03 & -2231.19 & 6.00e-02 & -3244.18 & 3.82e-03 & -2394.92 & 5.10e-02 \\
		0.69  & -3229.79 & 3.71e-03 & -2140.55 & 6.55e-02 & -3271.68 & 3.70e-03 & -2338.51 & 5.46e-02 \\
		0.68  & -3257.15 & 3.59e-03 & -2048.69 & 7.11e-02 & -3285.98 & 7.49e-03 & -2279.14 & 5.85e-02 \\
		0.67  & -3289.17 & 3.31e-03 & -1958.72 & 7.68e-02 & -3310.62 & 6.34e-03 & -2220.84 & 6.23e-02 \\
		0.66  & -3310.93 & 5.40e-03 & -1873.05 & 8.23e-02 & -3349.71 & 3.58e-03 & -2166.39 & 6.62e-02 \\
		0.65  & -3330.89 & 6.44e-03 & -1793.58 & 8.77e-02 & -3370.41 & 7.03e-03 & -2117.65 & 7.00e-02 \\
		0.64  & -3370.90 & 3.53e-03 & -1721.91 & 9.27e-02 & -3424.12 & 2.13e-03 & -2075.66 & 7.36e-02 \\
		0.63  & -3399.63 & 5.77e-03 & -1659.54 & 9.72e-02 & -3445.53 & 7.45e-03 & -2040.94 & 7.71e-02 \\
		0.62  & -3429.19 & 5.79e-03 & -1607.92 & 1.01e-01 & -3475.93 & 6.86e-03 & -2013.80 & 8.04e-02 \bigstrut[b]\\
		\hline
	\end{tabular}%
}
	\label{tab:comstop}%
\end{table}%

\section{Supplementary tables}\label{app:comperf}

{ This section is a supplement to the numerical experiments in Section \ref{sec:results}. %Specifically, one figure shows the computation times with cold-started among different algorithms for $10$ regularization parameters, and two tables lists the KKT residuals corresponding to the tests performed in Section \ref{sec:results}.
Specifically, two tables list the KKT residuals and one table provides the statistical performance of the generated solutions by MARS and glasso corresponding to the tests performed in Section \ref{sec:numexpcom}.
}

%\begin{figure}[htbp]
%	\centering
%	\includegraphics[width=1\textwidth]{}
%	\caption{Average computation time for $10$ regularization parameters   with $10$ replications under Models 1-5. The problem corresponding to each $\lambda$ is solved independently (cold-started). The horizontal axis is the $\lambda$ index, and the vertical axis is the average computation time for each $\lambda$ (the time of scio, EQUAL and QUIC is not included in the figure because the relative KKT residual does not reach $10^{-4}$).}
%	\label{fig:timesinglebig}
%\end{figure}

% Table generated by Excel2LaTeX from sheet 'Sheet8'
\begin{table}[htbp]
	\centering
	\caption{Average relative KKT residuals ($\eta$) of different algorithms with 10 regularization parameters and 10 replications.}
	\tiny
	\resizebox{\textwidth}{!}{
		\begin{tabular}{clccccc}
			\toprule
			&       & Model 1 & Model 2 & Model 3 & Model 4 & Model 5 \\
			&       & mean $|$ sd & mean $|$ sd & mean $|$ sd & mean $|$ sd & mean $|$ sd \\
			\midrule
			\multicolumn{7}{l}{Models 1 to 4: p $=$ 2000;   Model 5: p $=$ 2025} \\
			\midrule
			\multirow{8}[2]{*}{\begin{sideways}n $=$ 50\end{sideways}} & MARS  & 3.56E-05 $|$ 8.84E-06 & 3.85E-05 $|$ 6.73E-06 & 3.31E-05 $|$ 6.88E-06 & 3.73E-05 $|$ 6.66E-06 & 3.61E-05 $|$ 4.87E-06 \\
			& SSNAL & 6.87E-05 $|$ 7.16E-06 & 5.74E-05 $|$ 7.61E-06 & 6.79E-05 $|$ 6.13E-06 & 6.12E-05 $|$ 8.03E-06 & 6.38E-05 $|$ 7.44E-06 \\
			& iADMM & 9.12E-05 $|$ 3.29E-06 & 9.31E-05 $|$ 2.44E-06 & 9.05E-05 $|$ 3.67E-06 & 9.48E-05 $|$ 3.11E-06 & 9.08E-05 $|$ 2.33E-06 \\
			& eADMM & 9.48E-05 $|$ 3.42E-06 & 9.37E-05 $|$ 3.52E-06 & 9.54E-05 $|$ 2.58E-06 & 9.51E-05 $|$ 2.88E-06 & 9.37E-05 $|$ 3.80E-06 \\
			& scio  & 3.94E-03 $|$ 3.66E-04 & 3.50E-02 $|$ 3.20E-02 & 3.96E-03 $|$ 5.93E-04 & 2.79E-01 $|$ 6.61E-02 & 3.50E-03 $|$ 3.27E-04 \\
			& EQUAL & 1.58E-03 $|$ 4.06E-05 & 1.07E-03 $|$ 9.52E-05 & 1.58E-03 $|$ 3.92E-05 & 7.16E-04 $|$ 3.18E-05 & 1.60E-03 $|$ 2.55E-04 \\
			& glasso(1e-3) & 1.71E-05 $|$ 1.28E-06 & 3.70E-05 $|$ 5.87E-06 & 1.58E-05 $|$ 9.21E-07 & 3.59E-05 $|$ 5.39E-06 & 1.54E-05 $|$ 1.44E-06 \\
			& QUIC  & 6.53E-02 $|$ 2.56E-05 & 6.25E-02 $|$ 2.98E-05 & 6.54E-02 $|$ 2.48E-05 & 5.82E-02 $|$ 2.45E-05 & 6.62E-02 $|$ 2.90E-05 \\
			\midrule
			\multirow{8}[2]{*}{\begin{sideways}n $=$ 100\end{sideways}} & MARS  & 3.23E-05 $|$ 5.84E-06 & 3.64E-05 $|$ 4.13E-06 & 2.95E-05 $|$ 3.91E-06 & 4.36E-05 $|$ 7.38E-06 & 3.07E-05 $|$ 5.96E-06 \\
			& SSNAL & 5.36E-05 $|$ 7.29E-06 & 5.99E-05 $|$ 5.69E-06 & 6.04E-05 $|$ 7.03E-06 & 5.84E-05 $|$ 6.39E-06 & 6.42E-05 $|$ 9.51E-06 \\
			& iADMM & 8.51E-05 $|$ 3.81E-06 & 8.99E-05 $|$ 2.30E-06 & 9.00E-05 $|$ 3.38E-06 & 9.15E-05 $|$ 3.59E-06 & 8.78E-05 $|$ 4.70E-06 \\
			& eADMM & 8.02E-05 $|$ 6.43E-06 & 8.42E-05 $|$ 6.99E-06 & 8.14E-05 $|$ 4.46E-06 & 8.84E-05 $|$ 5.95E-06 & 8.37E-05 $|$ 6.41E-06 \\
			& scio  & 1.42E-03 $|$ 9.31E-05 & 3.56E-03 $|$ 1.87E-04 & 1.22E-03 $|$ 8.86E-05 & 9.92E-03 $|$ 3.79E-04 & 1.94E-03 $|$ 1.12E-04 \\
			& EQUAL & 2.11E-03 $|$ 5.80E-04 & 1.49E-03 $|$ 1.72E-04 & 2.61E-03 $|$ 4.12E-05 & 9.88E-04 $|$ 4.77E-05 & 1.44E-03 $|$ 3.83E-04 \\
			& glasso(1e-3) & 1.20E-05 $|$ 1.38E-06 & 1.47E-05 $|$ 5.38E-06 & 9.37E-06 $|$ 8.78E-07 & 1.50E-05 $|$ 5.23E-06 & 2.21E-05 $|$ 3.94E-06 \\
			& QUIC  & 6.53E-02 $|$ 3.54E-05 & 6.14E-02 $|$ 2.93E-05 & 6.54E-02 $|$ 3.69E-05 & 5.76E-02 $|$ 1.91E-05 & 6.66E-02 $|$ 2.87E-05 \\
			\bottomrule
		\end{tabular}%
	}
	\label{tab:kktsmall}%
\end{table}%

% Table generated by Excel2LaTeX from sheet 'Sheet7'
\begin{table}[htbp]
	\centering
	\caption{Average relative KKT residuals ($\eta$) of different algorithms with 50 regularization parameters and 10 replications for generating a solution path.}
	\tiny
	\resizebox{\textwidth}{!}{
	\begin{tabular}{clccccc}
		\hline
		&       & Model 1 & Model 2 & Model 3 & Model 4 & Model 5 \bigstrut[t]\\
		&       & mean $|$ sd & mean $|$ sd & mean $|$ sd & mean $|$ sd & mean $|$ sd \bigstrut[b]\\
		\hline
		\multicolumn{7}{l}{Models 1 to 4: p $=$ 2000;   Model 5: p $=$ 2025} \bigstrut\\
		\hline
		\multirow{10}[2]{*}{\begin{sideways}n $=$ 50\end{sideways}} & MARS(1e-4) & 9.75E-06 $|$ 4.72E-06 & 1.08E-05 $|$ 5.23E-06 & 1.02E-05 $|$ 5.49E-06 & 1.23E-05 $|$ 5.83E-06 & 1.15E-05 $|$ 6.31E-06 \bigstrut[t]\\
		& MARS(1e-8) & 1.59E-09 $|$ 2.47E-09 & 1.70E-09 $|$ 2.46E-09 & 1.13E-09 $|$ 1.82E-09 & 2.39E-09 $|$ 2.96E-09 & 1.23E-09 $|$ 1.72E-09 \\
		& SSNAL & 2.60E-05 $|$ 7.15E-06 & 2.72E-05 $|$ 5.50E-06 & 2.65E-05 $|$ 6.38E-06 & 3.00E-05 $|$ 6.30E-06 & 2.91E-05 $|$ 5.91E-06 \\
		& iADMM & 5.13E-05 $|$ 6.41E-06 & 5.06E-05 $|$ 5.91E-06 & 5.16E-05 $|$ 5.69E-06 & 5.56E-05 $|$ 5.54E-06 & 8.47E-05 $|$ 5.84E-06 \\
		& eADMM & 7.62E-05 $|$ 5.67E-06 & 7.33E-05 $|$ 5.06E-06 & 7.66E-05 $|$ 5.38E-06 & 7.88E-05 $|$ 4.26E-06 & 6.46E-05 $|$ 2.25E-05 \\
		& scio  & 7.05E-04 $|$ 4.70E-05 & 9.81E-03 $|$ 6.74E-03 & 6.87E-04 $|$ 6.92E-05 & 4.18E-02 $|$ 9.85E-03 & 9.57E-04 $|$ 1.67E-04 \\
		& EQUAL & 4.17E-03 $|$ 1.42E-04 & 4.10E-03 $|$ 7.55E-05 & 4.16E-03 $|$ 1.22E-04 & 4.05E-03 $|$ 8.07E-05 & 5.01E-03 $|$ 9.77E-05 \\
		& glasso(1e-3) & 3.36E-06 $|$ 1.75E-07 & 6.57E-06 $|$ 1.89E-06 & 3.21E-06 $|$ 2.70E-07 & 5.34E-06 $|$ 9.74E-07 & 4.47E-06 $|$ 2.70E-07 \\
		& glasso(1e-4) & 2.74E-09 $|$ 1.19E-09 & 1.12E-08 $|$ 1.79E-09 & 2.27E-09 $|$ 7.05E-10 & 1.88E-08 $|$ 2.98E-09 & 4.56E-09 $|$ 1.63E-09 \\
		& QUIC  & 1.01E-01 $|$ 6.16E-05 & 9.93E-02 $|$ 4.22E-05 & 1.01E-01 $|$ 4.85E-05 & 9.73E-02 $|$ 3.06E-05 & 1.00E-01 $|$ 4.05E-05 \bigstrut[b]\\
		\hline
		\multirow{10}[2]{*}{\begin{sideways}n $=$ 100\end{sideways}} & MARS(1e-4) & 7.76E-06 $|$ 3.14E-06 & 8.10E-06 $|$ 4.74E-06 & 7.12E-06 $|$ 3.57E-06 & 7.73E-06 $|$ 3.03E-06 & 8.78E-06 $|$ 3.69E-06 \bigstrut[t]\\
		& MARS(1e-8) & 1.13E-09 $|$ 1.75E-09 & 2.55E-09 $|$ 3.19E-09 & 1.38E-09 $|$ 2.11E-09 & 2.69E-09 $|$ 3.22E-09 & 1.55E-09 $|$ 2.58E-09 \\
		& SSNAL & 1.93E-05 $|$ 4.41E-06 & 1.80E-05 $|$ 6.76E-06 & 1.80E-05 $|$ 4.04E-06 & 1.82E-05 $|$ 3.95E-06 & 2.31E-05 $|$ 4.56E-06 \\
		& iADMM & 5.20E-05 $|$ 6.74E-06 & 4.86E-05 $|$ 5.97E-06 & 6.15E-05 $|$ 6.13E-06 & 6.16E-05 $|$ 5.25E-06 & 6.87E-05 $|$ 3.65E-06 \\
		& eADMM & 7.51E-05 $|$ 4.61E-06 & 7.89E-05 $|$ 6.68E-06 & 8.52E-05 $|$ 7.39E-06 & 8.13E-05 $|$ 5.90E-06 & 8.28E-05 $|$ 9.00E-06 \\
		& scio  & 2.14E-04 $|$ 1.60E-05 & 5.49E-04 $|$ 1.72E-05 & 1.91E-04 $|$ 7.85E-06 & 1.50E-03 $|$ 5.89E-05 & 3.05E-04 $|$ 1.28E-05 \\
		& EQUAL & 2.16E-03 $|$ 1.07E-04 & 2.13E-03 $|$ 8.15E-05 & 1.71E-03 $|$ 2.97E-04 & 1.91E-03 $|$ 3.55E-04 & 3.84E-03 $|$ 7.89E-05 \\
		& glasso(1e-3) & 1.69E-06 $|$ 8.35E-07 & 1.37E-06 $|$ 4.56E-07 & 1.34E-06 $|$ 8.73E-07 & 1.41E-06 $|$ 7.12E-07 & 3.21E-06 $|$ 1.51E-06 \\
		& glasso(1e-4) & 1.41E-09 $|$ 3.90E-10 & 9.10E-09 $|$ 1.04E-09 & 1.19E-09 $|$ 5.89E-10 & 7.09E-09 $|$ 6.08E-09 & 6.12E-09 $|$ 2.99E-09 \\
		& QUIC  & 1.25E-01 $|$ 8.06E-05 & 1.23E-01 $|$ 6.32E-05 & 1.26E-01 $|$ 4.94E-05 & 1.22E-01 $|$ 4.32E-05 & 1.25E-01 $|$ 6.87E-05 \bigstrut[b]\\
		\hline
		\multicolumn{7}{l}{Models 1 to 4: p $=$ 3000;   Model 5: p $=$ 3024} \bigstrut\\
		\hline
		\multirow{6}[2]{*}{\begin{sideways}n = 50\end{sideways}} & MARS(1e-4) & 1.17E-05 $|$ 6.69E-06 & 1.09E-05 $|$ 5.15E-06 & 1.10E-05 $|$ 5.14E-06 & 1.06E-05 $|$ 5.17E-06 & 1.16E-05 $|$ 6.28E-06 \bigstrut[t]\\
		& MARS(1e-8) & 8.22E-10 $|$ 1.15E-09 & 1.68E-09 $|$ 2.17E-09 & 1.04E-09 $|$ 1.35E-09 & 2.15E-09 $|$ 2.69E-09 & 1.12E-09 $|$ 1.70E-09 \\
		& SSNAL & 2.87E-05 $|$ 6.39E-06 & 2.74E-05 $|$ 5.60E-06 & 2.69E-05 $|$ 5.58E-06 & 2.72E-05 $|$ 4.88E-06 & 2.99E-05 $|$ 6.50E-06 \\
		& EQUAL & 1.73E-02 $|$ 1.71E-04 & 1.72E-02 $|$ 1.41E-04 & 1.74E-02 $|$ 1.77E-04 & 1.72E-02 $|$ 1.26E-04 & 7.76E-03 $|$ 7.05E-05 \\
		& glasso(1e-3) & 1.94E-06 $|$ 2.13E-07 & 7.27E-06 $|$ 3.20E-07 & 1.92E-06 $|$ 1.70E-07 & 6.66E-06 $|$ 1.99E-06 & 2.60E-06 $|$ 1.88E-07 \\
		& glasso(1e-4) & 1.10E-09 $|$ 4.73E-10 & 1.04E-08 $|$ 1.44E-09 & 1.09E-09 $|$ 4.02E-10 & 1.98E-08 $|$ 1.70E-09 & 1.75E-09 $|$ 5.53E-10 \bigstrut[b]\\
		\hline
		\multirow{6}[2]{*}{\begin{sideways}n = 100\end{sideways}} & MARS(1e-4) & 7.34E-06 $|$ 4.30E-06 & 8.18E-06 $|$ 3.32E-06 & 6.24E-06 $|$ 2.54E-06 & 8.00E-06 $|$ 3.79E-06 & 8.62E-06 $|$ 4.22E-06 \bigstrut[t]\\
		& MARS(1e-8) & 1.30E-09 $|$ 1.80E-09 & 2.09E-09 $|$ 2.20E-09 & 1.67E-09 $|$ 1.97E-09 & 2.51E-09 $|$ 2.65E-09 & 1.44E-09 $|$ 1.79E-09 \\
		& SSNAL & 1.86E-05 $|$ 5.24E-06 & 2.00E-05 $|$ 4.64E-06 & 1.64E-05 $|$ 3.74E-06 & 1.91E-05 $|$ 4.03E-06 & 2.30E-05 $|$ 5.13E-06 \\
		& EQUAL & 1.00E-02 $|$ 7.08E-05 & 9.99E-03 $|$ 1.45E-04 & 1.00E-02 $|$ 5.45E-05 & 1.00E-02 $|$ 5.95E-05 & 5.02E-03 $|$ 5.31E-05 \\
		& glasso(1e-3) & 1.83E-06 $|$ 1.44E-07 & 2.66E-06 $|$ 1.55E-07 & 1.62E-06 $|$ 1.04E-07 & 1.85E-06 $|$ 1.64E-07 & 3.44E-06 $|$ 3.85E-07 \\
		& glasso(1e-4) & 9.11E-10 $|$ 2.32E-10 & 6.45E-09 $|$ 9.30E-10 & 1.00E-09 $|$ 5.32E-10 & 4.19E-09 $|$ 8.91E-10 & 3.38E-09 $|$ 1.41E-09 \bigstrut[b]\\
		\hline
		\multicolumn{7}{l}{Models 1 to 4: p $=$ 5000;   Model 5: p $=$ 5041} \bigstrut\\
		\hline
		\multirow{3}[2]{*}{\begin{sideways}n = 50\end{sideways}} & MARS(1e-4) & 2.56E-05 $|$ 2.66E-05 & 2.62E-05 $|$ 2.73E-05 & 2.53E-05 $|$ 2.74E-05 & 2.41E-05 $|$ 2.45E-05 & 2.48E-05 $|$ 2.64E-05 \bigstrut[t]\\
		& MARS(1e-8) & 2.58E-09 $|$ 5.28E-09 & 3.14E-09 $|$ 5.73E-09 & 3.07E-09 $|$ 5.53E-09 & 4.51E-09 $|$ 6.81E-09 & 2.69E-09 $|$ 5.05E-09 \\
		& SSNAL & 6.35E-05 $|$ 2.34E-05 & 6.20E-05 $|$ 2.49E-05 & 6.18E-05 $|$ 2.35E-05 & 6.40E-05 $|$ 2.39E-05 & 6.54E-05 $|$ 2.44E-05 \bigstrut[b]\\
		\cline{2-7}    \multirow{3}[2]{*}{\begin{sideways}n = 100\end{sideways}} & MARS(1e-4) & 2.58E-05 $|$ 2.54E-05 & 2.77E-05 $|$ 2.84E-05 & 2.25E-05 $|$ 2.15E-05 & 2.35E-05 $|$ 2.25E-05 & 2.38E-05 $|$ 2.64E-05 \bigstrut[t]\\
		& MARS(1e-8) & 3.86E-09 $|$ 7.62E-09 & 5.54E-09 $|$ 9.14E-09 & 5.34E-09 $|$ 8.51E-09 & 6.39E-09 $|$ 1.04E-08 & 2.38E-05 $|$ 2.64E-05 \\
		& SSNAL & 5.90E-05 $|$ 2.38E-05 & 6.34E-05 $|$ 2.39E-05 & 6.04E-05 $|$ 2.26E-05 & 6.16E-05 $|$ 2.52E-05 & 2.38E-05 $|$ 2.64E-05 \bigstrut[b]\\
		\hline
	\end{tabular}%
	}
	\label{tab:pathkkt50}%
\end{table}%

% Table generated by Excel2LaTeX from sheet 'Sheet1'
\begin{table}[htbp]
	\centering
	\caption{Average performance of MARS and glasso for precision matrix estimation with 10 replications, p = 2000 in Models 1 to 4, p = 2025 in Model 5 and n = 100 for all the Models.}
	\scriptsize
	\resizebox{\textwidth}{!}{
		\begin{threeparttable}
			\centering
			\begin{tabular}{cccccccccccc}
				\toprule
				&      & \multicolumn{2}{c}{Frobenius} & \multicolumn{2}{c}{Spectral} & \multicolumn{2}{c}{Infinity} & \multicolumn{2}{c}{TP} & \multicolumn{2}{c}{TN} \\
				\cmidrule{3-12}          & $\lambda$       & MARS  & glasso & MARS  & glasso & MARS  & glasso & MARS  & glasso & MARS  & glasso \\
				\midrule
				\multirow{10}[2]{*}{\begin{sideways}Model 1\end{sideways}} & 0.39  & 17.8652  & 17.9603  & 0.8009  & 0.8273  & 1.0085  & 1.3293  & 0.2078  & 0.2735  & 0.9999  & 0.9976  \\
				& 0.38  & 17.8623  & 17.8964  & 0.8010  & 0.8217  & 1.0369  & 1.2258  & 0.2108  & 0.2601  & 0.9999  & 0.9983  \\
				& 0.37  & 17.8601  & 17.8616  & 0.8010  & 0.8175  & 1.0672  & 1.1529  & 0.2142  & 0.2485  & 0.9998  & 0.9988  \\
				& 0.36  & 17.8608  & 17.8455  & 0.8009  & 0.8138  & 1.1053  & 1.0927  & 0.2189  & 0.2388  & 0.9998  & 0.9992  \\
				& 0.35  & 17.8668  & 17.8404  & 0.8005  & 0.8107  & 1.1545  & 1.0463  & 0.2241  & 0.2306  & 0.9996  & 0.9994  \\
				& 0.34  & 17.8843  & 17.8418  & 0.7995  & 0.8083  & 1.2146  & 1.0111  & 0.2305  & 0.2244  & 0.9995  & 0.9996  \\
				& 0.33  & 17.9217  & 17.8468  & 0.7978  & 0.8064  & 1.2929  & 0.9811  & 0.2386  & 0.2191  & 0.9992  & 0.9998  \\
				& 0.32  & 17.9934  & 17.8527  & 0.7948  & 0.8049  & 1.3887  & 0.9562  & 0.2484  & 0.2143  & 0.9988  & 0.9998  \\
				& 0.31  & 18.1249  & 17.8585  & 0.7901  & 0.8039  & 1.5140  & 0.9355  & 0.2600  & 0.2109  & 0.9983  & 0.9999  \\
				& 0.30  & 18.3653  & 17.8636  & 0.7863  & 0.8030  & 1.6736  & 0.9189  & 0.2729  & 0.2079  & 0.9975  & 0.9999  \\
				\midrule
				\multirow{10}[2]{*}{\begin{sideways}Model 2\end{sideways}} & 0.37  & 25.2683  & 25.5038  & 1.5978  & 1.6069  & 1.8957  & 2.4224  & 0.1200  & 0.1787  & 0.9998  & 0.9952  \\
				& 0.36  & 25.2679  & 25.3890  & 1.5965  & 1.6059  & 1.9380  & 2.2680  & 0.1225  & 0.1674  & 0.9998  & 0.9966  \\
				& 0.35  & 25.2712  & 25.3194  & 1.5944  & 1.6049  & 1.9831  & 2.1440  & 0.1259  & 0.1573  & 0.9996  & 0.9976  \\
				& 0.34  & 25.2816  & 25.2801  & 1.5914  & 1.6041  & 2.0465  & 2.0522  & 0.1299  & 0.1483  & 0.9995  & 0.9983  \\
				& 0.33  & 25.3057  & 25.2597  & 1.5870  & 1.6035  & 2.1240  & 1.9759  & 0.1345  & 0.1407  & 0.9992  & 0.9988  \\
				& 0.32  & 25.3540  & 25.2510  & 1.5807  & 1.6028  & 2.2233  & 1.9155  & 0.1405  & 0.1346  & 0.9988  & 0.9992  \\
				& 0.31  & 25.4439  & 25.2497  & 1.5718  & 1.6022  & 2.3426  & 1.8659  & 0.1482  & 0.1300  & 0.9983  & 0.9995  \\
				& 0.30  & 25.6103  & 25.2527  & 1.5588  & 1.6018  & 2.4956  & 1.8288  & 0.1569  & 0.1260  & 0.9976  & 0.9996  \\
				& 0.29  & 25.9087  & 25.2573  & 1.5410  & 1.6014  & 2.7017  & 1.8008  & 0.1675  & 0.1225  & 0.9965  & 0.9998  \\
				& 0.28  & 26.4441  & 25.2624  & 1.5159  & 1.6011  & 2.9820  & 1.7759  & 0.1801  & 0.1201  & 0.9950  & 0.9998  \\
				\midrule
				\multirow{10}[2]{*}{\begin{sideways}Model 3\end{sideways}} & 0.39  & 17.8914  & 18.1370  & 0.8136  & 0.8689  & 1.0042  & 1.3512  & 0.2020  & 0.2286  & 0.9999  & 0.9976  \\
				& 0.38  & 17.8951  & 18.0416  & 0.8152  & 0.8585  & 1.0412  & 1.2528  & 0.2028  & 0.2227  & 0.9999  & 0.9983  \\
				& 0.37  & 17.9014  & 17.9802  & 0.8168  & 0.8501  & 1.0859  & 1.1779  & 0.2039  & 0.2174  & 0.9999  & 0.9988  \\
				& 0.36  & 17.9123  & 17.9416  & 0.8185  & 0.8430  & 1.1359  & 1.1219  & 0.2055  & 0.2138  & 0.9998  & 0.9992  \\
				& 0.35  & 17.9309  & 17.9182  & 0.8206  & 0.8372  & 1.1937  & 1.0725  & 0.2074  & 0.2103  & 0.9996  & 0.9995  \\
				& 0.34  & 17.9622  & 17.9041  & 0.8226  & 0.8322  & 1.2611  & 1.0296  & 0.2103  & 0.2074  & 0.9995  & 0.9996  \\
				& 0.33  & 18.0146  & 17.8959  & 0.8246  & 0.8282  & 1.3430  & 0.9951  & 0.2139  & 0.2055  & 0.9992  & 0.9998  \\
				& 0.32  & 18.1027  & 17.8914  & 0.8263  & 0.8249  & 1.4367  & 0.9653  & 0.2174  & 0.2039  & 0.9988  & 0.9999  \\
				& 0.31  & 18.2506  & 17.8890  & 0.8272  & 0.8222  & 1.5589  & 0.9383  & 0.2226  & 0.2028  & 0.9983  & 0.9999  \\
				& 0.30  & 18.5006  & 17.8879  & 0.8305  & 0.8197  & 1.7281  & 0.9165  & 0.2286  & 0.2020  & 0.9976  & 0.9999  \\
				\midrule
				\multirow{10}[2]{*}{\begin{sideways}Model 4\end{sideways}} & 0.35  & 12.8997  & 14.4898  & 0.5041  & 0.6159  & 0.8595  & 0.9917  & 0.0017  & 0.0106  & 0.9997  & 0.9912  \\
				& 0.34  & 12.9286  & 13.9230  & 0.5066  & 0.5903  & 0.9222  & 1.0178  & 0.0019  & 0.0083  & 0.9995  & 0.9935  \\
				& 0.33  & 12.9863  & 13.5258  & 0.5201  & 0.5719  & 1.0010  & 1.0512  & 0.0022  & 0.0064  & 0.9992  & 0.9952  \\
				& 0.32  & 13.0933  & 13.2578  & 0.5495  & 0.5570  & 1.0949  & 1.0881  & 0.0026  & 0.0050  & 0.9988  & 0.9966  \\
				& 0.31  & 13.2840  & 13.0846  & 0.5895  & 0.5449  & 1.2165  & 1.1292  & 0.0032  & 0.0040  & 0.9983  & 0.9976  \\
				& 0.30  & 13.6198  & 12.9775  & 0.6428  & 0.5353  & 1.3899  & 1.1789  & 0.0040  & 0.0032  & 0.9976  & 0.9983  \\
				& 0.29  & 14.1996  & 12.9152  & 0.7129  & 0.5277  & 1.6303  & 1.2332  & 0.0051  & 0.0026  & 0.9964  & 0.9988  \\
				& 0.28  & 15.1780  & 12.8824  & 0.8060  & 0.5218  & 1.9698  & 1.3015  & 0.0067  & 0.0022  & 0.9950  & 0.9992  \\
				& 0.27  & 16.8252  & 12.8679  & 0.9323  & 0.5171  & 2.4021  & 1.3992  & 0.0088  & 0.0019  & 0.9930  & 0.9995  \\
				& 0.26  & 19.5567  & 12.8643  & 1.1073  & 0.5135  & 2.9720  & 1.5332  & 0.0117  & 0.0017  & 0.9902  & 0.9996  \\
				\midrule
				\multirow{10}[2]{*}{\begin{sideways}Model 5\end{sideways}} & 0.40  & 17.6002  & 16.7998  & 0.7941  & 0.7680  & 1.8083  & 1.2191  & 0.2586  & 0.4523  & 1.0000  & 0.9982  \\
				& 0.39  & 17.5471  & 16.8997  & 0.7932  & 0.7689  & 1.5643  & 1.1496  & 0.2712  & 0.4226  & 0.9999  & 0.9988  \\
				& 0.38  & 17.4861  & 17.0110  & 0.7924  & 0.7711  & 1.3577  & 1.0987  & 0.2870  & 0.3954  & 0.9999  & 0.9992  \\
				& 0.37  & 17.4176  & 17.1229  & 0.7919  & 0.7739  & 1.1903  & 1.0605  & 0.3044  & 0.3686  & 0.9998  & 0.9994  \\
				& 0.36  & 17.3458  & 17.2298  & 0.7922  & 0.7771  & 1.0519  & 1.0304  & 0.3217  & 0.3460  & 0.9998  & 0.9996  \\
				& 0.35  & 17.2755  & 17.3286  & 0.7936  & 0.7804  & 0.9448  & 1.0010  & 0.3424  & 0.3242  & 0.9996  & 0.9998  \\
				& 0.34  & 17.2150  & 17.4170  & 0.7969  & 0.7835  & 0.8668  & 0.9723  & 0.3652  & 0.3059  & 0.9994  & 0.9998  \\
				& 0.33  & 17.1790  & 17.4945  & 0.8038  & 0.7864  & 0.8042  & 0.9476  & 0.3906  & 0.2883  & 0.9992  & 0.9999  \\
				& 0.32  & 17.1918  & 17.5599  & 0.8165  & 0.7889  & 0.7541  & 0.9244  & 0.4162  & 0.2722  & 0.9988  & 0.9999  \\
				& 0.31  & 17.2962  & 17.6145  & 0.8374  & 0.7910  & 0.7137  & 0.9056  & 0.4435  & 0.2594  & 0.9982  & 1.0000  \\
				\bottomrule
			\end{tabular}%
			\begin{tablenotes}
				{
					\item[] The regularization parameters are the same as in the experiments in Section \ref{sec:numexpcom}.
				}
			\end{tablenotes}
		\end{threeparttable}
	}
	\label{tab:errors}%
\end{table}%

\vskip 0.2in
\bibliography{MARS_revised}

\begin{thebibliography}{40}
\providecommand{\natexlab}[1]{#1}
\providecommand{\url}[1]{\texttt{#1}}
\expandafter\ifx\csname urlstyle\endcsname\relax
  \providecommand{\doi}[1]{doi: #1}\else
  \providecommand{\doi}{doi: \begingroup \urlstyle{rm}\Url}\fi

\bibitem[Banerjee et~al.(2008)Banerjee, El~Ghaoui, and
  d'Aspremont]{Banerjee2008}
Onureena Banerjee, Laurent El~Ghaoui, and Alexandre d'Aspremont.
\newblock Model selection through sparse maximum likelihood estimation for
  multivariate gaussian or binary data.
\newblock \emph{Journal of Machine Learning Research}, 9:\penalty0 485--516,
  2008.

\bibitem[Cai et~al.(2011)Cai, Liu, and Luo]{Cai2011}
Tony Cai, Weidong Liu, and Xi~Luo.
\newblock A constrained $\ell_1$ minimization approach to sparse precision
  matrix estimation.
\newblock \emph{Journal of the American Statistical Association}, 106\penalty0
  (494):\penalty0 594--607, 2011.

\bibitem[Clarke(1983)]{Clarke1990}
Frank~H. Clarke.
\newblock \emph{Optimization and Nonsmooth Analysis}.
\newblock John Wiley and Sons, 1983.

\bibitem[Du et~al.(2020)Du, Du, Huang, Wang, and He]{Du2020}
Changde Du, Changying Du, Lijie Huang, Haibao Wang, and Huiguang He.
\newblock Structured neural decoding with multitask transfer learning of deep
  neural network representations.
\newblock \emph{IEEE Transactions on Neural Networks and Learning Systems},
  2020.

\bibitem[Du(2015)]{Du2015}
Mengyu Du.
\newblock \emph{An inexact alternating direction method of multipliers for
  convex composite conic programming with nonlinear constraints}.
\newblock PhD thesis, Department of Mathematics, National University of
  Singapore, Singapore, 2015.

\bibitem[Fischer(1997)]{fischer1997solution}
Andreas Fischer.
\newblock Solution of monotone complementarity problems with locally
  lipschitzian functions.
\newblock \emph{Mathematical Programming}, 76\penalty0 (3):\penalty0 513--532,
  1997.

\bibitem[Friedman et~al.(2008)Friedman, Hastie, and Tibshirani]{Friedman2008}
Jerome Friedman, Trevor Hastie, and Robert Tibshirani.
\newblock Sparse inverse covariance estimation with the graphical lasso.
\newblock \emph{Biostatistics}, 9\penalty0 (3):\penalty0 432--441, 2008.

\bibitem[Gaynanova et~al.(2016)Gaynanova, Booth, and Wells]{GAYNANOVA2016}
Irina Gaynanova, James~G. Booth, and Martin~T. Wells.
\newblock Simultaneous sparse estimation of canonical vectors in the $p \gg {
  N}$ setting.
\newblock \emph{Journal of the American Statistical Association}, 111\penalty0
  (514):\penalty0 696--706, 2016.

\bibitem[G{\"u}ler(1991)]{guler1991convergence}
Osman G{\"u}ler.
\newblock On the convergence of the proximal point algorithm for convex
  minimization.
\newblock \emph{SIAM Journal on Control and Optimization}, 29\penalty0
  (2):\penalty0 403--419, 1991.

\bibitem[Hess et~al.(2006)Hess, Anderson, Symmans, Valero, Ibrahim, Mejia,
  Booser, Theriault, Buzdar, Dempsey, et~al.]{Hess2006}
Kenneth~R. Hess, Keith Anderson, W.~Fraser Symmans, Vicente Valero, Nuhad
  Ibrahim, Jaime~A. Mejia, Daniel Booser, Richard~L. Theriault, Aman~U. Buzdar,
  Peter~J. Dempsey, et~al.
\newblock Pharmacogenomic predictor of sensitivity to preoperative chemotherapy
  with paclitaxel and fluorouracil, doxorubicin, and cyclophosphamide in breast
  cancer.
\newblock \emph{Journal of Clinical Oncology}, 24\penalty0 (26):\penalty0
  4236--4244, 2006.

\bibitem[Hsieh et~al.(2014)Hsieh, Sustik, Dhillon, and Ravikumar]{Hsieh2014}
Cho-Jui Hsieh, M{\'a}ty{\'a}s~A Sustik, Inderjit~S. Dhillon, and Pradeep
  Ravikumar.
\newblock Quic: quadratic approximation for sparse inverse covariance
  estimation.
\newblock \emph{Journal of Machine Learning Research}, 15\penalty0
  (1):\penalty0 2911--2947, 2014.

\bibitem[Lauritzen(1996)]{Lauritzen1996}
Steffen~L. Lauritzen.
\newblock \emph{Graphical Models}, volume~17.
\newblock Clarendon Press, 1996.

\bibitem[Lemar{\'e}chal and Sagastiz{\'a}bal(1997)]{Lemarechal1997}
Claude Lemar{\'e}chal and Claudia Sagastiz{\'a}bal.
\newblock Practical aspects of the moreau--yosida regularization: Theoretical
  preliminaries.
\newblock \emph{SIAM Journal on Optimization}, 7\penalty0 (2):\penalty0
  367--385, 1997.

\bibitem[Li and Gui(2006)]{LiH2006}
Hongzhe Li and Jiang Gui.
\newblock Gradient directed regularization for sparse gaussian concentration
  graphs, with applications to inference of genetic networks.
\newblock \emph{Biostatistics}, 7\penalty0 (2):\penalty0 302--317, 2006.

\bibitem[Li(2009)]{LiSZ2009}
Stan~Z. Li.
\newblock \emph{Markov Random Field Modeling in Image Analysis}.
\newblock Springer Science \& Business Media, 2009.

\bibitem[Li et~al.(2018)Li, Sun, and Toh]{Li2018}
Xudong Li, Defeng Sun, and Kim-Chuan Toh.
\newblock A highly efficient semismooth newton augmented lagrangian method for
  solving lasso problems.
\newblock \emph{SIAM Journal on Optimization}, 28\penalty0 (1):\penalty0
  433--458, 2018.

\bibitem[Li et~al.(2019)Li, Hu, Zhang, Yu, and Zhang]{Li2019respre}
Yang Li, Jun Hu, Chengxin Zhang, Dong-Jun Yu, and Yang Zhang.
\newblock Respre: high-accuracy protein contact prediction by coupling
  precision matrix with deep residual neural networks.
\newblock \emph{Bioinformatics}, 35\penalty0 (22):\penalty0 4647--4655, 2019.

\bibitem[Lin et~al.(2020)Lin, Yuan, Sun, and Toh]{Lin2020}
Meixia Lin, Yancheng Yuan, Defeng Sun, and Kim-Chuan Toh.
\newblock Adaptive sieving with ppdna: Generating solution paths of exclusive
  lasso models.
\newblock \emph{arXiv preprint arXiv:2009.08719}, 2020.

\bibitem[Liu and Luo(2015)]{Liu2015}
Weidong Liu and Xi~Luo.
\newblock Fast and adaptive sparse precision matrix estimation in high
  dimensions.
\newblock \emph{Journal of Multivariate Analysis}, 135:\penalty0 153--162,
  2015.

\bibitem[Meinshausen(2007)]{Meinshausen2007}
Nicolai Meinshausen.
\newblock Relaxed lasso.
\newblock \emph{Computational Statistics $\&$ Data Analysis}, 52\penalty0
  (1):\penalty0 374--393, 2007.

\bibitem[Meinshausen and B{\"u}hlmann(2006)]{Meinshausen2006}
Nicolai Meinshausen and Peter B{\"u}hlmann.
\newblock High-dimensional graphs and variable selection with the lasso.
\newblock \emph{Annals of Statistics}, 34\penalty0 (3):\penalty0 1436--1462,
  2006.

\bibitem[Moreau(1962)]{moreau1962fonctions}
Jean~J. Moreau.
\newblock Fonctions convexes duales et points proximaux dans un espace
  hilbertien.
\newblock \emph{Comptes rendus hebdomadaires des s{\'e}ances de l'Acad{\'e}mie
  des sciences}, 255:\penalty0 2897--2899, 1962.

\bibitem[Oztoprak et~al.(2012)Oztoprak, Nocedal, Rennie, and
  Olsen]{Oztoprak2012}
Figen Oztoprak, Jorge Nocedal, Steven Rennie, and Peder~A. Olsen.
\newblock Newton-like methods for sparse inverse covariance estimation.
\newblock \emph{Advances in Neural Information Processing Systems},
  25:\penalty0 755--763, 2012.

\bibitem[Ravikumar et~al.(2011)Ravikumar, Wainwright, Raskutti, and
  Yu]{Ravikumar2011}
Pradeep Ravikumar, Martin~J. Wainwright, Garvesh Raskutti, and Bin Yu.
\newblock High-dimensional covariance estimation by minimizing
  $\ell_1$-penalized log-determinant divergence.
\newblock \emph{Electronic Journal of Statistics}, 5:\penalty0 935--980, 2011.

\bibitem[Robinson(1981)]{Robinson1981}
Stephen~M. Robinson.
\newblock Some continuity properties of polyhedral multifunctions.
\newblock In \emph{Mathematical Programming at Oberwolfach}, pages 206--214.
  Springer, 1981.

\bibitem[Rockafellar(1970)]{Rockafellar1970}
R.~Tyrrell Rockafellar.
\newblock \emph{Convex Analysis}.
\newblock Princeton University Press, 1970.

\bibitem[Rockafellar(1976)]{Rockafellar1976}
R.~Tyrrell Rockafellar.
\newblock Augmented lagrangians and applications of the proximal point
  algorithm in convex programming.
\newblock \emph{Mathematics of Operations Research}, 1\penalty0 (2):\penalty0
  97--116, 1976.

\bibitem[Rockafellar and Wets(2009)]{rockafellar2009variational}
R.~Tyrrell Rockafellar and Roger J-B Wets.
\newblock \emph{Variational Analysis}, volume 317.
\newblock Springer Science \& Business Media, 2009.

\bibitem[Scheinberg et~al.(2010)Scheinberg, Ma, and Goldfarb]{Scheinberg2010}
Katya Scheinberg, Shiqian Ma, and Donald Goldfarb.
\newblock Sparse inverse covariance selection via alternating linearization
  methods.
\newblock \emph{Advances in Neural Information Processing Systems}, 23, 2010.

\bibitem[Shewchuk(1994)]{Shewchuk1994introduction}
Jonathan~Richard Shewchuk.
\newblock An introduction to the conjugate gradient method without the
  agonizing pain, 1994.

\bibitem[Sun(1986)]{Sun1986}
Jie Sun.
\newblock \emph{On monotropic piecewise quadratic programming}.
\newblock PhD thesis, Department of Mathematics, University of Washington,
  1986.

\bibitem[Tan et~al.(2018)Tan, Wang, Liu, and Zhang]{Tan2018}
Kean~Ming Tan, Zhaoran Wang, Han Liu, and Tong Zhang.
\newblock Sparse generalized eigenvalue problem: Optimal statistical rates via
  truncated rayleigh flow.
\newblock \emph{Journal of the Royal Statistical Society: Series B (Statistical
  Methodology)}, 80\penalty0 (5):\penalty0 1057--1086, 2018.

\bibitem[Wang and Jiang(2020)]{Wang2020}
Cheng Wang and Binyan Jiang.
\newblock An efficient admm algorithm for high dimensional precision matrix
  estimation via penalized quadratic loss.
\newblock \emph{Computational Statistics $\&$ Data Analysis}, 142:\penalty0
  106812, 2020.

\bibitem[Wille et~al.(2004)Wille, Zimmermann, Vranov{\'a}, F{\"u}rholz, Laule,
  Bleuler, Hennig, Preli{\'c}, von Rohr, Thiele, et~al.]{Wille2004}
Anja Wille, Philip Zimmermann, Eva Vranov{\'a}, Andreas F{\"u}rholz, Oliver
  Laule, Stefan Bleuler, Lars Hennig, Amela Preli{\'c}, Peter von Rohr, Lothar
  Thiele, et~al.
\newblock Sparse graphical gaussian modeling of the isoprenoid gene network in
  arabidopsis thaliana.
\newblock \emph{Genome Biology}, 5\penalty0 (11):\penalty0 1--13, 2004.

\bibitem[Witten et~al.(2011)Witten, Friedman, and Simon]{Witten2011}
Daniela~M. Witten, Jerome~H. Friedman, and Noah Simon.
\newblock New insights and faster computations for the graphical lasso.
\newblock \emph{Journal of Computational and Graphical Statistics}, 20\penalty0
  (4):\penalty0 892--900, 2011.

\bibitem[Yuan(2010)]{Yuan2010}
Ming Yuan.
\newblock High dimensional inverse covariance matrix estimation via linear
  programming.
\newblock \emph{Journal of Machine Learning Research}, 11:\penalty0 2261--2286,
  2010.

\bibitem[Yuan and Lin(2007)]{Yuan2007}
Ming Yuan and Yi~Lin.
\newblock Model selection and estimation in the gaussian graphical model.
\newblock \emph{Biometrika}, 94\penalty0 (1):\penalty0 19--35, 2007.

\bibitem[Zhang and Zou(2014)]{Zhang2014}
Teng Zhang and Hui Zou.
\newblock Sparse precision matrix estimation via lasso penalized d-trace loss.
\newblock \emph{Biometrika}, 101\penalty0 (1):\penalty0 103--120, 2014.

\bibitem[Zhang et~al.(2020)Zhang, Zhang, Sun, and Toh]{zhang2020efficient}
Yangjing Zhang, Ning Zhang, Defeng Sun, and Kim-Chuan Toh.
\newblock An efficient hessian based algorithm for solving large-scale sparse
  group lasso problems.
\newblock \emph{Mathematical Programming}, 179\penalty0 (1):\penalty0 223--263,
  2020.

\bibitem[Zhao et~al.(2010)Zhao, Sun, and Toh]{Zhao2010}
Xinyuan Zhao, Defeng Sun, and Kim-Chuan Toh.
\newblock A newton-cg augmented lagrangian method for semidefinite programming.
\newblock \emph{SIAM Journal on Optimization}, 20\penalty0 (4):\penalty0
  1737--1765, 2010.

\end{thebibliography}

\end{document}